    \let\stdchapter\section
    \renewcommand*\section{%
    \@ifstar{\starchapter}{\@dblarg\nostarchapter}}
    \newcommand*\starchapter[1]{%
        \stdchapter*{#1}
        \thispagestyle{fancy}
        \markboth{\MakeUppercase{#1}}{}
    }
    \def\nostarchapter[#1]#2{%
        \stdchapter[{#1}]{#2}
        \thispagestyle{fancy}
    }
\newtheorem{theorem}{Theorem}[section]
\newtheorem*{theorem*}{Theorem}
\newtheorem{proposition}[theorem]{Proposition}
\theoremstyle{definition}
\newtheorem{definition}[theorem]{Definition}
\theoremstyle{corollary}
\theoremstyle{remark}
\newtheorem{remark}[theorem]{Remark}
\theoremstyle{conclusion}
\title{\bf Algebraic approach and exact solutions of superintegrable systems in 2D Darboux spaces}
\author{\large Ian Marquette \footnote{i.marquette@uq.edu.au}, 
Junze Zhang \footnote{junze.zhang@uqconnect.edu.au} and Yao-Zhong Zhang \footnote{yzz@maths.uq.edu.au} }
\affil{School of Mathematics and Physics, The University of Queensland \\ Brisbane, QLD 4072, Australia}
\begin{document}

\maketitle
\begin{abstract}
\noindent Superintegrable systems in 2D Darboux spaces were classified and it was found that there exist 12 distinct classes of superintegrable systems with quadratic integrals of motion (and quadratic symmetry algebras generated by the integrals) in the Darboux spaces. In this paper, we obtain exact solutions via purely algebraic means for the energies of all the 12 existing classes of superintegrable systems in four different 2D Darboux spaces. This is achieved by constructing the deformed oscillator realization and finite-dimensional irreducible representation of the underlying quadratic symmetry algebra generated by quadratic integrals respectively for each of the 12 superintegrable systems. We also introduce generic cubic and quintic algebras, generated respectively by linear and quadratic integrals and linear and cubic integrals, and obtain their Casimir operators and deformed oscillator realizations. As examples of applications, we present three classes of new superintegrable systems with cubic symmetry algebras in 2D Darboux spaces. 
\end{abstract}

\section{Introduction}

Superintegrable systems of different orders have been attracting a large amount of international research activities, see e.g. $\cite{MR1814439}$, $ \cite{MR2337668},$ $\cite{MR3702572}$, $\cite{MR1939624}$, $ \cite{MR2492581}$, $\cite{MR2385271}$ ,  $\cite{MR3119484}$, $\cite{MR1296410} $ and $\cite{MR2105429}$.
This paper is a contribution to the underlying algebraic structures and exact solutions of superintegrable systems in $2$-dimensional (2D) curved spaces. 

Superintegrable systems in 2D spaces with constant or non-constant curvatures have been widely studied by means of separation of variables and St{\"a}ckel transforms $\cite{MR2226333},$ $\cite{MR3988021},$ $\cite{MR2143019}$, $\cite{MR2143027}$, $\cite{MR2023556} $ and $\cite{MR1878980} $. 
The St{\"a}ckel transforms have been widely studied $\cite{MR846389}$, $\cite{MR2023556}$, $\cite{MR2481484}$ provide useful tools in the classification of 2D superintegrable systems. Through the method of the so-called coupling constant metamorphosis,  St{\"a}ckel transforms $\cite{MR761899}$ enable one to establish the relationship between different superintegrable systems: they provide equivalence classes at the level of integrable and superintegrable Hamiltonians. However, even if such Hamiltonians are connected via the St{\"a}ckel transformations, they are distinct as Sturm-Liouville and spectral problem, and their exact solvability (with possibly different boundary conditions) and algebraic solutions need to be investigated separately. For a given superintegrable Hamiltonian which is separable in various coordinates, its solvability would in general depend on the coordinates used in the separation of variables, e.g. it is exactly solvable in one coordinate system but only quasi-exactly solvable in another coordinate system. 

It is well known that symmetry algebra structures play an important role in the analytic analysis of physical systems. In the context of superintegrable models, the underlying symmetry algebra structures are usually polynomial algebras such as quadratic and cubic algebras.
In $\cite{daskaloyannis2000finite},$ $\cite{MR2337668}$, $\cite{MR2226333}$, rank-1 quadratic algebra structures underlying certain 2D superintegrable systems, generated by integrals of motion of the systems, were exploited.
The authors in these references obtained the Casimir operator and deformed oscillator algebra realization of a generic quadratic algebra, and applied the relates to study the energy spectrum of the superintegrable systems.   In $\cite{MR2023556}$ and $\cite{MR3988021} $, examples of rank-1 cubic and quintic algebras in Darboux spaces were given. Higher order or higher rank polynomial algebras generated by integrals and their deformed oscillator algebra realizations were studied in $\cite{MR2492581},\cite{MR2566882}$, $\cite{MR3114205}$, $\cite{MR3331229}$, $\cite{MR3417994}$, $ \cite{MR3465329}$ $, \cite{MR3549641}$, $ \cite{MR3780676}$, $ \cite{MR3797912}$ and $\cite{MR4318599}$. More recently, by extending the Wigner-In{\"o}n{\"u} method of Lie algebra contraction, the authors in $\cite{MR1738603}$, $\cite{MR3116192}$ showed that quadratic algebras from certain second-order superintegrable systems in 2D spaces are contractions of those with general $3$-parameter potentials on $S^2.$ 

Superintegrable systems in 2D $\textit{Darboux spaces}$ were classified in $\cite{MR2023556}$$\cite{MR1878980}$. In 2 dimensions, there exist $4$ possible Darboux spaces with metrics given by  $\cite{MR0396213}$
 \begin{align*}
 {\rm I.} & \qquad d_1(x,y)  = (x+y)\,dx dy \\
 {\rm II.} & \qquad d_2(x,y)   = \left( \frac{\Omega}{(x-y)^2} + \Lambda\right)\, dxdy \\
 {\rm III.} & \qquad d_3(x,y)  = \left(\Omega\exp(-\frac{x+y}{2}) + \Lambda \exp(-x-y)\right)\,dxdy  \\
{\rm IV.}  & \qquad d_4(x,y)   = \frac{\Omega\left(\exp(\frac{x-y}{2}) + \exp(\frac{y-x}{2})\right) + \Lambda}{\exp(\frac{x-y}{2} + \exp(\frac{y-x}{2}))^2}\, dxdy
\end{align*} 
Here $\Omega,\Lambda \in \mathbb{R}$ are constants. According to the classification in $\cite{MR2023556}$$\cite{MR1878980}$, there exist 12 distinct classes of superintegrable systems with non-trivial potentials in the 2D Darboux spaces. In each case, quadratic integrals of motion of the system were determined and were found to form a quadratic algebra. The wave functions and energy spectra of the systems were obtained by means of separation of variables.  Superintegrable systems in Darboux spaces were also studied in $\cite{MR3988021}$ $\cite{MR4088503}$ $\cite{MR4258337}$. It was shown there that free superintegrable systems (i.e. systems without potentials) in 2D and 3D flat conformal spaces are equivalent to systems in 2D and 3D Darboux spaces, respectively. However, as far as we know, finite dimensional representations of the polynomial algebras and algebraic derivations of the energy spectrum of the superintegrable systems have remained an open problem.

In this paper we present a genuine algebraic approach to superintegrable systems in the 2D Darboux spaces.  The purpose of this paper is twofold. One is to give algebraic solutions to the existing 12 distinct classes of superintegrable systems in the four 2D Darboux spaces. This is achieved by constructing the finite dimensional irreducible representation of  the quadratic algebras underlying the 12 superintegrable systems via the deformed oscillator algebra techniques in $\cite{MR1814439}$ and $\cite{MR2492581}$. As one will see, energy spectrum for superintegrable systems in Darboux spaces are often determined by very complicated algebraic equations whose analytic and closed-form solutions can only be obtained by restricting the model parameter spaces. The second purpose is to investigate superintegrable systems in 2D Darboux spaces with linear, quadratic or cubic integrals of motion. It was found in $\cite{MR3988021}$ that the free systems with linear and quadratic integrals in 2D Darboux spaces have cubic algebras as their underlying symmetry algebras. We will introduce generic cubic and quintic algebras, generated by linear and quadratic integrals and linear and cubic integrals, respectively, and construct their Casimir operators and deformed oscillator algebra realizations. We also present three classes of new superintegrable systems with non-trivial potentials in 2D Darboux spaces which have cubic algebras as their symmetry algebras. These superintegrable systems do not seem to belong to the families classified in $\cite{MR2023556}$$\cite{MR1878980}$ for systems with quadratic integrals in 2D Darboux spaces.
 
This paper is organised as follows. In Section $\ref{3}$, we obtain the Casimir operators, the deformed oscillator algebra realizations and finite-dimensional irreducible representations for the quadratic algebras generated by the quadratic integrals of motion of the 12 superintegrable systems in 2D Darboux spaces. This enables us to give an algebraic derivation for the energy spectra of all the 12 classes of superintegrable systems. In Section 3, we introduce generic cubic and quintic algebras generated by linear and higher order integrals of motion. We construct their Casimir operators and deformed oscillator algebra realizations. We also present examples of new superintegrable systems with linear and quadratic integrals in the 2D Darboux spaces. In Section 4, we provide a summary of the main results of our work.

\section{Solutions of the 12 distinct classes of superintegrable systems in 2D Darboux spaces}
\label{3}

Consider a superintegrable system in a 2D Darboux space with coordinates $(x,y)$ and metric $g_{ij}(x,y)$. The Hamiltonian of the system with potential $V(x,y)$ is given by
$$ \hat{\mathcal{H}}   = \sum_{j,k=1}^2 \frac{1}{\sqrt{\det (g_{jk})}} \dfrac{\partial }{\partial x_k} \left( \sqrt{\det (g_{jk})} g_{jk} \dfrac{\partial  }{ \partial x_k} \right)  + V(x,y)  .$$ 
Let $\hat{X}$ be an $\textit{integral of motion}$ (aka, $\textit{constant of motion}$) of the system which commute with the Hamiltonian, i.e. $[\hat{X},\hat{\mathcal{H}}] = 0.$ An integral of motion is said to be a $\textit{polynomial in momenta of degree}$ $p$, denoted by $\deg \hat{X} = p$, if it has the form  \begin{align*}
    \hat{X} = \sum_{j=0}^p r_j(x,y)\, \partial_x^{p-j}\,\partial_y^j +s(x,y),
\end{align*} 
where $r_j(x,y),\, s(x,y)$ are smooth functions in the coordinates $x,y$. In particular, integrals of motion of degree 1, 2 or 3 are usually called linear, quadratic or cubic integrals, respectively. Note that the Hamiltonian has degree 2, i.e. $\deg \hat{\mathcal{H}}=2$.

As mentioned in the Introduction, superintegrable systems in the four 2D Darboux spaces with quadratic integrals of motion were classified in $\cite{MR2023556}$$\cite{MR1878980}$, and 12 distinct classes of potentials which preserve superintegrability were found. In this section, we present algebraic solutions to all the 12 existing superintegrable systems.

Note that in the following we will use the so-called separable coordinates in $\cite{MR2023556}$$\cite{MR1878980}$ for each case. As indicated in $\cite{MR2023556}$$\cite{MR1878980}$, in such coordinates the parameters $\Omega, \Lambda$ in the metrics of the Darboux spaces can be conveniently absorbed into the model parameters of the systems by redefinition so that they do not appear explicitly in the expressions of Hamiltonians and integrals. 

\subsection{Darboux Space I}
According to the classification in $\cite{MR2023556}$$\cite{MR1878980}$, in the Darboux space I, there are two possible superintegrable systems with potentials given by 
$$V_1(x,y) = \frac{b_1(4 x^2 + y^2)}{4x} + \frac{b_2}{x} + \frac{b_3}{xy^2},$$
$$V_2(x,y) = \frac{a_1}{x} + \frac{a_2 y}{x} + \frac{a_3(x^2 + y^2)}{x}, $$
respectively, where $b_i, a_i$ are real constants.

\subsubsection{Potential $V_1 (x,y)$}
 For superintegrable system in Darboux space I with the Hamiltonian $ \hat{\mathcal{H}} = \frac{1}{4 x} \left( \partial_x^2 + \partial_y^2\right) +V_1(x,y)$ associated to  $V_1$, the constants of motion are given by $\cite{MR1878980}$
 \begin{align*}
     A & = \partial_y^2 + \frac{4b_3}{y^2} + b_1 y^2, \\
     B & = y \partial_y \partial_x - x \partial_y^2 + \frac{\partial_x}{2} - \frac{y^2}{4x} \left( \partial_x^2 + \partial_y^2 \right) + \frac{b_1 y^4}{4 x} + \frac{b_2 y^2}{x} + \frac{b_3(4 x^2 + y^2)}{y^2 x}.
\end{align*}  
These integrals satisfy the following quadratic algebra relations 
\begin{align*}
  &  [A,B] = C, \\
  & [A,C] = -8 \hat{\mathcal{H}} A - 16 b_1 B, \\
  & [B,C] = 6 A^2 + 8 \hat{\mathcal{H}} B + 1 6 b_2 A - 2 b_1(3 + 16 b_3) .
\end{align*} 
This is the symmetry algebra of the superintegrable system   The Casimir of this algebra is given by 
$$ K_1 = C^2 - 4 A^3 + 8 \hat{\mathcal{H}} \{A,B\} - 16 b_2 A^2 - 16b_1 B^2 + 4b_1(11 + 16b_3)A .$$
We can show that with the differential realization of $A, B$ the Casimir $K_1$ has the following form in terms of the Hamiltonian $\hat{\mathcal{H}}$,
\begin{align*}
      K_1   = -4(3 + 16 b_3)\hat{\mathcal{H}}^2 + 16 b_1b_2 (3 + 16 b_3).
\end{align*} 
 
In order to obtain the energy spectrum of the system via algebraic means, we now construct realization of the quadratic algebra in terms of the deformed oscillator algebra of the form
\begin{align}
    [\mathcal{N},b^\dagger] = b^\dagger, \quad \text{ } [\mathcal{N},b] = -b, \quad \text{ } b b^\dagger  = \Phi(\mathcal{N} + 1),\quad \text{ } b^\dagger b = \Phi(\mathcal{N}), \label{eq:alg}
\end{align} 
where $\mathcal{N}$ is the number operator and   $\Phi(z)$ is a well-defined real function satisfying
\begin{align}
    \Phi(0) = 0, \quad \text{ } \Phi(z) > 0, \;\forall z >0. \label{eq:consta}
\end{align}  
$\Phi(x)$ is called the $\textit{structure}$ $\textit{function}$ of the deformed oscillator algebra. 

It is non-trivial to obtain such a realization and the corresponding structure function $\Phi(z)$. After a long computation, we find in the present case that 
\begin{align*}
   A   = 4 \sqrt{-b_1} \,(\mathcal{N} + \eta), \quad \text{ } B = \frac{2\hat{\mathcal{H}}}{\sqrt{-b_1}} \, (\mathcal{N} + \eta) + b^\dagger + b 
\end{align*} 
map the quadratic algebra to the deformed oscillator algebra with structure function given by 
\begin{align*}
   \Phi_1^{(I)}(\mathcal{N},\eta)  = & -\frac{1}{16 b_1} \left(-4 (\mathcal{N}+\eta) 16 b_1 b_2+(-b_1)^{3/2} (16 b_3+11)-4 \hat{\mathcal{H}}^2+2 b_1^{3/2} (16 b_3+3) \right.\\
  & \left.+64 \sqrt{-b_1} b_1 (\mathcal{N}+\eta)^3+16 (\mathcal{N}+\eta)^2 (4 b_1 b_2-\hat{\mathcal{H}}^2)  -(16 b_3+3) (2 b_1 \sqrt{-b_1}-4 b_1b_2+\hat{\mathcal{H}}^2)\right).
\end{align*}  Here $\eta$ is a constant to be determined from the constraints on the structure function $\Phi$.

We now obtain the finite-dimensional unitary irreducible representations (unirreps) of the deformed oscillator algebra in the Fock space. Let $\vert z,E \rangle,$ denote the Fock basis states labelled by the eigenvalues $z$ and $E$ of $\mathcal{N}$ and $\hat{\mathcal{H}}$, respectively. Acting the structure function on the Fock states, we find that it is factorized to the following form
\begin{align*}
    \Phi_1^I(z,\eta) = &  \left(z + \eta - \frac{1}{4} \left(2-\sqrt{1-16 b_3}\right)\right) \left(z + \eta -\frac{1}{4} \left(2+\sqrt{1-16 b_3}\right) \right)  \\
  &   \left(z + \eta+\frac{2 b_1 \left(\sqrt{-b_1}-2 b_2\right)+E^2}{4 (-b_1)^{3/2}}\right).
\end{align*}   
For the unirreps to be finite dimensional, we impose the following
constraints on the structure function,
\begin{align}
   \Phi(0,\eta)=0,\quad  \Phi(p+1, \eta) = 0,  \label{eq:constraint}
\end{align}  
where $p$ is a positive integer, $p=0,1,2,\cdots$. These constraints give $(p + 1)$-dimensional unirreps in the Fock space and their solutions give the constant $\eta$ and energy spectrum $E$ of the underlying superintegrable system. 

There are two sets of solutions from the constraints on the structure function:
\begin{align*}
    \eta & = \frac{1}{4} \left(2+ \epsilon \sqrt{1-16 a_2}\right),\\
    E_{im} & = \pm 2\,\sqrt{-1}\;(-b_1)^{3/4}\,\sqrt{p+1-\frac{\epsilon}{4} \sqrt{1-16 b_3}+\frac{ b_2}{\sqrt{-b_1}}} , 
\end{align*} 
 and 
\begin{align*}
    \eta & = -\frac{2 b_1 \left(\sqrt{-b_1}-2 b_2\right)+E^2}{4 (-b_1)^{3/2}},\\
    E_\epsilon & = \pm 2(-b_1)^{3/4}\sqrt{p+1+\frac{\epsilon}{4} \sqrt{1-16 b_3}-\frac{b_2}{\sqrt{-b_1}}}, 
\end{align*}
where $\epsilon = \pm 1$. The first set of solutions give complex energies which are not physical and thus will be discarded. So the energy spectrum of the system is given by the second set of solutions  which are real for $\epsilon=+1,\, b_1<0,\,b_2 \leq 0,\, b_3<1/16$. The structure function for the corresponding $(p+1)$-dimensional unirreps is 
\begin{align*}
\Phi^{(I)}_{E_{+}}(z) & = z(z-p-1) \left( z -  \frac{2 b_1 \left(\sqrt{-b_1}-2 b_2\right) + E_{+}^2}{4 (-b_1)^{3/2}} - \frac{1}{4} \left(2+ \sqrt{1-16 b_3}\right)\right).
\end{align*}

In the following subsections, we would only give the values of parameter $\eta$ which can lead to real energies $E$.

\subsubsection{Potential $V_2(x,y)$} 

Constants of motion for the superintegrable system in Darboux space I with the Hamiltonian $\hat{\mathcal{H}} = \frac{1}{4 x} \left( \partial_x^2 + \partial_y^2\right) + V_2(x,y)$ corresponding to the potential $V_2$  are given by   $\cite{MR1878980}$ \begin{align*}
    A = y \partial_y \partial_x - x\partial_y^2 &+ \frac{\partial_x}{2} - \frac{y^2}{4x}\left(\partial_x^2 + \partial_y^2 \right) - \frac{2 a_2 y}{x} + \frac{2 a_2(x^2 - y^2)}{x} + \frac{2 a_2 y (x^2 - y^2)}{x}, \\
   & B = \partial_y^2 + 4 a_2 y + 4 a_3y^2.
\end{align*} 
They satisfy the following quadratic algebra relations \begin{align*}
    &[A,B] = C, \\
    &[A,C] = 16 a_2 \hat{\mathcal{H}}  - 16 a_3 B, \\
    &[B,C] = 16 a_3 A +  8 (a_2^2 + 4a_1a_3)- 8 \hat{\mathcal{H}}^2.
\end{align*} 
The Casimir operator of the algebra is given by 
\begin{align*}
    K_2 = C^2 + 16 a_3 A^2 + 16 a_3 B^2  - 32 a_2 \hat{\mathcal{H}} B + 16 \left((a_2^2 + 4 a_1 a_3)-   \hat{\mathcal{H}}^2 \right) A,
\end{align*} 
which in terms of the differential realization of $A, B$ takes the constant value $K_2 = 64 (a_3^2 -a_1 a_2^2).$   

We then determine the realization of above quadratic algebra in terms of the deformed oscillator algebra $\eqref{eq:alg}$
and apply its finite dimensional unirreps to obtain the energy spectrum of the system.  After computations, we find that 
 \begin{align*}
 A =    4\sqrt{-a_3}(\mathcal{N} + \eta),\quad \text{ } B =  \frac{  a_2\hat{\mathcal{H}}}{a_3} + b^\dagger + b.
\end{align*} 
transform the quadratic algebra to the deformed oscillator algebra with structure function
\begin{align*}
  & \Phi_2^{(I)}(\mathcal{N},\eta) =\frac{\left(4 a_1 a_3+a_2^2-\hat{\mathcal{H}}^2\right)^2}{16 a_3^2}-\frac{a_1 a_2^2}{a_3}-\frac{1}{12} (\mathcal{N} + \eta) \left(\frac{24 a_2 \hat{\mathcal{H}}}{\sqrt{-a_3}}+48 a_3\right)+\frac{a_2 \hat{\mathcal{H}}}{\sqrt{-a_3}}+4 a_3 (\mathcal{N} + \eta)^2+a_3.
\end{align*} 
Moreover, the action of this structure function on the Fock states $\vert z, E\rangle$ is factorized as  \begin{align*}
    \Phi_2^{(I)} (z, \eta) = \left(z + \eta- \frac{m_+(E) + 2 a_3}{4 a_3}\right) \left(z + \eta - \frac{m_-(E) + 2 a_3}{4 a_3}\right),
\end{align*} 
where $\eta$ is a constant to be determined and  
$$m_\pm(E) =\frac{a_2 E}{\sqrt{-a_3}} \pm\sqrt{64 a_1^2 a_3^2+4 a_1 \left(a_2^2 (8 a_3-1)-8 a_3 E^2\right)+4 a_2^4-\frac{a_2^2 (8 a_3+1) E^2}{a_3}+4 E^4} .$$

We now impose the constraints $\eqref{eq:constraint}$ to obtain finite-dimensional unirreps of the algebra. We find that for $p=0,1,2,\cdots,$ we have
\vskip.1in 
{\bf Case 1:} $\eta_-(E) = \frac{1}{4a_3}\left(m_-(E) + 2 a_3\right)$ and 
\begin{align}
\sqrt{64 a_1^2 a_3^2+4 a_1 \left(a_2^2 (8 a_3-1)-8 a_3 E^2\right)+4 a_2^4-\frac{a_2^2 (8 a_3+1) E^2}{a_3}+4 E^4} =2a_3(p+1),
\end{align}
which has solutions only for $a_3>0$ and the energy spectrum of the system is given by
 \begin{align*}
        E_{+a_3}  = \pm\frac{1}{\sqrt{8a_3}}\sqrt{\sqrt{128 a_1 a_2^2 a_3^2+a_2^4 (16 a_3+1)+64 a_3^4 (p+1)^2}+32 a_1 a_3^2+a_2^2 (8 a_3+1)},
\end{align*} 
Notice that $E_{+a3}$ is real for $a_1>0,\, a_3>0$.
\vskip.1in
{\bf Case 2}  $\eta_+(E) = \frac{1}{4a_3}\left(m_+(E) + 2 a_3\right)$ and 
\begin{align*}
 \sqrt{64 a_1^2 a_3^2+4 a_1 \left(a_2^2 (8 a_3-1)-8 a_3 E^2\right)+4 a_2^4-\frac{a_2^2 (8 a_3+1) E^2}{a_3}+4 E^4} =-2a_3(p+1),
\end{align*} which has solutions only for $a_3<0$ and the energies of the system are
\begin{align}
E_{-a_3} =   \pm \,\frac{1}{ \sqrt{-8a_3}}\sqrt{\sqrt{128 a_1 a_2^2 a_3^2+a_2^4 (16 a_3+1)+64 a_3^4 (p+1)^2}-32 a_1 a_3^2-a_2^2 (8 a_3+1)}.
\end{align}
Obviously for $a_3<0$ there exist ranges of model parameters $a_1,\,a_2$ such that the eneries $E_{-a_3}$ of the system are real.

The structure function for both cases 1 and 2 corresponding to the $(p+1)$-dimensional unirreps of the algebra is given by $ \Phi^{(I)}_{E_\pm a_3}(z) = z(z-p-1)$. 

\subsection{Darboux Space II}

In the Darboux space II, there are three superintegrable systems with potentials given by $\cite{MR2023556}$
\begin{align*}
  &  V_1(x,y) = \frac{x^2}{x^2 + 1} \left( a_1 \left( \frac{x^2}{4} + y^2 \right) + a_2 y + \frac{a_3}{x^2} \right), \\
  &  V_2(x,y) = \frac{x^2}{x^2 +1} \left( b_1(x^2 + y^2)  + \frac{b_2}{x^2} + \frac{b_3}{y^2}\right) \\
  & V_3(x,y) = \frac{c_1 + \frac{c_2}{x^2} + \frac{c_3}{y^2}}{x^2 + y^2 + \frac{1}{x^2} + \frac{1}{y^2}},
 \end{align*} 
 respectively, where $a_j,b_j,c_j$ are real constants. 
 
 \subsubsection{Potential $V_1(x,y)$}
 
The constants of motion of the superintegrable system in Darboux space II with the Hamiltonian $ \hat{\mathcal{H}} = \frac{x^2}{x^2 + 1}   \left( \partial_x^2 + \partial_y^2\right) +   V_1(x,y)$ associated to the potential $V_1$ are  \begin{align*}
  &  A = \partial_y^2 + a_1y^2 + a_2y, \\
   B  = \frac{2 y}{x^2 + 1} \left( \partial_y^2 - x^2 \partial_x^2 \right) & + 2x \partial_x \partial_y + \partial_y + \frac{a_1}{2} y \left(x^2 + \frac{x^2 + 4 y^2}{x^2 + 1}\right) + \frac{a_2}{2} \left(x^2 + \frac{ 4 y^2}{x^2 + 1}\right) - \frac{2 a_3 y}{x^2 +1}.
\end{align*} 
They satisfy the following quadratic algebra relations $\cite{MR2023556}$ \begin{align*}
    & [A,B] = C,\\
    & [A,C] =  - 4 a_1 B - 4a_2 A,\\
    & [B,C] = - 24 A^2 + 4 a_2 B + 32 \hat{\mathcal{H}} A - 8 \hat{\mathcal{H}}^2 - 8a_1 \hat{\mathcal{H}} + 6 a_1 + 8 a_1 a_3.
\end{align*}  
Its Casimir operator can be shown to be given by \begin{align*}
    K_1  = C^2 - 16 A^3 + 4 a_1 B^2 + 4 a_2\{A,B\}+ \left(4 a_1( 4 a_3 - 11)  - (16 a_1 \hat{\mathcal{H}}+ 16  \hat{\mathcal{H}}^2 ) \right) A + 32 \hat{\mathcal{H}}A^2.   
\end{align*} 
In term of the differential realization of $A, B$,  the Casimir $K_1$ takes the simple form   $K_1= (32 a_1 +4 a_2^2) \hat{\mathcal{H}} - a_2^2  (3 +4 a_3 )$.

By computations similar to those in the previous subsection, we find that \begin{align*}
    A = 2 \sqrt{-a_1} (\mathcal{N} + \eta), \quad \text{ } B = \frac{2a_2}{\sqrt{-a_1}}(\mathcal{N} + \eta) + \frac{a_2 \hat{\mathcal{H}}}{a_1} + b^\dagger + b  
\end{align*} map the quadratic algebra to the deformed oscillator algebra $\eqref{eq:alg}$ with structure function given by
\begin{align*}
   \Phi_1^{(II)}(\mathcal{N},\eta)= &  -12 a_1^3 - 3 \sqrt{-a_1 } a_1 a_2^2 - 16 a_1^3 a_3 -  4 \sqrt{-a_1} a_1 a_2^2 a_3 + 32 \sqrt{-a_1} a_1^2 \hat{\mathcal{H}} + 16 a_1^3 \hat{\mathcal{H}} \\
   & - 8 a_1 a_2^2 \hat{\mathcal{H}} + 4 \sqrt{-a_1} a_1 a_2^2 \hat{\mathcal{H}} + 16 a_1^2 \mathcal{H}^2 + 4 \sqrt{-a_1} a_2^2 \mathcal{H}^2 \\
   & + (\mathcal{N} + \eta) \left(88 a_1^3 + 16 \sqrt{-a_1} a_1 a_2^2 + 32 a_1^3 a_3   -128 \sqrt{-a_1} a_1^2 \hat{\mathcal{H}}- 32 a_1^3 \hat{\mathcal{H}}+ 16 a_1 a_2^2 \hat{\mathcal{H}} - 
    32 a_1^2 \hat{\mathcal{H}}^2\right) \\
    & +  (\mathcal{N} + \eta)^2 \left(-192 a_1^3 - 16 \sqrt{-a_1} a_1 a_2^2 + 128 \sqrt{-a_1} a_1^2 \hat{\mathcal{H}}\right)  + 128 a_1^3 (\mathcal{N} + \eta)^3.
\end{align*}  Here $\eta$ is a constant to be determined from the constraints of the structure function. Acting  on the Fock basis states $\vert z,E \rangle$, the structure function $\Phi_1^{(II)}$ becomes factorized 
\begin{align*}
  \Phi_1^{(II)}(z,\eta) = & \left(z + \eta -\frac{f_1(E)  +\omega(E) +f_2(E)}{24 a_1^3  } \right)\\
  & \left(z + \eta -\frac{1 }{96 a_1^3} \left(4 f_1(E)-2  \left(1-i\sqrt{3}\right) \omega(E) 
   +\left(1+i \sqrt{3}\right)f_2(E) \right) \right) \\
   & \left(z + \eta -\frac{1}{96 a_1^3}\left(4 f_1(E)-2 \left(1+i \sqrt{3}\right) \omega(E)+\left(1-i \sqrt{3}\right)f_2(E)\right) \right),
\end{align*}  where \begin{align*}
    f_1(E) = & a_1\left(12 a_1^2+\sqrt{-a_1} a_2^2   + 8   (-a_1)^{3/2}E \right) , \\
    f_2(E) = & \frac{1}{\omega(E)}\left(a_1^3(12 a_1^3(4 E - 4a_3 + 1)-  a_2^4-16 a_1^2 E^2 -8 a_1  a_2^2 E)\right), \\
    \omega(E)  = & \sqrt[3]{\tau_1(E)+\tau_2(E)},\\ 
    \tau_1(E) = & 6 a_1^6\left(a_2^4+8 a_1 E a_2^2+16 a_1^2 E^2+a_1^3 (-16 a_3+16 E+4)\right) \, \sqrt{ 3(4 a_3-4 E-1)  }, \\
     \tau_2 (E)=& a_1a_2^6 (-a_1)^{7/2} +12 a_2^4 E (-a_1)^{11/2}-48 a_2^2 E^2 (-a_1)^{13/2}\\
     & -4 \left(9 a_2^2 (4 a_3-4 E-1)-16 E^3\right) (-a_1)^{15/2} 
       -144 E (-4 a_3+4 E+1) (-a_1)^{17/2}.
\end{align*}    

To determine the constant $\eta$ and energy spectrum $E$ of the superintegrable system, we  impose the constraints $\eqref{eq:constraint}$ which give $(p+1)$-dimensional unirreps of the algebra. We find 
\vskip.1in
{\bf Case 1:}  The constant $\eta$ is given by $$\eta_1(E) =\frac{f_1(E)  +\omega(E) +f_2(E)}{24 a_1^3  }$$ and the energy $E$ satisfies the algebraic equation, 
\begin{align}
  \omega(E) +f_2(E) +  \frac{1}{2}\left(1-\epsilon\,i \sqrt{3}\right) \omega(E)-\frac{1}{4}\left(1 + \epsilon \, i \sqrt{3}\right)f_2(E)  =-  24  (p+ 1) a_1^3. \label{eq:eq1}
\end{align} 

\vskip.1in
{\bf Case 2:}  $$\eta_2(E) =\frac{1 }{96 a_1^3}\left(4 f_1(E) - 2 \left(1-\epsilon\,i \sqrt{3}\right) \omega(E)+\left(1+ \epsilon \, i \sqrt{3}\right)f_2(E)\right)$$ and the energy is determined by 
\begin{align}
    \omega(E) +f_2(E) +  \frac{1}{2}\left(1-\epsilon\,i \sqrt{3}\right) \omega(E)-\frac{1}{4}\left(1 + \epsilon \,  \sqrt{3}i\right)f_2(E)  =  24  (p+ 1) a_1^3. \label{eq:eq2}
\end{align} 
In both cases above, $\epsilon=\pm 1$.

The energy spectrum $E$ of the system are obtained by solving the algebraic equations $\eqref{eq:eq1}$ and $\eqref{eq:eq2}$. However, it is in general very difficult to obtain analytical solutions of these equations,  due to their complicated form. To demonstrate that these equations have real solutions, we have a closer look at restricted model parameter spaces. Without the loss of generality, we consider the case where $-a_1=a_2=a_3=a$ for any $a\in\mathbb{R}$. For such model parameters, the structure function has the simple form, 
\begin{align*}
    \Phi_1^{(II)}(z,\eta) = &\left(z + \eta - \frac{1}{8} \left(\sqrt{a}+4\right) \right) \left(z + \eta - \frac{1 }{4 a} \left( 2 \sqrt{a} E+2 a-a\sqrt{4 E+1-4 a}\right)\right) \\
    & \left(z + \eta - \frac{1}{4 a}\left(2 \sqrt{a} E+2 a+a\sqrt{4 E+1-4 a}\right) \right). 
\end{align*} 
Imposing the constraints $\eqref{eq:constraint}$ on the structure function lead to the determination of constant $\eta$ and energy $E$ of the superintegrable system for the model parameters $-a_1=a_2=a_3=a$. There are two sets of solutions: 
One is that $\eta =\frac{1}{8} \left(\sqrt{a}+4\right) $ and \begin{align*}
    E_\epsilon  = \frac{1}{4} \left(8 \sqrt{a} (p+1)+3 a+  2 \epsilon\sqrt{8 a^{3/2} (p+1)-2 a^2+a} \right),
\end{align*} 
where $\epsilon = \pm 1$, with the associated structure function $\Phi^{(II)}_{E_\epsilon}(z) = z(p+1 - z)^2.$
The energy spectrum $E_\epsilon$ is real for $0 < a \leq 1/2 .$  

The second set of solutions is given by
$$\eta (E) =\frac{1 }{4 a} \left( 2 \sqrt{a} E+2 a-  a\sqrt{    4 E+1-4 a}\right) $$ and the corresponding energy spectrum of the system and  structure function for the $(p+1)$-dimensional unirreps of the deformed oscillator algebra are given by 
\begin{align*}   
& E = p(p+2)+a+\frac{3}{4} ,\\
&\Phi^{(II)}_E (z) = z(z-p-1)\left(z+  \frac{1}{8 a}\left(3 a^{3/2}- 4 a (p+1) +\sqrt{a} (4 p^2+8 p+3)\right) \right). 
\end{align*} 

Thus we have demonstrated that there exist indeed non-trivial model parameters which give real energies of the superintegrable system in both Case 1 and Case 2 above.

 \subsubsection{  Potential $  V_2(x,y) $}
 
The superintegrable system in Darboux II with potential $V_2(x,y)$ has Hamiltonian $\hat{\mathcal{H}} =\frac{x^2}{x^2 + 1}   \left( \partial_x^2 + \partial_y^2\right) +V_2(x,y)$. This system possesses the following integrals of motion $\cite{MR2023556}$ 
\begin{align*}
& A = \partial_y^2 + b_1 y^2 + \frac{b_3}{y^2},\\
   B = \frac{ (y^2 - x^4) \partial_y^2 + x^2 (1 - y^2) \partial_x^2}{x^2 + 1} &  + 2xy \partial_x\partial_y + x \partial_x + y \partial_y - \frac{1}{4} + \frac{x^2 + y^2}{x^2 +1} \left( b_1(x^2 +y^2)-b_2 -b_3\frac{x^2}{y^2} \right),
\end{align*} which form the quadratic algebra relations \begin{align*}
    & [A,B] = C, \\
    & [A,C] = 8 A^2 - 16 b_1 B + 16 b_1 \hat{\mathcal{H}}-16 b_1 (b_2 + b_3 + \frac{3}{4}),  \\
    & [B,C] = - 8 \{A,B\} + 8 \hat{\mathcal{H}} B + 12 A -8 \hat{\mathcal{H}}^2 + 8 (b_2 - b_3 - \frac{3}{4}) \hat{\mathcal{H}}. 
\end{align*} 
By a direct calculation, we find the Casimir operator of the algebra \begin{align*}
    K_2 = & C^2 - 8 \{A^2,B\} + 8 \hat{\mathcal{H}}\{A,B\}  + 16 b_1 B^2 + 76 A^2 + \left(16(b_3 - b_2 + \frac{19}{4}) \hat{\mathcal{H}}- 16 \hat{\mathcal{H}}^2 \right) A\\
    &+  \left(8b_1 (4(b_2 + b_3)+3)- 32 b_1 \hat{\mathcal{H}}\right)B .
\end{align*} 
This Casimir operator can be expressed in terms of Hamiltonian as 
$$K_2 =  -16 \left(b_1 + b_3 + \frac{3}{4}\right) \hat{\mathcal{H}}^2 - 8 b_1 (4b_3 -4b_2 + 3) \hat{\mathcal{H}} + b_1 \left( 36 + 48 b_3 - (4b_3 -4b_2 + 3)^2 \right).$$
 
It can be shown that after the change of basis \begin{align*}
    &A  = 4\sqrt{-b_1} (\mathcal{N} + \eta),\\
    B = 8 (\mathcal{N} + \eta)^2 - \frac{2 \hat{\mathcal{H}}}{\sqrt{-b_1}} & - \frac{16  (b_2 + b_3 + \frac{3}{4}) (\mathcal{N} + \eta)-   b_1 \hat{\mathcal{H}}}{b_1} + b^\dagger + b,
\end{align*} the quadratic algebra becomes the deformed oscillator algebra with structure function 
\begin{align*}
\Phi_2^{(II)}(\mathcal{N},\eta) =& \frac{1}{16} \left(4 b_3+16 \mathcal{N}^2+16 \mathcal{N} (2 \eta-1)+16 \eta^2 - 16\eta+3\right) \\
& \left(4  b_2+\frac{1-4 \hat{\mathcal{H}} \left(\sqrt{-b_1}+2\mathcal{N}+2\eta-1\right)}{\sqrt{-b_1}}+16 \mathcal{N}^2+32 \mathcal{N} \eta-16 \mathcal{N}+16 \eta^2 - 16\eta+3 \right).
\end{align*} 
On the Fock states $\vert z,E \rangle,$ the structure function is factorized as follows
\begin{align*}
   \Phi_2^{(II)}(z,\eta)   =& \left( z + \eta - \frac{1}{4} \left(2-\sqrt{1-4 b_3}\right)\right)\left( z + \eta -\frac{1}{4} \left(2+\sqrt{1-4 b_3}\right) \right)  \\
   & \left( z + \eta - \frac{ 2 b_1-\gamma_+(E)}{4 b_1}\right)  \left( z + \eta - \frac{2b_1-\gamma_-(E)}{4 b_1}\right),
\end{align*} 
where $$\gamma_\pm(E)  = \sqrt{b_1^2 (4 E-4 b_2+1)}\pm  \sqrt{-b_1} E.$$ 

Imposing the constraints  $\eqref{eq:constraint},$ for any $p \in \mathbb{N}^+$ we get the following values for the parameter $\eta$ and energy $E$:
\vskip.1in
{\bf Case 1.}  $\eta_+ (E) = \frac{1}{4b_1}\left(2 b_1-\gamma_+(E)\right)$. This $\eta$ value gives the following energy spectrum of the system and the corresponding structure function of the deformed oscillator algebra
$$E_-= - (p+2)\sqrt{-b_1},$$ 
$$\Phi^{(II)}_{E_-}(z)=z(z-p-1)\left(z+\frac{1}{4b_1}\left(b_1\sqrt{1-4b_3}-\gamma_+(E_-)\right)\right) \left(z-\frac{1}{4b_1}\left(b_1\sqrt{1-4b_3}+\gamma_+(E_-)\right)\right).
$$
The enegry $E_-$ is real for $b_1<0$. 
\vskip.1in
{\bf Case 2.} $\eta_- (E) = \frac{1}{4b_1}\left(2 b_1-\gamma_-(E)\right)$. This $\eta$ value gives two sets of energies of the system,
\begin{align}
E_+= & (p+2)\sqrt{-b_1},\label{eta-1}\\
E_\epsilon= & -2b_1+4\sqrt{-b_1}\left(p+1+\frac{\epsilon}{4}\sqrt{1-4b_3}\right)\nonumber\\
   & \pm\sqrt{4b_1^2-16b_1\sqrt{-b_1}\left(p+1+\frac{\epsilon}{4}\sqrt{1-4b_3}\right)+4b_1b_2-b_1}.\label{eta-2}
\end{align}
$E_\epsilon$ above is obtained by solving the algebraic equation  $\gamma_-(E)=4b_1\left(p+1+\frac{\epsilon}{4}\sqrt{1-4b_3}\right)$ from the constraints. (Notice that the other algebraic equation  $\gamma_+(E)=4b_1\left(p+1+\frac{\epsilon}{4}\sqrt{1-4b_3}\right)$ lead to complex solutions and its solutions are not shown here.) Obviously $E_+$ is real for $b_1<0$ and $E_\epsilon$ is real for $\epsilon=+1, b_1<0,\, b_2<\frac{1}{4},\,b_3<\frac{1}{4}$.  The structure functions corresponding to  $E_+, E_\epsilon$ in the Case 2 above are given by 
$$\Phi^{(II)}_{E_+}(z)=z(z-p-1)\left(z+\frac{1}{4b_1}\left(b_1\sqrt{1-4b_3}-\gamma_-(E_+)\right)\right) \left(z-\frac{1}{4b_1}\left(b_1\sqrt{1-4b_3}+\gamma_-(E_+)\right)\right),
$$
$$ \Phi^{(II)}_{E_\epsilon}(z) = z(z-p-1)\left( z + \frac{1}{2b_1}\sqrt{-b_1}\,E_\epsilon\right)   \left(z-\frac{1}{4b_1}\left(\gamma_-(E_\epsilon) +\epsilon b_1\sqrt{1-4 b_3}\right) \right), $$ 
respectively.

\vskip.1in
{\bf Case 3:} $\eta  =\frac{1}{4} \left(2+ \epsilon \sqrt{1-4 b_3}\right) $. The corresponding energies are given the same expression as $E_\epsilon$ above (and are obtained from solving the algebraic equation $ \gamma_+(E)=-4b_1\left(p+1+\frac{\epsilon}{4}\sqrt{1-4b_3}\right)$).

\subsubsection{Potential $V_3(x,y)$}

The constants of motion for the superintegrable system in Darboux space II with the Hamiltonian $\hat{\mathcal{H}} =\frac{x^2}{x^2 + 1}   \left( \partial_x^2 + \partial_y^2\right) +   V_3(x,y)$ associated to the potential $V_3$ are given by \begin{align*}
    A = & \frac{\left(y^2 + \frac{1}{y^2} \right) \partial_x^2 - \left(x^2 + \frac{1}{x^2} \right) \partial_y^2 }{x^2 + y^2 + \frac{1}{x^2} + \frac{1}{y^2}} + \frac{c_1 x^2(y^4 + 1) + c_2(y^4 +1) -c_3(x^4 + 1)}{(x^2y^2 +1)(x^2 + y^2)}, \\
    B = &\frac{c_1 (x^2 + y^2) -c_2(y^4 -1) -c_3(x^4 - 1)}{4(x^2y^2 +1) } + xy(x^2 -y^2) \left(xy \partial_x^2 -xy \partial_y^2 + (x^2 -y^2) \partial_x\partial_y \right) \\
   & + \frac{1}{x^2y^2 + 1}   
   \left[\left(\frac{x^2 -y^2 }{4} + y^4 \right) x^2 \partial_x^2 +\left(\frac{x^2 -y^2}{4} + x^4 \right) y^2 \partial_y^2 + 2 xy \left(\frac{x^2 -y^2}{2} - x^2y^2 \right) \partial_x \partial_y \right] .
\end{align*} 
They form the following  quadratic algebra relations  
\begin{align*}
  &  [A,B] = C, \\
  & [A,C] = 2 A^2 +2 c_1 A + 16 \hat{\mathcal{H}} B + 6 \hat{\mathcal{H}} - 8 \hat{\mathcal{H}}^2, \\
  &   [B,C] = - 2 \{A,B\} + (c_2 + c_3) A - c_1 c_3. 
\end{align*} 
The Casimir operator of this algebra is  \begin{align*}
    K_3 = C^2 - 2 \{A^2,B\} - 16 \hat{\mathcal{H}} B^2 + (c_2 + c_3 +4) A^2 + 2 c_1 \{A,B\}_a -2c_1(c_3 + 2)A + ( 16 \hat{\mathcal{H}}^2 - 12 \hat{\mathcal{H}})B.
\end{align*} 
With the differential realization of $A,B$, the Casimir operator can be expressed in terms of the Hamiltonian as \begin{align*}
    K_3 =  4(c_2 + c_3 ) \hat{\mathcal{H}}^2+ (c_1^2 - 4 c_2 c_3 - 3(c_2 + c_3))\hat{\mathcal{H}}-\frac{3 + 4 c_3}{4}c_1^2.
\end{align*}  

We can convert the quadratic algebra into the deformed oscillator algebra by using the realization \begin{align*} 
 A = 4 \sqrt{\hat{\mathcal{H}}} (\mathcal{N} + \eta), \quad\text{ }B= -2 (\mathcal{N} + \eta)^2 + \frac{c_1}{2\sqrt{\hat{\mathcal{H}}}} (\mathcal{N} + \eta) - \frac{3 \hat{\mathcal{H}} - 4 \hat{\mathcal{H}}^2}{ 8 \hat{\mathcal{H}}} + b^\dagger + b 
\end{align*} with the corresponding structure function given by
\begin{align*}
 \Phi_3^{(II)}&(\mathcal{N},\eta) =-\frac{1}{256 \hat{\mathcal{H}}}\left( 4 c_3-4 \hat{\mathcal{H}}+16\mathcal{N}^2+32 \mathcal{N} \eta-16 \mathcal{N}+16 \eta^2-16 \eta+3\right)  \\
 & \quad\times\left(-c_1^2+4 c_1 \sqrt{\hat{\mathcal{H}}} (2\mathcal{N}+2\eta-1)+\hat{\mathcal{H}} \left(-4 c_2+4 \hat{\mathcal{H}}-16\mathcal{N}^2-32 \mathcal{N} \eta+16 \mathcal{N}-16 \eta^2+16 \eta-3\right)\right). 
\end{align*} 
 By acting $\Phi_3^{(II)}$ on the Fock basis states $\vert z,E \rangle,$ we find that the structure function is factorised as  \begin{align*}
  \Phi_3^{(II)}(z,\eta) = &\left(z + \eta -\frac{1}{4} \left(2-\sqrt{-4c_3 +4 E+1}\right) \right)\left(z +\eta -\frac{1}{4} \left(2+\sqrt{-4c_3 +4 E+1}\right)\right)  \\
 & \times \left(z + \eta -\frac{1}{4 E}\left(-E\sqrt{-4c_2   +4 E +1}+c_1 \sqrt{E}+2 E\right)\right)\\
 & \times \left(z + \eta -\frac{1 }{4 E}\left( E\sqrt{-4c_2   +4 E +1}+ c_1 \sqrt{E}+2 E\right)\right).
\end{align*} 

We now obtain the energy spectrum of the system from the finite-dimensional unirreps of the deformed oscillator algebra. Imposing the constraints $\eqref{eq:constraint}$ which give $(p+1)$-dimensional unirreps for any $p \in \mathbb{N}^+$, we determine the parameter $\eta$ and the energy $E$ of the system. There are two sets of solutions;
\vskip.1in
{\bf Case 1:}  $\eta  (E) =\frac{1}{4} \left(2 - \sqrt{-4c_3 +4 E+1}\right)  $  and the energies are determined by either \begin{align}
   \sqrt{-4c_3 +4 E+1}  - 2(p+1) = 0 \label{eq:333} 
\end{align}
or 
\begin{align}
 \frac{c_1}{\sqrt{E}} +  \sqrt{-4 c_3+4 E+1}+\sqrt{-4 c_2+4 E+1} = 4(p+1). \label{eq:666}
\end{align}
Solution to the algebraic equation (\ref{eq:333}) gives the energies
$$ E_{c3} =p(p+2)+ c_3+\frac{3}{4}. $$
The structure function of the corresponding $(p+1)$-dimensional unirreps is
\begin{align*}
\Phi^{(II)}_{E_{c3}}(z)=& z(z-p-1)\left(z-\frac{1}{2}\left(p+1-\sqrt{(p+1)^2+c_3-c_2}\right)-\frac{c_1}{4\;\sqrt{\left(p+\frac{1}{2}\right)\left(p+\frac{3}{2}\right)+c_3}}\right)\\
&\qquad\left(z-\frac{1}{2}\left(p+1+\sqrt{(p+1)^2+c_3-c_2}\right)-\frac{c_1}{4\;\sqrt{\left(p+\frac{1}{2}\right)\left(p+\frac{3}{2}\right)+c_3}}\right).
\end{align*}
 Other possible energies of the system are given by solutions to the algebraic equation (\ref{eq:666}), which read
 \begin{align*}
    E_{\pm} & = \frac{1}{4}\frac{(p+1+c_1)^2\left(2c_2+2 c_3-1\pm\sqrt{ (p+1+c_1)^2 +4c_2 c_3- (c_2+c_3)+\frac{1}{4}}\,\right)}{(p+1+c_1)^2-(c_2-c_3)^2}.
\end{align*}
These energies are real for the model parameters satisfying $4c_2c_3+\frac{1}{4}> c_2+c_3$. The corresponding structure functions for the (p+1)-dimensional unirreps of the algebra are 
\begin{align*}
\Phi^{(II)}_{E_\pm}(z)=& z(z-p-1)\left(z-\frac{1}{2}\sqrt{-4c_3+4E_\pm+1}\right)\\
 & \left(z-\frac{1}{4}\left(\sqrt{-4c_3+4E_\pm+1}-\sqrt{-4c_2+4E_\pm+1}+\frac{c_1}{\sqrt{E_\pm}}\right)\right).
 \end{align*}

\vskip.1in
{\bf Case 2:}  $\eta (E) =  \frac{1}{4E} \left(c_1\sqrt{E}+2 E- E\sqrt{-4c_2   +4 E +1}\right)$. This $\eta$ value gives the following energy spectrum of the system and the corresponding structure function of the unirreps,  
\begin{align*}
&\qquad E_{c2}= p(p+2)+c_2+\frac{3}{4},\\
\Phi^{(II)}_{E_{c2}}(z)=& z(z-p-1)\left(z+\frac{1}{2}\left(p+1-\sqrt{(p+1)^2+c_2-c_3}\right)+\frac{c_1}{4\;\sqrt{\left(p+\frac{1}{2}\right)\left(p+\frac{3}{2}\right)+c_2}}\right)\\
&\qquad\left(z-\frac{1}{2}\left(p+1+\sqrt{(p+1)^2+c_2-c_3}\right)+\frac{c_1}{4\;\sqrt{\left(p+\frac{1}{2}\right)\left(p+\frac{3}{2}\right)+c_2}}\right).
\end{align*}

 \subsection{Darboux Space III}

In Darboux space III, there exist $4$ different potentials. In terms of the separable coordinates $(u, v)$ and $(\mu,\nu)$, they are given by 
\begin{align*}
   & V_1(u,v) = \frac{a_1 u + a_2 v + a_3}{4 + u^2 + v^2},\\
   & V_2(u,v)= \frac{\frac{b_1}{u^2} + \frac{b_2}{v^2} + b_3}{4 + u^2 + v^2}, \\
   & V_3(\mu,\nu) = \frac{c_1(\mu + \nu) + c_2 \frac{\mu+\nu}{\mu\nu} + c_3 \frac{\nu^2 - \mu^2}{\nu^2 \mu^2}}{(\mu + \nu)(2 + \mu - \nu)},\\
   &V_4(\mu,\nu) = \frac{d_1 \mu  + d_2 \nu  + d_3 \nu^2+\mu^2 }{(\mu + \nu)(2 + \mu - \nu)},
\end{align*}
where $a_i, b_i, c_i, d_i$ are real constants.

\subsubsection{Potential $V_1(u,v)$}

The constants of motion of the superintegrable system in Darboux space III with the Hamiltonian $\hat{\mathcal{H}} = \frac{\exp(2u)}{4(\exp(u)) + 1}  \left( \partial_u^2 + \partial_v^2 \right) +  V_1(u,v)$ associated to the potential $V_2$ are given by $\cite{MR2023556}$ \begin{align*}
& A = \frac{(2 + v^2) \partial_u^2 - (2 + u^2) \partial_v^2}{2(4 + u^2 + v^2)} + \frac{a_1 u(2 + v^2) - 2 a_2 v(2 + u^2) + a_3(v^2 - u^2)}{4(4 + u^2 + v^2)},\\
& B = 2 u v \frac{(\partial_u^2 + \partial_v^2)}{2(4 + u^2 + v^2)} - 2 \partial_u \partial_v + \frac{a_1 v ( v^2 - u^2  + 4)+ a_2 u( u^2 - v^2 + 4) -2 a_3 v u}{4(4 + u^2 + v^2)}.
\end{align*}  They form the quadratic algebra with the commutation relations \begin{align*}
    & [A,B] = C, \qquad [A,C] = \hat{\mathcal{H}} B - \frac{a_2a_1}{8}, \qquad \text{ } [B,C] = -\hat{\mathcal{H}} A - \frac{a_2^2 -a_1^2}{16},
\end{align*} which is the symmetry algebra of the superintegrable system.

By a direct computation, we obtain the Casimir operator of this algebra
\begin{align*}
    K_1 = - \hat{\mathcal{H}} A^2 - \hat{\mathcal{H}}B^2 - \frac{a_2^2 - a_1^2}{8} A + \frac{a_1 a_2}{4}B.
\end{align*} We can show that in terms of the Hamiltonian this Casimir operator takes the form  \begin{align*}
    K_1 = -\hat{\mathcal{H}}^3 + \frac{1}{2} (a_3 + \frac{1}{2}) \hat{\mathcal{H}}^2 + \frac{1}{16} (2 a_1^2 + 2 a_2^2 - a_3^2 )\hat{\mathcal{H}} - \frac{a_3(a_1^2 + a_2^2)}{32}.
\end{align*} 

To determine the energy spectrum of the system, we now construct the deformed oscillator algebra realization of the quadratic algebra.   We find  that  \begin{align*}
    A = \sqrt{\hat{\mathcal{H}}} (\mathcal{N} + \eta), \qquad \text{ } B = \frac{a_1a_2}{8 \hat{\mathcal{H}}} + b^\dagger + b ,
\end{align*} trsansform the quadratic algebra into the deformed oscillator algebra with the structure function \begin{align*}
    \Phi_1^{(III)}(\mathcal{N},\eta) = & \frac{1}{256 \hat{\mathcal{H}}}\left(a_1^2+2 a_3 \hat{\mathcal{H}}-4 \hat{\mathcal{H}}^{3/2} \left(2 \sqrt{\hat{\mathcal{H}}}+2\mathcal{N}+2 \eta-1\right) \right) \\
   & \quad\times \left(a_2^2+2 a_3 \hat{\mathcal{H}}+4 \hat{\mathcal{H}}^{3/2} \left(-2 \sqrt{\hat{\mathcal{H}}}+2\mathcal{N}+ 2\eta-1\right)\right).
\end{align*} Here $\eta $ is a constant parameter to be determined from the constraints $\eqref{eq:alg}$. Acting on the Fock basis states $\vert z,E \rangle,$ the structure function $\Phi_1^{(III)}$ becomes
\begin{align*}
   \Phi_1^{(III)}(z,\eta) = &\left(z + \eta -\frac{1}{8 E^{3/2}}\left(a_1^2+2 E \left(a_3-4 E+2 \sqrt{E}\right)\right)\right) \\
   &\quad\times \left(z + \eta +\frac{1}{8 E^{3/2}}\left(a_2^2+2E( a_3 -4E -2 \sqrt{E})\right)\right).
\end{align*}   
The constraints $\eqref{eq:constraint}$ give the $(p+1)$-dimensional unirreps of the deformed oscillaor algebra and their solutions determine the constant $\eta$ and energy spectrum of the superintegrable system. There are two sets of solutions:
\vskip.1in
{\bf Case 1:} $\eta(E) =\frac{1}{8 E^{3/2}}\left(a_1^2+2 E \left(a_3-4 E+ 2  \sqrt{E}\right) \right)$ and energies $E$ are determined by the algebraic equation
\begin{align}
 8 E^{3/2} \left(p+1\right)+4a_3 E+ a_1^2+a_2^2= 16E^2. \label{eq:444}
 \end{align}

 {\bf Case 2:} $\eta(E) = -\frac{1}{8 E^{3/2}}\left(a_2^2+2E\left( a_3 -4E -2 \sqrt{E})\right)\right)$ and energies $E$ satisfy 
\begin{align}
 16E^2+8 E^{3/2} \left(p+1\right)=4a_3 E+a_1^2+a_2^2.\label{eq:555}
\end{align} 
The algebraic equations $\eqref{eq:444}$ and $\eqref{eq:555}$ can be solved by using symbolic computation packages. 

It can be shown that there exist model parameters $a_i$ such that solutions to these algebraic equations for energies are real. 
To demonstrate this, we consider the case in which the model parameters satisfy $a_1 = 0$ and $a_2 =a_3 = 1$. In this case we find that the structure function reduces to
\begin{align*} \Phi_1^{(III)} (z,\eta) = \left( z + \eta - \left(\frac{1}{2}- \sqrt{E}+\frac{1}{4\sqrt{E}}\right)  \right) \left( z+ \eta-   \left(\frac{1}{2}+\frac{1}{8 E^{3/2}} (8 E^2-2 E-1)\right)\right).
\end{align*} 
Imposing the constraints $\eqref{eq:constraint}$ gives the constant $\eta$ and energies as follows. 
\vskip.1in
{\bf a.}  $ \eta(E) = \frac{1}{2} - \sqrt{E}+\frac{1}{4\sqrt{E}} $ which leads to the algebraic equation $ 4 E (1-4 E)+1 +8 E^{3/2}(p+1)= 0  $ for $E$.  This equation has real solution given by 
\begin{align*}
   E_\pm= \frac{1}{48} \left(3 p^2+\sqrt{3}\, g(p)+6 p+9\pm \sqrt{6\,f(p)}\right),
\end{align*} where \begin{align*}
 e(p) = &\sqrt[3]{27 p^4+108 p^3+252 p^2+3 \sqrt{3} \sqrt{(p+1)^4 \left(27 p^4+108 p^3+310 p^2+404 p+575\right)}+288 p+367}  \\
 g(p)  = &\sqrt{3 \left(p^2+2 p+3\right)^2+2\times 2^{2/3} e(p)+\frac{1}{e(p)}\,4 \sqrt[3]{2} \left(6 p^2+12 p+31\right)+8}\\
f(p)  = & ~3 \left(p^2+2 p+3\right)^2-2^{2/3} e(p)-\frac{1}{e(p)}2 \sqrt[3]{2} \left(6 p^2+12 p+31\right)\\
& \qquad +\frac{1}{g(p)}\,3 \sqrt{3} (p+1)^2 \left(p^4+4 p^3+12 p^2+16 p+23\right)+8.
\end{align*} It is clear that $g(p)$ is real for all $p \in \mathbb{N}^+.$  We now show that $f(p)>0$ for all $p \in \mathbb{N}^+.$ Let \begin{align*}
    f_0(p) = 3 \left(p^2+2 p+3\right)^2-2^{2/3} e(p)-\frac{1}{e(p)}\,2 \sqrt[3]{2} \left(6 p^2+12 p+31\right)+8.
\end{align*} By using symbolic computation package, we found that $\frac{df_0(p)}{dp}>0$ for all $p \in \mathbb{N}^+.$ Hence $f_0(p)$ is strictly increasing. Moreover, $ f_0(0) = -62 \sqrt[3]{\frac{2}{15 \sqrt{69}+367}}-2^{2/3} \sqrt[3]{15 \sqrt{69}+367}+35 \cong 12.5741>0$. It follows that $f(p)>0$ for all $p \in \mathbb{N}^+$ and the energy $E$ given above is real. 

\vskip.1in
{\bf b.} $\eta(E) = \frac{1}{2}+\frac{1}{8 E^{3/2}} (8 E^2-2 E-1) $. This leads to the algebraic equation $ 4 E (4 E-1)-1  +8 E^{3/2}(p+1) =0 $. It gives the same energy expression as in Case {\bf a} above.
\vskip.1in
For both case {\bf a} and case {\bf b} above, the structure function corresponding to the $(p+1)$-dimensional unirreps of the deformed oscillator algebra is simply $\Phi^{(III}_{a,b}(z)=z(z-p-1)$.

\subsubsection{Potential $ V_2(u,v)$}

The integrals of motion of the  superintegrable system associated to the potential $V_2$ with Hamiltonian $\hat{\mathcal{H}} = \frac{\exp(2u)}{4(\exp(u)) + 1}  \left( \partial_u^2 + \partial_v^2 \right) + V_2(u,v)$ in Darboux space III are  given by $\cite{MR2023556}$, \begin{align*}
     & A = u^2 \partial_v^2 - 2 u v \partial_u \partial_v + v^2 \partial_u^2 + \frac{b_1 v^2}{4 u^2} + \frac{b_2 u^2}{4 v^2} , \\
     & B = \frac{(2 + v^2) \partial_u^2 -(2 + u^2)\partial_v^2}{2(4 + u^2 + v^2)} + \frac{2b_1 v^2 (v^2 + 2) - 2b_2 u^2(u^2 +2) + b_3(v^2 -u^2)}{4(4 + u^2+ v^2)}.
 \end{align*} These integrals form the quadratic algebra of the form \begin{align*}
    & [A,B] = C, \\
    & [A,C] = - 2 \{A,B\} - (b_1 + b_2 +1) B +  (b_1 - b_2) \hat{\mathcal{H}} +  \frac{(b_2 - b_1)b_3}{4}, \\
    & [B,C] = -2 B^2 - (b_1 + b_2 + 1) B + (b_1 -b_2) \hat{\mathcal{H}} + \frac{(b_2 - b_1)b_3}{4}.
\end{align*} By a direct calculation, we find the Casimir operator of the algebra \begin{align*}
    K_2 =&~ C^2 + 2\{A,B^2\} + (b_1 + b_2 + 5) B^2 - 4 \hat{\mathcal{H}}A^2 - 2(b_1- b_2)\hat{\mathcal{H}} B \\
    & - b_3(b_2 - b_1) B - 4 \hat{\mathcal{H}}A + (2b_3 -1) \hat{\mathcal{H}} A - \frac{b_3^2}{4} A.
\end{align*}  With the differential realization of $A, B$ and in terms of $\hat{\mathcal{H}}$, the Casimir $K_2$  takes the simple form   
\begin{align*}
    K_2   = -(b_1 + b_2 - 2)\hat{\mathcal{H}}^2 + \left(\frac{(b_3 + \frac{3}{2})(b_1 + b_2)}{2} - b_3 -b_1b_2 - \frac{1}{2} \right) \hat{\mathcal{H}} - \frac{b_3^2(b_1 + b_2 -2)}{16}.
 \end{align*}  

The quadratic algebra can be transformed into the deformed oscillator algebra via the realization (i.e., change of basis)
\begin{align*}
 &     A  = - \left((\mathcal{N} + \eta)^2 - \frac{1}{4} + \frac{b_1 + b_2 + 1}{4} \right),  \\
 & B =    - \frac{    (b_1 - b_2) \hat{\mathcal{H}} +  \frac{(b_2 - b_1)b_3}{4}}{ 16\left((\mathcal{N} + \eta)^2 - \frac{1}{4}   \right)}  + b^\dagger \rho(\mathcal{N}) + \rho(\mathcal{N}) b, 
\end{align*} 
where 
$$  \rho(\mathcal{N}) = \frac{1}{3\cdot 2^{12} \cdot (-2)^8 (\mathcal{N} + \eta)(1+ \mathcal{N} + \eta )(1 + 2(\mathcal{N} + \eta))^2}.  $$ 
The structure function is given by  \begin{align*}
 \small
     \Phi_2^{(III)}(\mathcal{N},\eta) = &~ 4096 ((2\mathcal{N} + 2\eta-1)^2 \left(b_2^2 (3 b_1+3 b_2+7)-4 \hat{\mathcal{H}} \left(3 b_1^2+b_1 (12 b_2-12 \hat{\mathcal{H}}+11) \right. \right. \\
     & \left. \left. +9 b_2^2-12 b_2 \hat{\mathcal{H}}+25 b_2-28 \hat{\mathcal{H}}+4\right)\right)-48 (1-2 ( \mathcal{N} +  \eta))^2 \left(-\frac{1}{16} b_2^2 (b_1+b_2-2) \right. \\
     & \left.+\frac{1}{2} \hat{\mathcal{H}} \left(\left(b_2+\frac{3}{2}\right) (b_1+b_2)-2 b_1 b_2-2 b_2-1\right)-\hat{\mathcal{H}}^2 (b_1+b_2-2)\right) \\
     & +(2\mathcal{N} + 2\eta-1)^2 \left(12 \mathcal{N}^2+12\mathcal{N} (2 \eta-1)+12 \eta^2-12 \eta-1\right)\\
     &\qquad\qquad \times \left(b_2^2-4 \hat{\mathcal{H}} (2 b_1+4 b_2-4 \hat{\mathcal{H}}+1)\right) \\
     & +12 (b_1-b_2)^2 (b_2-2 \hat{\mathcal{H}})^2-12 \hat{\mathcal{H}} (2 ( \mathcal{N} +  \eta)-3) (2 ( \mathcal{N} +  \eta)+1) (1-2 ( \mathcal{N} +  \eta))^4 ).
 \end{align*} 
 Here $\eta$ is a constant to be determined from the constraints on the structure function.
 
 By acting on Fock basis states $\vert z,E \rangle,$ we can show that the structure function $\Phi_2^{(III)}$ is factorized as  \begin{align*}
   \Phi_2^{(III)}(z,\eta) = &~ \left( z + \eta- \frac{1}{12} \left(6-\sqrt{3} \sqrt{\frac{\delta_1(E) +(b_3-4 E)^2}{E} -8 (b_1+ b_2)+\frac{g(E)}{E \delta_1(E)}+12}\right) \right) \\
   &\left( z + \eta- \frac{1}{12} \left(6+\sqrt{3} \sqrt{\frac{\delta_1(E)+(b_3-4 E)^2}{E} -8 (b_1+ b_2)+\frac{g(E)}{E \delta_1(E)}+12}\right) \right) \\
   & \left( z + \eta-\frac{1}{24} \left(12-\sqrt{6} \sqrt{\frac{f(E)}{E}+\frac{(-1+i\sqrt{3})  \delta_1(E)}{E}-\frac{(1+i\sqrt{3}) g(E)}{E \delta_1(E)}+24}\right)  \right) \\
   & \left( z + \eta-\frac{1}{24} \left(12+\sqrt{6} \sqrt{\frac{f(E)}{E}+\frac{(-1+i\sqrt{3}) \delta_1(E)}{E}-\frac{(1+i\sqrt{3}) g(E)}{E \delta_1(E)}+24}\right)  \right)  \\
     &   \left( z + \eta- \frac{1}{24} \left(12-\sqrt{6} \sqrt{\frac{f(E)}{E}-\frac{(1+i\sqrt{3}) \delta_1(E)}{E}+\frac{(-1+i\sqrt{3})g(E)}{E \delta_1(E)}+24}\right) \right) \\
     & \left( z + \eta-\frac{1}{24} \left(12+\sqrt{6} \sqrt{\frac{f(E)}{E}-\frac{(1+i\sqrt{3}) \delta_1(E)}{E}+\frac{(-1+i\sqrt{3}) g(E)}{E \delta_1(E)}+24}\right)  \right) , 
 \end{align*} 
 where \begin{align*}
    f(E) = &~ 2 \left(b_3^2-8 E b_3-8 (b_1+b_2-2 E) E\right),  \\
    g(E) = &~ b_3^4-16 E b_3^3+4 E (2 b_1+2 b_2+24 E+3) b_3^2-32 E^2 (2 b_1+2 b_2+8 E+3) b_3 ,\\
    & +16 E^2  (b_1^2+14 b_2 b_1+8 (E-3) b_1+b_2^2+16 E^2+8 b_2 (E-3)+12 E+15 ), \\
    \delta_1(E) = &~ \sqrt[3]{\rho_1(E)+\rho_2(E)}, 
 \end{align*}  
 with\begin{align*}
    \rho_1(E) =&~ b_3^6-24 E b_3^5+240 E^2 b_3^4+12 b_1 E b_3^4+12 b_2 E b_3^4+18 E b_3^4-1280 E^3 b_3^3 \\
    &-192 b_1 E^2 b_3^3-192 b_2 E^2 b_3^3-288 E^2 b_3^3+3840 E^4 b_3^2+1152 b_1 E^3 b_3^2+1152 b_2 E^3 b_3^2\\
    &+1728 E^3 b_3^2+48 b_1^2 E^2 b_3^2+48 b_2^2 E^2 b_3^2-288 b_1 E^2 b_3^2 -  480 b_1 b_2 E^2 b_3^2\\
    &-288 b_2 E^2 b_3^2+360 E^2 b_3^2-6144 E^5 b_3-3072 b_1 E^4 b_3-3072 b_2 E^4 b_3-4608 E^4 b_3 \\
    &-384 b_1^2 E^3 b_3-384 b_2^2 E^3 b_3+2304 b_1 E^3 b_3+3840 b_1 b_2 E^3 b_3+2304 b_2 E^3 b_3-2880 E^3 b_3 \\
    & +4096 E^6+3072 b_1 E^5+3072 b_2 E^5+4608 E^5+768 b_1^2 E^4+768 b_2^2 E^4-4608 b_1 E^4\\
    &-7680 b_1 b_2 E^4-4608 b_2 E^4 +5760 E^4+64 b_1^3 E^3+64 b_2^3 E^3+3744 b_1^2 E^3-2112 b_1 b_2^2 E^3\\
    & +3744 b_2^2 E^3-8064 b_1 E^3-2112 b_1^2 b_2 E^3+10944 b_1 b_2 E^3-8064 b_2 E^3+3456 E^3; \\
  \rho_2(E) =&~ \frac{128}{2043}    \left\{\left[- \left(b_3^2-8 E b_3-8 (b_1+b_2-2 E) E\right)^2-12 E  \left((2 b_1+2 b_2+1) b_3^2\right.\right.\right.\\
  & \left.\left.-8 (2 b_1+2 b_2+1) E b_3 -4 E \left(b_1^2-2 (b_2+4 E-4) b_1+b_2^2+8 b_2-8 b_2 E-4 E-5\right) \right) \right]^3\\
  & +262144  \left[b_3^6-24 E b_3^5 
  +6 E (2 b_1+2 b_2+40 E+3) b_3^4-32 E^2 (6 b_1+6 b_2+40 E+9) b_3^3\right.\\
  & +24 E^2 \left(2 b_1^2-4 (5 b_2-12 E+3) b_1+2 b_2^2
   +160 E^2+72 E  +12 b_2 (4 E-1)+15\right) b_3^2 \\
  & -192 E^3 \left(2 b_1^2-4 (5 b_2-4 E+3) b_1+2 b_2^2+32 E^2+24 E
    +4 b_2 (4 E-3)  +15\right) b_3 \\
  & +32 E^3 \left(2 b_1^3+(-66 b_2+24 E+117) b_1^2 -6 \left(11 b_2^2+(40 E-57) b_2-16 E^2+24 E+42\right) b_1 \right.\\
& \left.\left.\left.+2 b_2^3+3 b_2^2 (8 E+39)+12 b_2 \left(8 E^2-12 E-21\right) +4 \left(32 E^3+36 E^2+45 E+27\right)\right) \right]^2 \right\}^\frac{1}{2}. 
 \end{align*}
 
Imposing the constraints $\eqref{eq:constraint}$ which give the $(p+1)$-dimensional unirreps of thedeformed oscillator algebra, we determine the constant $\eta$ and obtain the following algebraic equations for the energies $E$: 
\vskip.1in
{\bf 1.} $\eta_1(E) = \frac{1}{12} \left(6-\sqrt{3} \sqrt{\frac{\delta_1(E) +(b_3-4 E)^2}{E} -8 (b_1+ b_2)+\frac{g(E)}{E \delta_1(E)}+12}\right)$. This $\eta$ value gives five sets of algebraic equations,
\begin{align*}
  & \delta_1(E)+(b_3-4 E)^2+\frac{g(E)}{\delta_1(E)}=\left(12p(p+2)+8(b_1+b_2)\right) E, \\
  & \eta_1(E)-   \frac{1}{24} \left(12+ \epsilon \sqrt{6} \sqrt{\frac{f(E)}{E}+\frac{(-1+i\sqrt{3})  \delta_1(E)}{E}-\frac{(1+i\sqrt{3}) g(E)}{E \delta_1(E)}+24}\right)   + p +1 = 0, \\
  & \eta_1(E)-   \frac{1}{24} \left(12+ \epsilon\sqrt{6} \sqrt{\frac{f(E)}{E}-\frac{(1+i\sqrt{3}) \delta_1(E)}{E}+\frac{(-1+i\sqrt{3})g(E)}{E \delta_1(E)}+24}\right)  + p +1 = 0,
\end{align*}  
where $\epsilon =\pm 1$. Real solutions to each algebraic equation above give the energies of the system.
\vskip.1in
{\bf 2.} $\eta_2(E) = \frac{1}{24} \left(12-\sqrt{6} \sqrt{\frac{f(E)}{E}+\frac{i \left(i+\sqrt{3}\right)  \delta_1(E)}{E}-\frac{i \left(-i+\sqrt{3}\right) g(E)}{E \delta_1}+24}\right) $ and energy spectra from the real solutions of the three sets of algebraic equations
\begin{align*}
  &f(E)+i (i+\sqrt{3})\delta_1(E)-\frac{i \left(-i+\sqrt{3}\right)g(E)}{ \delta_1(E)}= 24p(p+2)E, \\
  & \eta_2(E) - \frac{1}{24} \left( 12 + \epsilon \sqrt{6} \sqrt{\frac{f(E)}{E}-\frac{i \left(-i+\sqrt{3}\right) \delta_1(E)}{E}+\frac{i \left(i+\sqrt{3}\right) g(E)}{E \delta_1(E)}+24} \right) + p+ 1 = 0,
\end{align*} where again $\epsilon=\pm 1$. 
 \vskip.1in
{\bf 3.} $\eta_3(E) =  \frac{1}{24} \left(12-\sqrt{6} \sqrt{\frac{f(E)}{E}-\frac{(1+i\sqrt{3}) \delta_1(E)}{E}+\frac{(-1+i\sqrt{3})g(E)}{E \delta_1(E)}+24}\right)$. This $\eta$ yields the algebraic equation whose real solutions gives other possible energies of the system,
\begin{align*}
   f(E)- (1+i\sqrt{3})\delta_1(E)+\frac{(-1+i\sqrt{3})g(E)}{ \delta_1(E)}= 24p(p+2)E. 
\end{align*}  

It is in general very difficult to solve the above algebraic equations for $E$ analytically  due to their complicated forms.
To demonstrate the existence of real solutions to the above algebraic equations, we have a closer look at cases of restricted model parameter space. As an example, we consider  $b_1=b_2=b_3=h$ for any $h\in\mathbb{R}$. In this case the structure function reduces to  \begin{align*}
    \Phi_2^{(III)}(z,\eta) = & \left( z + \eta - \frac{1}{2} \right)^2\left( z + \eta -  \frac{1}{8} \left(4-\sqrt{\frac{2 h_2(E)-2 h_1(E) }{E}}\right)\right)\\
    &\qquad \left( z + \eta - \frac{1}{8} \left(4+ \sqrt{\frac{2h_2(E)-2h_1(E) }{E}}\right) \right) \\
    &\left( z + \eta - \frac{1}{8} \left(4- \sqrt{\frac{2h_2(E)+2h_1(E) }{E}}\right) \right)\left( z + \eta - \frac{1}{8} \left(4+ \sqrt{\frac{2h_2(E)+2h_1(E)}{E}}\right) \right),
\end{align*} where \begin{align*}
   &  h_1(E) =\sqrt{h^4+16 h^3 E+8 h^2 E (12 E+1)+64 h E^2 (4 E-15)+16 E^2 \left(16 E^2+8 E+17\right)}, \\
   & h_2(E) = h^2- 24 h E+16 E^2+12 E .
\end{align*} 
Imposing the constraints (\ref{eq:constraint}), we determine the constant $\eta$ and the corresponding energies for the model parameters $b_1=b_2=b_3=h$ as follows.
\vskip.1in
{\bf a.} $\eta_1 = \frac{1}{2}$ and 
\begin{align*}
  &  E_{1,\pm} = \frac{4 h^2+12 h p^2+24 h p+15 h+8 p^4+32 p^3+42 p^2+20 p+1\pm 
 m(h,p)}{4 \left(4 h+4 p^2+8 p+3\right)}, 
\end{align*} where 
$$m(h,p) = \sqrt{\left(4 h^2+3 h \left(4 p^2+8 p+5\right)+8 p^4+32 p^3+42 p^2+20 p+1\right)^2-h^2 \left(4 h+4 p^2+8 p+3\right)^2}.$$ 
It is easy to check that $m(p)$ is real for any $p\in\mathbb{N}^+$ if $h>0$ and so $E_{1,\pm}$ give the energies of the system for the model parameters $b_1=b_2=b_3=h>0$. 


\vskip.1in
{\bf b.}  $\eta_2(E) = \frac{1}{8} \left(4+ \epsilon \sqrt{\frac{2 h_2(E)-2 h_1(E) }{E}}\right)$ with $\epsilon = \pm 1$. For this $\eta$ value, the energies are 
\begin{align*}
    & E_{2,\pm} = \frac{8 h^2+6 h p^2+12 h p+12 h+p^4+4 p^3+3 p^2-2 p-4\pm n(h,p)}{8 (4 h+p (p+2))}, 
\end{align*}
where $$ n(h,p) = \sqrt{\left(8 h^2+6 h \left(p^2+2 p+2\right)+p^4+4 p^3+3 p^2-2 p-4\right)^2-4 h^2 (4 h+p (p+2))^2}.$$
It can be checked that $n(p)$ is real for $h>1$ and so $E_{2,\pm}$ give the energies of the system for the model parameters $b_1=b_2=b_3=h>1$.

Other possible energies corresponding to $\eta_2(E)$ are
\begin{align*}
    & E_{3,\pm} = \frac{1}{8} \left(4 \sqrt{(1-4 h) p (p+2)}-2 h+4 p^2+8 p+3\pm l(h,p)\right),
\end{align*} where \begin{align*}
    &l(p,h) = 8 p^2 \sqrt{4 z(h,p)(16p+7)-4 h \left(4 z(h,p)+20 p^2+40 p+3\right)+16 p^4+64 p^3+80p+22},\\
   &z(p,h)=\sqrt{(1-4 h) p (p+2)}.
\end{align*} 
It is easily seen that $l(p)$ and $z(p)$ are real for $h<0$.
Hence, $E_{3,\pm}$ are real for $h < 0$ and give the energies of the system for model parameters $b_1=b_2=b_3=h<0$.

\subsubsection{Potential $ V_3(\mu,\nu)$}

The constants of motion of the superintegrable system with potential $V_3$ and the Hamiltonian $\hat{\mathcal{H}} = \frac{\mu^2 \partial_\mu^2 - \nu^2 \partial_\nu^2}{(\mu + \nu)(2 + \mu -\nu)} + V_3(\mu,\nu)$ are given by $\cite{MR2023556}$, \begin{align*}
        & A = -\frac{4 \mu^2 \nu^2 \left(\partial_\mu+\partial_\nu\right)^2}{  (\mu+\nu)^2}  - c_2 \frac{\mu - \nu}{\mu \nu} - c_3 \frac{(\mu - \nu)^2}{\mu^2 \nu^2}, \\
        & B =\frac{\nu^2 (\mu+2)\mu \partial_\nu^2 -\mu^2(\nu-2)\nu\partial_\mu^2}{(\mu + \nu)(2 + \mu - \nu)} -\frac{4 \mu^2 \nu^2 \left(\partial_\mu+\partial_\nu\right)^2}{  (\mu+\nu)^2}  - \frac{c_1 \mu^2 \nu^2 + c_2 \mu \nu + 2 c_3(1 + \mu - \nu) }{\mu \nu (2 + \mu - \nu)} .
\end{align*} These integrals form the quadratic algebra with the commutation relations  \begin{align*}
    &[A,B] = C, \\
     &[A,C] = -2 \{A,B\} - B  -2 c_1 c_2 + 4 c_2 \hat{\mathcal{H}}, \\
     & [B,C] = 2 B^2 - 8 c_3 \hat{\mathcal{H}}.
\end{align*} The Casimir operator for the quadratic algebra is given by \begin{align*}
        K_3 = C^2 + 2 \{A,B^2\} -16 c_3 \hat{\mathcal{H}} A + 5 B^2 + 4 c_2\left(c_1 - 2  \hat{\mathcal{H}}\right) B,
\end{align*}
which can be expressed in terms of the Hamiltonian as 
$$ K_3 =  16 c_3 \hat{\mathcal{H}}^2 + 4(c_2^2 - 4c_1c_3)\hat{\mathcal{H}} + 4c_1^2 c_3 .$$     

It can be shown that
\begin{align*}
    &A  =   (\mathcal{N} + \eta)^2 - \frac{1}{2}   , \\
    &B=  - \frac{  c_1 c_2 - 2 c_2 \hat{\mathcal{H}}}{ 2\left((\mathcal{N} + \eta)^2 - \frac{1}{4}   \right)} + b^\dagger \rho(\mathcal{N}) + \rho(\mathcal{N}) b,    
\end{align*} where $\eta$ is a constant to be determined and
$$\rho(\mathcal{N}) = \frac{1}{3\cdot 2^{20}  (\mathcal{N} +\eta)(1+ \mathcal{N} + \eta )(1 + 2(\mathcal{N} + \eta))^2},$$ 
convert the quadratic algebra into the deformed oscillator algebra with the structure function 
\begin{align*}
    & \Phi_3^{(III)}(\mathcal{N},\eta) =-786432 \left(-c_1^2+4 c_1 \hat{\mathcal{H}}+\hat{\mathcal{H}} \left(2\mathcal{N}+ 2\eta-1\right)^2-4 \hat{\mathcal{H}}^2\right) \left(c_2^2-c_3 (2\mathcal{N}+ 2\eta-1)^2\right).
 \end{align*} By acting it on a Fock basis $\vert z,E \rangle,$  the structure function becomes  \begin{align*}
    \Phi_3^{(III)} (z ,\eta) = &~ 12582912\, {c_3}\,E  \,\left(z + \eta -\left(\frac{1}{2}-\frac{c_2}{2\sqrt{c_3}}\right)\right)\left(z + \eta -\left(\frac{1}{2}+\frac{c_2}{2\sqrt{c_3}}\right)\right)    \\
    &    \left(z + \eta -\frac{E-\sqrt{E  }(c_1-2 E)}{2 E}\right) \left(z + \eta- \frac{E+\sqrt{E } (c_1-2 E)}{2 E}\right).
\end{align*} 

Imposing the constraint conditions $\eqref{eq:constraint}$, we obtain
\vskip.1in
{\bf 1.} $\eta (E)= \frac{E-\sqrt{E } (c_1-2 E)}{2 E}$ or $\eta (E)= \frac{E+\sqrt{E } (c_1-2 E)}{2 E}$. For both cases, we have
$$ E =\frac{1}{8} \left(4 c_1+(p+1)^2 \pm  (p+1) \sqrt{8 c_1+(p+1)^2}\right), $$which is real for $c_1>0$. This gives the energy spectrum of the system for any model parameters $c_1, c_2, c_3$ with $c_1>0$.  

\vskip.1in
{\bf 2.}  $\eta_\epsilon =  \frac{1}{2} \left(1+\epsilon\,\frac{c_2}{ \sqrt{c_3}}\right)$,  where $\epsilon = \pm 1.$ The corresponding energies are given by 
\begin{align*}
    E_\epsilon  =   \frac{1}{8} \left[4c_1+\left(2(p+1)+\epsilon\frac{c_2}{\sqrt{c_3}}\right)^2\pm \left(2(p+1)+\epsilon\frac{c_2}{\sqrt{c_3}}\right)\sqrt{8c_1+\left(2(p+1)+\epsilon\frac{c_2}{\sqrt{c_3}}\right)^2}\right],
\end{align*} which is real for $c_1>0$ and $c_3>0$.  


\subsubsection{Potential $V_4(\mu,\nu)$}

 For a superintegrable system with the Hamiltonian 
 $\hat{\mathcal{H}} = \frac{\mu^2 \partial_\mu^2 - \nu^2 \partial_\nu^2}{(2 + \mu -\nu)(\mu + \nu)} + V_4(\mu,\nu)$ associated to the potential $V_4$,  the constants of motion are given by $\cite{MR2023556}$ \begin{align*}
       A &= \frac{\nu^2 (\mu+2)\mu \partial_\nu^2 -\mu^2(\nu-2)\nu\partial_\mu^2}{(\mu + \nu)(2 + \mu - \nu)} - \frac{\mu \nu \left(d_1(\nu-2) + d_2(\mu+2) + 2d_3(\nu-\mu + \mu \nu) \right)}{(\mu + \nu)(2 + \mu - \nu)}, \\
      B &= \frac{1}{4 \mu  \nu  (\mu -\nu+2) (\mu+\nu)^2}\left[\left( \mu^ 4 (12 \nu^3-12 \nu^2+\nu +1)+2 \mu^3  \nu -(\nu -1) \mu^2\nu ^2\right) \partial_\mu^2  \right.  \\
     &  +\mu  \nu  (\mu -\nu+2) \left(  \mu^ 2 (12 \nu ^2+1)+2 \mu  \nu +\nu ^2\right) \partial_\mu  \partial_\nu \\
     & \left.+\nu ^2 \left(  \mu^3 (12 \nu ^2-1)+  \mu^2 (12 \nu ^2-1)+\mu (\nu -2) \nu -\nu ^2\right) \partial_\nu^2\right] \\
     & - \frac{(\mu-\nu) \left[(\mu-\nu)(d_1 \mu + d_2 \nu) - 2d_3(\mu^2 + \nu^2 + \mu\nu(2 + \mu -\nu)) \right]}{4 \mu \nu (\mu+ \nu)(2 + \mu -\nu)}.
 \end{align*} They satisfy the quadratic algebra relations
\begin{align*}
     & [A,B] = C , \\
     &[B,C] = - 2 B^2 + 2 \hat{\mathcal{H}} B - \frac{d_3^2}{2}, \\
     & [A,C] = 2 \{A,B\} - 2 \hat{\mathcal{H}} A -B + (d_1 + d_2 + \frac{1}{2} ) \hat{\mathcal{H}} - \frac{d_1 d_2}{2} - 2 \hat{\mathcal{H}}^2.
 \end{align*} 
 By a direct calculation, we find the Casimir operator of this algebra
\begin{align*}
     K_4 = C^2 - 2 \{A,B^2\} + 5 B^2 + 2 \hat{\mathcal{H}}\{A,B\} - d_3^2 A + \left(4 \hat{\mathcal{H}}  - (2 d_1 + 2 d_2 + 5) \hat{\mathcal{H}} +d_1 d_2 \right) B.
 \end{align*} By means of the differential operator representation of $A$ and $B$ above, the Casimir operator $K_4$ can be expressed in terms of $\hat{\mathcal{H}}$ as 
\begin{align*}
     K_4 = 4 \hat{\mathcal{H}}^3 - (2 d_1 + 2 d_2 + 1) \hat{\mathcal{H}}^2 + \left(\frac{(d_1 + d_2)^2}{4} + d_3(d_2 - d_1)\right) \hat{\mathcal{H}} - \frac{d_3 (d_3 - d_1^2 + d_2^2)}{4}.
 \end{align*}   
 We can show that 
\begin{align*}
   A  = &  (\mathcal{N} + \eta)^2 , \\
   B =   &\frac{ \hat{\mathcal{H}} }{2  } - \frac{   -4 \hat{\mathcal{H}}   + 8 (d_1 + d_2 + \frac{1}{2} ) \hat{\mathcal{H}} - 4 d_1 d_2 }{32\left((\mathcal{N} + \eta)^2 - \frac{1}{4}   \right)} + \rho(\mathcal{N})b^\dagger + b \rho(\mathcal{N})  
 \end{align*} with $$\rho(\mathcal{N}) = \frac{1}{3\cdot 2^{20}   (\mathcal{N} + \eta)(1+ \mathcal{N} + \eta )(1 + 2(\mathcal{N} + \eta))^2} , $$
give a realization of the quadratic algebra in terms of the deformed oscillator algebra,   with the structure function given by
  \begin{align*}
    \Phi_4^{(III)} (\mathcal{N},\eta)  = & 16384 (1-2 (\mathcal{N}+\eta))^2  \left(3 \left(d_3 \left(-d_1^2+d_2^2+d_3\right)  +4 d_3  \hat{\mathcal{H}}  (d_1-d_2) \right.\right.\\
    & \left. +4  \hat{\mathcal{H}}^2 (2 d_1+2 d_2+1)- \hat{\mathcal{H}}  (d_1+d_2)^2-16  \hat{\mathcal{H}}^3\right)+6 d_1  \hat{\mathcal{H}}  (d_2-2  \hat{\mathcal{H}} ) \\
    & \left.-4  \hat{\mathcal{H}}^2 (3 d_2-6  \hat{\mathcal{H}} +2)+\left( \hat{\mathcal{H}}^2-d_3^2\right) \left(12 (\mathcal{N}+\eta)^2-12 (\mathcal{N}+\eta)-1\right)-7 d_3^2 \right).
 \end{align*}  
 Here $\eta$ is a constant parameter to be determined. Acting on the Fock states $|z,E\rangle$, the structure function is factorized as
 \begin{align*}
     \Phi_3^{(III)} (z,\eta) = \left(z + \eta - \frac{1}{2} \right)^2 \left(z + \eta -\left(\frac{1}{2}-\frac{\gamma_1(E)}{2 \left(d_3^2-E^2\right)}\right)\right)\left(z + \eta - \left(\frac{1}{2}+\frac{\gamma_1(E)}{2 \left(d_3^2-E^2\right)}\right)\right),
 \end{align*} where $$\gamma_1(E) = \sqrt{d_3^2-E^2}\times\sqrt{-d_1^2 (d_3+E)+4 d_1 E (d_3+E)+d_2^2 (d_3-E)+4 d_2 E (E-d_3)-8 E^3} .$$ 
 From the constraints $\eqref{eq:constraint}$, we determine the constant $\eta$ and the energy spectrum $E$ of the system. We list the results as follows.
 \vskip.1in
 {\bf Case 1}:  $ \eta(E) = \frac{1}{2}-\frac{\gamma_1(E)}{2 \left(d_3^2-E^2\right)}$. Corresponding to this $\eta$ value, we have either
\begin{align}
&\sqrt{ -d_1^2 (d_3+E)+4 d_1 E (d_3+E)+d_2^2 (d_3-E)+4 d_2 E (E-d_3)-8 E^3 }= (p+1)\sqrt{\left(d_3^2-E^2\right)}\label{eq:2.3.4-case1a}
\end{align}
or
\begin{align}
&\sqrt{ -d_1^2 (d_3+E)+4 d_1 E (d_3+E)+d_2^2 (d_3-E)+4 d_2 E (E-d_3)-8 E^3 }= 2(p+1)\sqrt{\left(d_3^2-E^2\right)}. \label{eq:2.3.4-case1b}
 \end{align} 
 Notice that obviously the solution space of the algebraic equation (\ref{eq:2.3.4-case1b}) is subspace of that of (\ref{eq:2.3.4-case1a}) and so the energy spectrum of the system corresponding to $\eta_1(E)$ is given by solutions to (\ref{eq:2.3.4-case1a}). 
 
 \vskip.1in
 {\bf Case 2:} $\eta = \frac{1}{2}$. In this case, $E$ satisfies the same algebraic equation as (\ref{eq:2.3.4-case1b}) and so do not give new energies of the system. 
    
\vskip.1in 
 {\bf Case 3:} $\eta(E)   = \left(\frac{1}{2}+\frac{\gamma_1(E)}{2 \left(d_3^2-E^2\right)}\right) $. This $\eta$ value give the same equations for $E$ as those in Case 1 above.
 
Due to the complexity of the algebraic equations, it is hard to see whether or not they lead to real energies $E$ for general model parameters. However, we can show that when the model parameter $d_3=0$, the structure function reduces to
\begin{align*}
       \Phi_3^{(III)} (z,\eta) & = \left(z + \eta - \frac{1}{2} \right)^2 \left(z + \eta - \frac{1}{2 E}\left(E-\sqrt{E \left(d_1^2-4 d_1 E+d_2^2-4 d_2 E+8 E^2\right)}\right)\right) \\
       & \left(z + \eta - \frac{1}{2 E}\left(E+\sqrt{E \left(d_1^2-4 d_1 E+d_2^2-4 d_2 E+8 E^2\right)}\right)\right).
\end{align*} In this case, by imposing the constraints $\eqref{eq:constraint}$ we obtain the parameter $\eta$ and the energies of the system with model parameter $d_3=0$,
\begin{align*}
  &\eta_-(E) = \frac{1}{2E}\left(E- \sqrt{E \left(d_1^2-4 d_1 E+d_2^2-4 d_2 E+8 E^2\right)}\right),\\
  &E_\pm =   \frac{1}{16} \left(4 d_1+4 d_2+(p+1)^2 \pm \sqrt{(p+1)^2\left((p+1)^2+8(d_1+d_2)\right)   -16 \left(d_1-d_2\right)^2}\right). 
\end{align*}
Other $\eta$ values from the constraints give rise to same energies as $E_\pm$ above.

It is clear that both $E_\pm$ are real for $d_1=d_2>0$. So $E_\pm$ give the energy spectrum of the system for model parameters $d_1=d_2>0,\,d_3=0$. The corresponding structure function of the $p+1)$-dimensional unirreps is 
$$\Phi^{(III)_{E_\pm}}(z)=z(z-p-1)\left(z-\frac{1}{2E_\pm}\sqrt{E_\pm \left(d_1^2-4 d_1 E_\pm+d_2^2-4 d_2 E_\pm+8 E_\pm^2\right)}\,\right)^2.$$

\subsection{Darboux Space IV}

In Darboux space IV, there are $3$ different potentials in the separable coordinates $(\mu,\nu),\; (u,v)$ and $(\omega,\varphi)$:
\begin{align*}
   & V_1(\mu,\nu) = -\frac{\sin^2 (2\mu)(4a_1 \exp(2\nu) + 4 a_2 \csc^2 (2\mu) + 4a_3 \exp(4\nu) )}{2 \cos 2\mu + a_4}, \\
   & V_2(u,v)=  -\frac{\sin^2 (2u)( \frac{b_2}{\sinh^2 v} + \frac{b_3}{\cosh^2v} ) + b_1}{2 \cos 2u + b_4} , \\
   & V_3(\omega, \varphi) = \frac{\frac{c_1}{\cos^2 \varphi} + \frac{c_2}{\cosh^2 \omega} + c_3 \left(\frac{1}{\sin^2 \varphi} - \frac{1}{\sinh^2 \omega} \right)}{\frac{c_4+2}{\sinh^2 (2 \omega)} + \frac{c_4-2}{\sin^2 (2\varphi)}},
\end{align*}
where $a_i,\,b_i\, c_i$ are real model parameters.

\subsubsection{Potential $V_1(\mu,\nu)$}

The integrals of motion of the superintegrable system in Darboux space IV with potential $V_1$ and the Hamiltonian $\hat{\mathcal{H}} = - \frac{4 \mu^2 \nu^2 }{(a_4 + 2) \mu^2 + (a_4 -2) \nu^2} +V_1(\mu,\nu)$ are 
\begin{align*}
    A =& \mu^2 \partial_\mu^2 + 2 \mu \nu \partial_\mu \partial_\nu + \nu^2 \partial_\nu^2  + \mu \partial_\mu + \nu \partial_\nu + a_1(\mu^2 + \nu^2) + a_3(\mu^2 + \nu^2)^2 ; \\
     B = &~\frac{4(a_4 + 2) \mu^2 \partial_\mu^2 - 4(a_4 -2) \nu^2 \partial_\nu^2}{(a_4 + 2) \mu^2 + (a_4 -2) \nu^2} \\
     & + \frac{2a_1\left( (a_4 + 2) \mu^2 - (a_4 -2) \nu^2 \right) + 4 a_3 \left( (a_4 + 2) \mu^4 - (a_4 -2) \nu^4 \right) + 16 a_2}{(a_4 + 2)\mu^2 + (a_4 -2)\nu^2}. 
\end{align*} They form the quadratic algebra with the commutation relations given by $\cite{MR2023556}$ \begin{align*}
    & [A,B] = C,\\
    & [A,C] = 8 \{A,B\}_a - 16 B + 32 a_1\hat{\mathcal{H}}, \\
    & [B,C] = - 8 B^2 + 256 a_3 A + 128\, a_3 a_4 \hat{\mathcal{H}} + 32 (a_1^2 + 4 a_3 + 16).
\end{align*} By a direct calculation, we find that the Casimir operator is \begin{align*}
    K_1 = C^2 - 8 \{A,B^2\} + 256 \,a_3 A^2 + 80 B^2 + \left(256\, a_3 a_4 \hat{\mathcal{H}} + 64 (16 a_2 a_3 + a_1^2 + 4 a_3)\right) A - 64\, a_1 \hat{\mathcal{H}} B.
\end{align*} With the differential operator representation of $A$ and $B$, the Casimir operator can be expressed in terms of $\hat{\mathcal{H}}$ as  \begin{align*}
    K_1  = -256\, a_3 \hat{\mathcal{H}}^2 + 64\, a_4(4 a_3 - a_1^2) \hat{\mathcal{H}} + 128 (a_1^2 + 4 a_3 + 8 a_2 a_3 - 2 a_1^2 a_2).
\end{align*}

After a long calculation, we find that the change of basis
\begin{align*}
    A & = 4   (\mathcal{N} + \eta)^2  , \\
    B & =    - \frac{  128 a_1 \hat{\mathcal{H}}}{256 \left((\mathcal{N} + \eta)^2 - \frac{1}{4}   \right)} + \rho(\mathcal{N}) b^\dagger + b \rho(\mathcal{N}),
\end{align*} where $$\rho(\mathcal{N})   = \frac{1}{3\cdot 2^{15}   (\mathcal{N} + \eta)(1+ \mathcal{N} + \eta )(1 + 2(\mathcal{N} + \eta))^2} $$ 
maps the quadratic algebra to the deformed oscillator algebra with the structure function
\begin{align*}  
  \Phi_1^{(IV)}(\mathcal{N},\eta)  = &-805306368\, \left(2 (\mathcal{N}+\eta)-1\right)^2 \\
  &\quad\times  \left(128 (-2 a_1^2 a_2+a_1^2+8 a_2 a_3+4 a_3)+64 \hat{\mathcal{H}} a_4  (4 a_3-a_1^2)-256 a_3 \hat{\mathcal{H}}^2\right)\\
&+131072 \left(12 (\mathcal{N}+\eta)^2-12 (\mathcal{N}+\eta)-1\right) (2 (\mathcal{N}+\eta)-1)^2\\
&\quad\times \left(131072 (a_1^2+16 a_2 a_3+4 a_3)+524288\, a_3 a_4 \hat{\mathcal{H}} +1048576\, a_3\right)\\
&-16384\, \left(2 (\mathcal{N}+\eta)-1\right)^2 \left(-7340032 (a_1^2+16 a_2 a_3+4 a_3)+29360128\, a_3 a_4 \hat{\mathcal{H}}-46137344\, a_3\right) \\
& +51539607552\, a_1^2 \hat{\mathcal{H}}^2+206158430208\, a_3\, (2 (\mathcal{N}+\eta)-3)\, (2 (\mathcal{N}+\eta)+1)\, (2 (\mathcal{N}+\eta)-1)^4.
\end{align*} 

Acting on the Fock basis state $\vert z,E \rangle $, we find that the structure function has the factorization 
\begin{align*}
      \Phi_1^{(IV)}(z,\eta) =& \left(z + \eta - \left(\frac{1}{2}-\frac{i a_1}{4 \sqrt{a_3}}\right)\right) \left(z + \eta -\left(\frac{1}{2}+\frac{i a_1}{4 \sqrt{a_3}}\right)\right) \\
    & \left(z + \eta -\frac{1}{2\sqrt{2}} \left(\sqrt{2}-\sqrt{1 -4 a_2 - E a_4-m_1(E)}\right)\right)\\
    & \left(z + \eta -\frac{1}{2\sqrt{2}} \left(\sqrt{2}+\sqrt{1 -4 a_2 -  E a_4-m_1(E)}\right)\right)    \\
   & \left(z + \eta -\frac{1}{2\sqrt{2}} \left(\sqrt{2}-\sqrt{1 -4 a_2 -  E a_4+m_1(E)}\right)\right)\\
   & \left(z + \eta -\frac{1}{2\sqrt{2}} \left(\sqrt{2}+\sqrt{1 -4 a_2 -  E a_4+m_1(E)}\right)\right) ,
\end{align*} 
where
$$m_1(E)=\sqrt{(4 a_2+ E a_4 +1)^2-4 (4 a_2+E (E+a_4 ))}.$$ 
Imposing the constraints $\eqref{eq:constraint}$, we have 
\vskip.1in

{\bf 1.} $\eta_1(E) =\frac{1}{2\sqrt{2}} \left(\sqrt{2} - \sqrt{1 -4 a_2 -  E a_4-m_1(E)} \right)$. This $\eta$ value gives the following two sets of energies and corresponding structure functions
\begin{align*}
   E_{1,\epsilon} = &\frac{1}{2} (p+1)\left( -(p+1)\,a_4+\epsilon \sqrt{ (a_4^2-4)(p+1)^2+16(1-a_2)}\right),\\ 
  \Phi^{(IV)}_{E_{1,\epsilon}} (z) = &z(z-p-1)\left(z - \frac{1}{2\sqrt{2}}\,\sqrt{1 -4 a_2 -  E_{1,\epsilon} a_4-m_1E_{1,\epsilon})}-\frac{i a_1}{ 4 \sqrt{a_3}}\right)\\
  & \left(z - \frac{1}{2\sqrt{2}}\,\sqrt{1 -4 a_2 -  E_{1,\epsilon} a_4-m_1(E_{1,\epsilon})}+\frac{i a_1}{ 4 \sqrt{a_3}}\right)\\
  & \left(z - \frac{1}{2\sqrt{2}}\left(\sqrt{1 -4 a_2 -  E_{1,\epsilon} a_4-m_1(E_{1,\epsilon})} - \sqrt{1 -4 a_2 -  E_{1,\epsilon} a_4 +   m_1(E_{1,\epsilon}}\right)\right)\\
  &\left(z -\frac{1}{2\sqrt{2}}\,\left(\sqrt{1 -4 a_2 +  E_{1,\epsilon} a_4-m_1(E_{1,\epsilon})} + \sqrt{1 -4 a_2 - E_{1,\epsilon} a_4+ m_1(E_{1,\epsilon})}\right) \right),
\end{align*}
\begin{align*}
  E_{2,\epsilon} = &-\frac{1}{a_4+\epsilon\, 4 } \left(4a_2+4 p^2+8 p+3\right), \qquad a_4\neq \pm 4,\\
 \Phi_{E_{2,\epsilon}} (z) = &z(z-p-1) \left(z - \frac{1}{2\sqrt{2}}\,\sqrt{1 -4 a_2 -  E_{2,\epsilon} a_4-m_1E_{2,\epsilon})}-\frac{i a_1}{ 4 \sqrt{a_3}}\right)\\
  & \left(z - \frac{1}{2\sqrt{2}}\,\sqrt{1 -4 a_2 -  E_{2,\epsilon} a_4-m_1(E_{1,\epsilon})}+\frac{i a_1}{ 4 \sqrt{a_3}}\right)\\
   & \left(z - \frac{1}{\sqrt{2}}\left(\sqrt{1 -4 a_2 -  E_{2,\epsilon} a_4-m_1(E_{2,\epsilon})} \right)\right)\\
  &\left(z -\frac{1}{2\sqrt{2}}\,\left(\sqrt{1 -4 a_2 +  E_{2,\epsilon} a_4-m_1(E_{2,\epsilon})} + \epsilon\,\sqrt{1 -4 a_2 - E_{2,\epsilon} a_4+ m_1(E_{2,\epsilon})}\right) \right),
\end{align*} 
where $\epsilon = \pm 1$.  
Notice that the energies $E_{1,\epsilon}$ are real for $a_4^2>4,\,a_2<1$.
\vskip.1in
{\bf 2.} $\eta_2(E) =  \frac{1}{2\sqrt{2}} \left(\sqrt{2}- \sqrt{1 -4 a_2 -  E a_4 +   m_1(E)}\right)$. The energies are the same as those given in case {\bf 1} above

\subsubsection{Potential $ V_2(u,v)$}

The constants of motion of the superintegrable system in Darboux space IV with the Hamiltonian $ \hat{\mathcal{H}} = - \frac{\sin^2(2u) \left( \partial_v^2 + \partial_u^2 \right)}{2 \cos(2 u) + b_4} + V_2(u,v)$ are $\cite{MR2023556}$,  \begin{align*}
   A =&~e^{-2v} \left(\frac{\left(e^{4 v}+1\right) \left(2 b_4 \cos (2 u)+3 \cos (4 u)+1\right)}{2 b_4+4 \cos (2 u)}\partial_v^2-\sin (2 u) \frac{\left(e^{4 v}+1\right) \sin (2 u) }{b_4+2 \cos (2 u)}\partial_u^2 \right) \\
    & +e^{-2v}\left(\sin (2 u)\left(e^{4 v}+1\right) \partial_u +\sin (2 u)\left(e^{4 v}-1\right) \partial_u   \partial_v  +\cos (2 u) \left(e^{4 v}-1\right)  \partial_v \right)\\
    & + \frac{1 }{2 \cos 2u + b_4} \left( 2 b_1 \cosh 2 v + (b_2 + b_3) (4 - b_4^2) + \left(\cos 4u + 2 b_4 \cos 2u +3\right) \left( \frac{b_2}{\sinh^2 v} - \frac{b_3}{\cosh^2v} \right)\right), \\
    B = &~\partial_v^2 + \frac{b_2}{\sinh^2 v} + \frac{b_3}{\cosh^2 v}.
\end{align*} These integrals generates the quadratic algebra the commutation relations as follows \begin{align*}
   [A,B] = &C, \\
   [A,C] = &8 \{A,B\} + 16 b_4(b_2 + b_3) A - 16 B +  32 (b_1 + b_3) \hat{\mathcal{H}} - 16  b_4 (b_2 + b_3), \\
   [B,C] = &8 B^2 + 96 A^2 + \left(  64 b_4 \hat{\mathcal{H}} + (2b_2 - 2 b_3 + b_1 + 3)\right) A + 32 \hat{\mathcal{H}}^2 + 32 b_4 (2b_2 - 2b_3 + 1) \hat{\mathcal{H}} \\
  & + 64 b_1(b_2 - b_3) - 8 (b_4^2 - 4 )(b_2 + b_3)^2 + 32 (b_1 + 2 b_2 - 2 b_3).
\end{align*} By a direct computation, we find the Casimir operator of the algbebra,
\begin{align*}
  K_2 & = C^2 + 64 A^3 - 8 \{A,B^2\} - 16 b_4(b_2 + b_3) \{A,B\} + 64 \left( b_4 \hat{\mathcal{H}} +2b_2 - 2b_3 + b_1 +7 \right) A^2\\
  & + \left(160b_4(b_2 + b_3) - 64 (b_2 + b_3) \hat{\mathcal{H}}\right) B - 64 b_4(2 b_3 -2 b_2 -1) \hat{\mathcal{H}}A \\
  &- 16 \left((b_4^2 -4)(b_2 + b_3)^2 + 8(b_1 + 1)(b_3 - b_2) - 4b_1 + 32 \right)A + 64\hat{\mathcal{H}}^2  A .
\end{align*} With the differential realization of $A$ and $B,$ the Casimir $K_2$ is expressible in terms of $\hat{\mathcal{H}}$ as follows \begin{align*}
    K_2 = &128(b_3 - b_2 + 1) \hat{\mathcal{H}}^2 + 128 b_4 (b_2 - b_3 + 1) \hat{\mathcal{H}} \\
    &+ (128 - 80 b_4^2 - 64 b_1)(b_2 + b_3)^2 - 128(b_1 + 2) (b_3 - b_2 - 1) - 256.
\end{align*} 

After a long computation, we find that the realization
\begin{align*}
  A & = - 4 \left((\mathcal{N} +\eta)^2 - \frac{1}{2} \right), \\
  B & =     \frac{     b_4 (b_2 + b_3) }{8} - \frac{  - 32 \cdot 16   b_4 (b_2 + b_3)   - 256 \cdot (b_1 + 2 b_2 - 2 b_3) }{4\gamma^3\left((\mathcal{N} +\eta)^2 - \frac{1}{4}   \right)} + \rho(\mathcal{N}) b^\dagger + b \rho(\mathcal{N}),  
\end{align*} 
where $$     \rho(\mathcal{N}) = \frac{1}{3\cdot 2^{12} \cdot (-8)^8 (\mathcal{N} + \eta)(1+ \mathcal{N} + \eta )(1 + 2(\mathcal{N} + \eta))^2},$$
converts the quadratic algebra into the deformed oscillator algebra with the structure function
\begin{align*}
  \Phi_2^{(IV)}(\mathcal{N},\eta) =&268435456  \left\{ (2 \mathcal{N}+2\eta-1)^2 \left[48(5 a_2+4 b_1-8) (b_2+b_3)^2\right.\right.\\
  &~ \left.+384\left( (b_1+2) (b_2-b_3+1)+ \hat{\mathcal{H}}^2 (-b_2+b_3+1)+ \hat{\mathcal{H}} b_4  (b_2-b_3+1)-2\right)\right]\\
  &+64\, (2\mathcal{N}+ 2\eta-1)^2 \left(12(\mathcal{N}+\eta)^2-12 (\mathcal{N}+\eta-)-1\right) \\
  &~\times\left(b_1 (2 b_2-2 b_3+3) +b_2^2+2 b_2 (b_3+\hat{\mathcal{H}} b_4 +3)+b_3^2-2 b_3 \hat{\mathcal{H}} b_4 -6 b_3+\hat{\mathcal{H}}^2+3 \hat{\mathcal{H}} b_4 +9\right) \\
 & + (2\mathcal{N}+ 2\eta+1) (2\mathcal{N}+ 2\eta-1)^6 (b_1+2 b_2-2 b_3+\hat{\mathcal{H}} b_4 +6) \\
 &~\times\left[986\, b_1 (b_2-b_3)+448 \left(b_1+2 b_2-2 b_3+ \hat{\mathcal{H}} b_4  (2 b_2-2 b_3+1)\right)\right.\\
 &\quad-112\left(b_4^2-4\right) (b_2+b_3)^2+448 \hat{\mathcal{H}}^2+ 96\, b_4  (b_2+b_3)\, \left(2 \hat{\mathcal{H}} (b_1+b_3)- b_4  (b_2+b_3)\right)\\
 &\quad\left. +704 \left(b_1+2 b_2-2 b_3+\hat{\mathcal{H}} b_4 +3\right)-32 b_4^2 (b_2+b_3)^2+192\right] \\
  & \left. +192 \left(2\mathcal{N}+ 2\eta-3 + \hat{\mathcal{H}}^2 (b_1+b_3)^2- (2 \mathcal{N}+2\eta-1)^4 \left(4 (\mathcal{N}+\eta)^2-4 (\mathcal{N}+\eta)-3\right)^2 \right)\right\}.
\end{align*}  
Acting on Fock basis states $\vert z, E \rangle,$ the structure function is factorized as follows
\begin{align*}
    \Phi_2^{(IV)} (z,\eta)  =& \left(z +\eta -\frac{1}{2\sqrt{2}}  \left(\sqrt{2}-\sqrt{1-2b_2+2b_3- \sqrt{(1-4b_2)(1+4b_3)}}\right)\right) \\
    &\left(z +\eta -\frac{1}{2\sqrt{2}} \left(\sqrt{2}+\sqrt{1-2b_2+2b_3- \sqrt{(1-4b_2)(1+4b_3)}}\right)\right) \\
     &\left(z +\eta -\frac{1}{2\sqrt{2}} \left(\sqrt{2}-\sqrt{ 1-2b_2+2b_3+ \sqrt{(1-4b_2)(1+4b_3)}}\right)\right)\\
    &\left(z +\eta -\frac{1}{2\sqrt{2}} \left(\sqrt{2}+\sqrt{ 1-2b_2+2b_3+ \sqrt{(1-4b_2)(1+4b_3)}}\right)\right)  \\
   &  \left(z +\eta -\frac{1}{2 \sqrt{2}} \left(\sqrt{2}-\sqrt{1-b_1-E b_4-m_2(E)}\right)\right)  \\
   & \left(z +\eta -\frac{1}{2 \sqrt{2}} \left(\sqrt{2}+\sqrt{1-b_1-E b_4-m_2(E)} \right)\right)   \\
   &  \left(z +\eta -\frac{1}{2 \sqrt{2}} \left(\sqrt{2}- \sqrt{1-b_1-E b_4+m_2(E)}\right)\right) \\
   &\left(z +\eta -\frac{1}{2 \sqrt{2}} \left(\sqrt{2}+ \sqrt{1-b_1-E b_4+m_2(E)}\right)\right), 
\end{align*}
where $$m_2(E)=\sqrt{(1+b_1+ E b_4 )^2-4 \left(b_1+E^2+E b_4 \right)}.$$

Imposing the constraint conditions $\eqref{eq:constraint}$, we determine the parameter $\eta$ and the corresponding energies of the system. We find
\vskip.1in
{\bf Case 1:} $\eta_{1,\pm}=\frac{1}{2\sqrt{2}}\left(\sqrt{2}-\sqrt{1-2b_2+2b_3\pm \sqrt{(1-4b_2)(1+4b_3)}}\right)$ and the energies $E$ satisfy
\begin{align*}
2\sqrt{2}(p+1)-\,n_{2,\pm}=\sqrt{1-b_1-E b_4+m_2(E)},
\end{align*}
where
$$n_{2,\pm}=\sqrt{1-2b_2+2b_3\pm \sqrt{(1-4b_2)(1+4b_3)}}.$$
Noticing $(1-4b_2)(1+4b_3)=(1-2b_2+2b_3)^2-(2b_2+2b_3)^2\leq (1-2b_2+2b_3)^2$, we conclude that both $n_{2,\pm}$ are real if $b_2< \frac{1}{4},\, b_3>-\frac{1}{4}$.

Solving the algebraic equations give the energies of the system and the corresponding structure functions of the $(p+1)$-dimensional unirreps of the algebra. We have
\begin{align*}
 E_{1\pm, \epsilon}=& -\frac{b_4}{4} \left(2\sqrt{2}(p+1)-\,n_{2,\pm}\right)^2\\
 & +\epsilon \left(2\sqrt{2}(p+1)-\, n_{2,\pm}\right)
 \sqrt{8(1-b_1)+(b_4^2-4)\left(2\sqrt{2}(p+1)-\, n_{2,\pm}\right)^2},\\
\Phi^{(IV)}_{E_{1\pm,\epsilon}}(z,\eta)=& z\,(z-p-1)\,\left(z-\frac{1} {\sqrt{2}}n_\pm\right)\left(z-\frac{1}{2\sqrt{2}}(n_\pm-n_\mp)\right)\left(z-\frac{1}{2\sqrt{2}}(n_\pm+n_\mp)\right)\\
 & \left(z-\frac{1}{2\sqrt{2}}\left(n_\pm-\sqrt{1-b_1-E_{1\pm,\epsilon} b_4-m_2(E_{1\pm,\epsilon})}\right)\right)\\
 &\left(z-\frac{1}{2\sqrt{2}}\left(n_\pm+\sqrt{1-b_1-E_{1\pm,\epsilon} b_4-m_2(E_{1\pm,\epsilon})}\right)\right)\\
 &\left(z-\frac{1}{2\sqrt{2}}\left(n_\pm-\sqrt{1-b_1-E_{1\pm,\epsilon} b_4+m_2(E_{1\pm,\epsilon})}\right)\right),
\end{align*}
where  $\epsilon=\pm 1$.
The energies $E_{1\pm, \epsilon}$ are real for the model parameters $b_2< \frac{1}{4},\, b_3>-\frac{1}{4},\,b_1<1,\, b_4^2>4$. 

\vskip.1in
{\bf Case 2:} $\eta_{1,\pm}=\frac{1}{2\sqrt{2}}\left(\sqrt{2}+\sqrt{1-2b_2+2b_3\pm \sqrt{(1-4b_2)(1+4b_3)}}\right)$ and the energies $E$ satisfy
\begin{align*}
2\sqrt{2}(p+1)+\,n_{2,\pm}=\sqrt{1-b_1-E b_4+m_2(E)}.
\end{align*}
Solutions of the equations give the energies of the system and the corresponding structre functions of the $(p+1)$-dimensional unirreps of the algebra. We have
\begin{align*}
 E_{2\pm, \epsilon}=& -\frac{b_4}{4} \left(2\sqrt{2}(p+1)+\,n_{2,\pm}\right)^2\\
 & +\epsilon \left(2\sqrt{2}(p+1)+\, n_{2,\pm}\right)
 \sqrt{8(1-b_1)+(b_4^2-4)\left(2\sqrt{2}(p+1)+\, n_{2,\pm}\right)^2},\\
\Phi^{(IV)}_{E_{2\pm,\epsilon}}(z,\eta)=& z\,(z-p-1)\,\left(z+\frac{1} {\sqrt{2}}n_\pm\right)\left(z+\frac{1}{2\sqrt{2}}(n_\pm-n_\mp)\right)\left(z+\frac{1}{2\sqrt{2}}(n_\pm+n_\mp)\right)\\
 & \left(z+\frac{1}{2\sqrt{2}}\left(n_\pm-\sqrt{1-b_1-E_{2\pm,\epsilon} b_4-m_2(E_{2\pm,\epsilon})}\right)\right)\\
 &\left(z+\frac{1}{2\sqrt{2}}\left(n_\pm+\sqrt{1-b_1-E_{2\pm,\epsilon} b_4-m_2(E_{2\pm,\epsilon})}\right)\right)\\
 &\left(z+\frac{1}{2\sqrt{2}}\left(n_\pm+\sqrt{1-b_1-E_{2\pm,\epsilon} b_4+m_2(E_{2\pm,\epsilon})}\right)\right),
\end{align*}
where  $\epsilon=\pm 1$. The energies $E_{2\pm, \epsilon}$ are real for the model parameters $b_2< \frac{1}{4},\, b_3>-\frac{1}{4},\,b_1<1,\, b_4^2>4$.

\vskip.1in
{\bf Case 3:} $\eta_{3,-}(E) = \frac{1}{2 \sqrt{2}} \left(\sqrt{2}-\sqrt{1-b_1-E b_4+ m_2(E)}\right)$ and $$\sqrt{2}(p+1)=\sqrt{1-b_1-E b_4+ m_2(E)}$$  or $$2\sqrt{2}(p+1)=\sqrt{1-b_1-E b_4+ m_2(E)}+\sqrt{1-b_1-E b_4- m_2(E)}.$$
The first algebraic equation gives the energies
$$E_{3,1}=\frac{p+1}{2}\left(-(p+1)b_4\pm \sqrt{4(1-b_1)+(b_4^2-4)(p+1)^2}\right),$$
which is real for $b_1<1,\, b_4^2>4$. The corresponding structure function is
\begin{align*}
\Phi^{(IV)}_{E_{3,1}}(z,\eta)=& z\,(z-p-1)\,\left(z-\frac{1}{2\sqrt{2}}\left(\sqrt{1-b_1-E_{3,1} b_4+m_2(E_{3,1})}-n_-\right)\right)\\
&\left(z-\frac{1}{2\sqrt{2}}\left(\sqrt{1-b_1-E_{3,1} b_4+m_2(E_{3,1})}+n_-\right)\right)\\
&\left(z-\frac{1}{2\sqrt{2}}\left(\sqrt{1-b_1-E_{3,1} b_4+m_2(E_{3,1})}-n_+\right)\right)\\
&\left(z-\frac{1}{2\sqrt{2}}\left(\sqrt{1-b_1-E_{3,1} b_4+m_2(E_{3,1})}+n_+\right)\right)\\
&\left(z-\frac{1}{2\sqrt{2}}\left(\sqrt{1-b_1-E_{3,1} b_4+m_2(E_{3,1})}-\sqrt{1-b_1-E_{3,1} b_4-m_2(E_{3,1})}\right)\right)\\
&\left(z-\frac{1}{2\sqrt{2}}\left(\sqrt{1-b_1-E_{3,1} b_4+m_2(E_{3,1})}+\sqrt{1-b_1-E_{3,1} b_4-m_2(E_{3,1})}\right)\right).
\end{align*}
The second algebraic equation yields the energies
$$E_{3,2}=\frac{1}{2-b_4}\left(4(p+1)^2+b_1-1\right),$$
which is well defined for the model parameter $b_4\neq 2$. 
The associated structure function is given by
\begin{align*}
\Phi^{(IV)}_{E_{3,2}}(z,\eta)=& z\,(z-p-1)\,\left(z-\frac{1}{2\sqrt{2}}\left(\sqrt{1-b_1-E_{3,2} b_4+m_2(E_{3,2})}-n_-\right)\right)\\
&\left(z-\frac{1}{2\sqrt{2}}\left(\sqrt{1-b_1-E_{3,2} b_4+m_2(E_{3,2})}+n_-\right)\right)\\
&\left(z-\frac{1}{2\sqrt{2}}\left(\sqrt{1-b_1-E_{3,2} b_4+m_2(E_{3,2})}-n_+\right)\right)\\
&\left(z-\frac{1}{2\sqrt{2}}\left(\sqrt{1-b_1-E_{3,2} b_4+m_2(E_{3,2})}+n_+\right)\right)\\
&\left(z-\frac{1}{2\sqrt{2}}\left(\sqrt{1-b_1-E_{3,2} b_4+m_2(E_{3,2})}-\sqrt{1-b_1-E_{3,2} b_4-m_2(E_{3,2})}\right)\right)\\
&\left(z-\frac{1}{\sqrt{2}}\sqrt{1-b_1-E_{3,2} b_4+m_2(E_{3,2})}\right).
\end{align*}

\vskip.1in
{\bf Case 4:} $\eta_{4,+} =\frac{1}{2 \sqrt{2}} \left(\sqrt{2}+\sqrt{1-b_1-E b_4-m_2(E)}\right)$. We have the algebraic equation  $$2\sqrt{2}(p+1)=\sqrt{1-b_1-E b_4+ m_2(E)}-\sqrt{1-b_1-E b_4- m_2(E)}.$$
Solving, we obtain
$$E_{4}=\frac{1}{2+b_4}\left(4(p+1)^2+b_1-1\right).$$
This gives the energy spectrum of the system for the model parameter $b_4\neq -2$. The corresponding structure function is 
\begin{align*}
\Phi^{(IV)}_{E_4}(z,\eta)=& z\,(z-p-1)\,\left(z+\frac{1}{2\sqrt{2}}\left(\sqrt{1-b_1-E_{4} b_4-m_2(E_4)}-n_-\right)\right)\\
&\left(z+\frac{1}{2\sqrt{2}}\left(\sqrt{1-b_1-E_{4} b_4-m_2(E_{4})}+n_-\right)\right)\\
&\left(z+\frac{1}{2\sqrt{2}}\left(\sqrt{1-b_1-E_{4} b_4-m_2(E_{4})}-n_+\right)\right)\\
&\left(z+\frac{1}{2\sqrt{2}}\left(\sqrt{1-b_1-E_{4} b_4-m_2(E_{4})}+n_+\right)\right)\\
&\left(z+\frac{1}{\sqrt{2}}\sqrt{1-b_1-E_{4} b_4-m_2(E_{4})}\right)\\
&\left(z+\frac{1}{2\sqrt{2}}\left(\sqrt{1-b_1-E_{4} b_4-m_2(E_{4})}+\sqrt{1-b_1-E_{4} b_4+m_2(E_{4})}\right)\right).
\end{align*}

\subsubsection{Potential $V_3(\omega, \varphi)$}

The constants of motion of the superintegrable system in Darboux space IV with the potential $V_3(\omega,\varphi)$ are $\cite{MR2023556}$ \begin{align*}
   A  =& -2c_4 \frac{\partial_{\varphi}^2 + \partial_{\omega}^2}{\frac{c_4 + 2}{\sinh^2(2\omega)} + \frac{c_4 -2}{\sin^2(2\varphi)}}  + \frac{(c_4 + 2) \sin^2(2 \varphi) \partial_{\varphi}^2 - (c_4 - 2) \sinh^2(2\omega) \partial_{\omega}^2}{(c_4 + 2) \sin^2(2 \varphi) + (c_4 - 2) \sinh^2(2\omega) }  \\
   & + \frac{1}{\frac{c_4 + 2}{\sinh^2(2\omega)} + \frac{c_4 - 2}{\sin^2 \omega}} \left[\frac{c_4 + 2}{\sinh^2(2\omega)} \left(\frac{c_3}{\sin^2\varphi} + \frac{c_1}{\cos^2\varphi} \right)+\frac{c_4 - 2}{\sin^2(2\omega)}\left(\frac{c_3}{\sinh^2\omega} - \frac{c_2}{\cosh^2\omega}\right) \right], \\
    B = &\frac{1}{2}\,  \sin (2\varphi) \sinh (2\omega ) \tan (\varphi-  i\omega ) \tan (\varphi +i \omega )\, \left(\cot (2\varphi) \,\partial_{\omega}^2 +\coth (2\omega )\, \partial_{\varphi}^2\right) \\
   & +\left[-i \cos (2\varphi) \sinh (2\omega ) \sinh \left(\log \left(\frac{\tan (\varphi-  i\omega )}{\tan (\varphi +i \omega )}\right)\right)\partial_{\omega} \right. \\
   & \quad \left.+i \cosh (2\omega ) \sin (2\varphi)  \sinh \left(\log\left(\frac{\tan (\varphi-  i\omega )}{\tan (\varphi + i \omega)}\right)\right)\partial_{\varphi} +2 \cosh \left(\log \left(\frac{\tan (\varphi-  i\omega )}{\tan (\varphi + i\omega)}\right)\right)\right] \\
   &+ \frac{1}{\frac{c_4 + 2}{\sinh^2 2\omega} + \frac{c_4 -2 }{\sin^2 \omega}} \left[ \frac{c_4 + 2}{\sinh^2 2 \omega}\left( c_1 \cosh 2\omega \tan^2 \varphi - c_2 \cos 2\varphi - \frac{c_3 \left(2 \cos^2 \varphi(\sinh^2 \omega - \sin^2 \varphi) \right)+1}{\sin^2 \varphi}\right) \right. \\
   &\quad \left. + \frac{c_4 -2}{\sin^2 2 \varphi} \left( c_2 \cos 2 \varphi \tanh^2 \omega + c_1 \cosh 2 \omega - \frac{c_3 \left( 2 \cosh^2 \omega(\sinh^2 \omega - \sin^2 \varphi) + 1\right)}{\sinh^2 \omega} \right) \right].
\end{align*}    These integrals form the quadratic algebra with the following commutation relations \begin{align*}
   [A,B] =& C,\\
   [A,C] =&  -8 \{A,B\} - 16 B    - 16 (c_1 -c_3)(c_2 - c_3), \\
   [B,C] =& -24 A^2 + 8 B^2 + 16\left(2c_4 \hat{\mathcal{H}} - 2c_1 +2c_2 +3\right)A-16\left[(c_4+ 2)c_1 + (c_4-2)c_2 - c_4 + 64 c_3\right] \hat{\mathcal{H}}  \\
  & - 8 (c_4^2 -4)\hat{\mathcal{H}}^2 - 8 c_1^2 - 8 c_2^2+ 16 c_3^2+32 c_1 c_2 + 48 c_3 (c_1 + c_2)  - 16 (c_1 - c_2),
\end{align*} which is the symmetry algebra of the superintegrable system. We can calculate the Casimir operator of the algebra \begin{align*}
   K_3 = &C^2 - 16 A^3  + 8\{A,B^2\} -16 (2c_2 - 2c_1 - 7) A^2 + 80 B^2 \\
   &- 16\left(c_4^2 - 4  + 2(c_4 +2) c_1 -2(c_4 -2)c_2 + 8 c_3 + 2c_4 \right)\hat{\mathcal{H}} \\
   &+ 16 \left( c_1^2 + c_2^2 - 2 c_3^2 - 6c_3(c_1 + c_2) - 4 c_1c_2 + 2 c_1 - 2c_2 - 8 \right) A   + 32 (c_2 - c_3)(c_1 - c_3)B.
\end{align*}  We can show that in terms of the Hamiltonian the Casimir $K_3$ takes the simple form \begin{align*}
   K_3 = &16(c_4^2 - 4) \hat{\mathcal{H}}^2
 - 16 \left(
(c_4 + 2)((c_1 - c_3)^2 - 2c_1) + (c_4-2)((c_2-c_3)^2 + 2c_2)- 8c_3- 4 c_4 \right) \hat{\mathcal{H}} \\
& - 32(c_1 - c_2)(3c_3^2 - c_1c_2 - c_3(c_1 + c_2))
+ 32(c_1^2 + c_2^2- 4c_3(c_1 + c_2) - 2c_1c_2 + 2c_1 - 2c_2). 
\end{align*}  

After a long computation, we find that the realization \begin{align*}
  A(\mathcal{N}) =-4& (\mathcal{N} + \eta)^2 , \quad  B =  - \frac{        (c_1 -c_3)(c_2 - c_3)}{ 4\left((\mathcal{N} + \eta)^2 - \frac{1}{4}   \right)} + \rho(\mathcal{N}) b + b^\dagger \rho(\mathcal{N})  
\end{align*}
with
$$ \rho(\mathcal{N}) = \frac{1}{3\cdot 2^{12} \cdot (-8)^8 (\mathcal{N} + \eta)(1+ \mathcal{N} + \eta )(1 + 2(\mathcal{N} + \eta))^2},$$
changes the quadratic algebra to the deformed oscillator algebra with the  structure function
\small
\begin{align*}
\Phi_3^{(IV)}(\mathcal{N},\eta) =& 268435456  \left\{\, 16\, \left(12 \mathcal{N}^2+12 \mathcal{N} (2 \eta-1)+12 \eta^2-12 \eta-1\right) (2\mathcal{N}+ 2\eta-1)^2 \right.\\
& \quad\times \left[c_1^2+c_1 \left(-4 c_2-6 c_3+2 \hat{\mathcal{H}} c_4 +4 \hat{\mathcal{H}}-2\right) +c_2^2+c_2 \left(-6 c_3-2 \hat{\mathcal{H}} c_4 +4 \hat{\mathcal{H}}+2\right)\right.\\
&\qquad \quad\left.-2 c_3^2+128 c_3 \hat{\mathcal{H}}+\hat{\mathcal{H}}^2 c_4^2-4 \hat{\mathcal{H}}^2+6 \hat{\mathcal{H}} c_4 +9\right] \\
&+16\, (2\mathcal{N}+ 2\eta-1)^2 \left[7 c_1^2-2 c_1 \left(14 c_2+21 c_3-7 \hat{\mathcal{H}} c_4 -14 \hat{\mathcal{H}}+4\right)+7 c_2^2\right.\\
&\quad \left.+c_2 \left(-42 c_3-14 \hat{\mathcal{H}} (c_4 -2)+8\right)
-14 c_3^2+896 c_3 \hat{\mathcal{H}}+7 \hat{\mathcal{H}}^2 c_4^2-28 \hat{\mathcal{H}}^2+36 \hat{\mathcal{H}} c_4 +36\right]\\
&-3\, (2  \mathcal{N}+ 2 \eta-1)^2 \left[352\, \hat{\mathcal{H}} \left(-c_1+(c_4 -2) (c_2-c_3)^2+2 c_2(c_4-2)-c_3^2-8 c_3-4 c_4 \right)\right.\\
&\left.+32 (c_1-c_2) \left(c_1 (c_2+c_3)+c_3 (c_2-3 c_3)\right)-16 \hat{\mathcal{H}}^2 \left(c_4^2-4\right)\right]+48 (c_1-c_3)^2 (c_2-c_3)^2 \\
& -96 \,(2\mathcal{N}+ 2\eta-3) (2\mathcal{N}+ 2\eta+1) (2\mathcal{N}+ 2\eta-1)^4 (c_1-c_2-\hat{\mathcal{H}} c_4 -3)\\
&\left.+48\, (2 \mathcal{N}+ 2 \eta-1)^4 \left[( 2\mathcal{N}+ 2 \eta-1)^4-8 ( 2\mathcal{N}+ 2 \eta-1)^2+16\right] \right\}.
\end{align*} 

The structure function is a polynomial of degree 8 in $\mathcal{N}$. Acting on the Fock basis states $\vert z, E \rangle$, it becomes a polynomial of degree 8 in $z$. In order to determine the energy spectrum of the superintegrable system, we have to find the finite-dimensional unirreps of the deformed oscillator algebra by solving the constraints. This requires the factorization of the structure function. However, it turns out to be very difficult to factorize the structure function for general model parameters $c_i$ (even using symbolic computation softwares such as  Mathematica). 
In the following we restrict our attention to special model parameters and present analytic and closed-form results for $c_1 = c_2 =1,\,c_3 = c_4 =0.$ 

In this case, we find that the structure function factorizes as 
\begin{align*}
    \Phi_3^{(IV)}(z,\eta) = &\left( z + \eta - \frac{1}{2\sqrt{2}} \left(\sqrt{2}- \sqrt{-\sqrt{4 E^2-12 E+5}-2 E+3}\right)\right) \\
    &\left( z + \eta - \frac{1}{2\sqrt{2}} \left(\sqrt{2}+ \sqrt{-\sqrt{4 E^2-12 E+5}-2 E+3}\right)\right) \\
   & \left( z + \eta -\frac{1}{2\sqrt{2}} \left(\sqrt{2}- \sqrt{\sqrt{4 E^2-12 E+5}-2 E+3}\right) \right) \\
   &\left( z + \eta -\frac{1}{2\sqrt{2}} \left(\sqrt{2} +\sqrt{\sqrt{4 E^2-12 E+5}-2 E+3}\right) \right)\\
    &  \left( z + \eta -\frac{1}{2\sqrt{2}} \left(\sqrt{2}- \sqrt{-\sqrt{4 E^2-4 E-3}+2 E-1}\right) \right) \\
    &\left( z + \eta - \frac{1}{2\sqrt{2}} \left(\sqrt{2}+ \sqrt{-\sqrt{4 E^2-4 E-3}+2 E-1}\right)\right) \\
    &\left( z + \eta -\frac{1}{2\sqrt{2}} \left(\sqrt{2}- \sqrt{\sqrt{4 E^2-4 E-3}+2 E-1}\right) \right) \\
    &\left( z + \eta -\frac{1}{2\sqrt{2}} \left(\sqrt{2}+ \sqrt{\sqrt{4 E^2-4 E-3}+2 E-1}\right) \right).
\end{align*} 
Imposing the constraints $\eqref{eq:constraint}$, we determine the constant $\eta$ and obtain the energies of the system and the structure function of the symmetry algebra for the model parameters $c_1=c_2=1,\, c_3=c_4=0$ as follows.
\vskip.1in
{\bf a.} The constant $\eta$ is $\eta_{a}(E)=\frac{1}{2\sqrt{2}} \left(\sqrt{2}- \sqrt{-\sqrt{4 E^2-12 E+5}-2 E+3}\right)$ and the energies are given by the equation
\begin{align*}
 2\sqrt{2}(p+1)&= \sqrt{\sqrt{4 E^2-12 E+5}-2 E+3}+  \sqrt{-\sqrt{4 E^2-12 E+5}-2 E+3} \\
 &\Longrightarrow \quad E_{a}=-2(p+1)^2+\frac{5}{2}.
 \end{align*}
The associated structure function is 
\begin{align*}
    \Phi_{E_{a}}^{(IV)}(z,\eta) = &\,z\,(z-p-1)\left( z - \frac{1}{\sqrt{2}} \sqrt{-\sqrt{4 E_{a}^2-12 E_{2a}+5}-2 E_{a}+3}\right) \\
   & \left( z -\frac{1}{2\sqrt{2}} \left( \sqrt{-\sqrt{4 E_{a}^2-12 E_{a}+5}-2 E_{a}+3}- \sqrt{\sqrt{4 E_{a}^2-12 E_{a}+5}-2 E_{a}+3}\right) \right) \\
    &  \left( z -\frac{1}{2\sqrt{2}} \left(\sqrt{-\sqrt{4 E_{a}^2-12 E_{a}+5}-2 E_{a}+3} - \sqrt{-\sqrt{4 E_{a}^2-4 E_{a}-3}+2 E_{a}-1}\right) \right) \\
   &  \left( z -\frac{1}{2\sqrt{2}} \left(\sqrt{-\sqrt{4 E_{a}^2-12 E_{a}+5}-2 E_{a}+3} + \sqrt{-\sqrt{4 E_{a}^2-4 E_{a}-3}+2 E_{a}-1}\right) \right) \\
   &  \left( z -\frac{1}{2\sqrt{2}} \left(\sqrt{-\sqrt{4 E_{a}^2-12 E_{a}+5}-2 E_{a}+3} - \sqrt{\sqrt{4 E_{a}^2-4 E_{a}-3}+2 E_{a}-1}\right) \right) \\
   &  \left( z -\frac{1}{2\sqrt{2}} \left(\sqrt{-\sqrt{4 E_{a}^2-12 E_{a}+5}-2 E_{a}+3} + \sqrt{\sqrt{4 E_{a}^2-4 E_{a}-3}+2 E_{a}-1}\right) \right).
\end{align*}

 \vskip.1in
{\bf b.} The constant $\eta$ is given by  $\eta_{b}(E)=\frac{1}{2\sqrt{2}} \left(\sqrt{2}- \sqrt{\sqrt{4 E^2-12 E+5}-2 E+3}\right)$ and the energies are 
\begin{align*}
 \sqrt{2}(p+1)=\sqrt{\sqrt{4 E^2-12 E+5}-2 E+3}\qquad\Longrightarrow \quad
 E_{b}=-\frac{1}{2}\left((p+1)^2-3+\frac{1}{(p+1)^2}\right).
\end{align*} The corresponding structure function of the $(p+1)$-dimensional unirreps is 
\begin{align*}
    \Phi_{E_{b}}^{(IV)}(z,\eta) = &\,z\,(z-p-1)\left( z - \frac{1}{\sqrt{2}} \sqrt{\sqrt{4 E_{b}^2-12 E_{b}+5}-2 E_{b}+3}\right) \\
   & \left( z -\frac{1}{2\sqrt{2}} \left( \sqrt{\sqrt{4 E_{b}^2-12 E_{b}+5}-2 E_{b}+3}- \sqrt{-\sqrt{4 E_{b}^2-12 E_{b}+5}-2 E_{b}+3}\right) \right) \\
    &  \left( z -\frac{1}{2\sqrt{2}} \left(\sqrt{\sqrt{4 E_{b}^2-12 E_{b}+5}-2 E_{b}+3} - \sqrt{-\sqrt{4 E_{b}^2-4 E_{b}-3}+2 E_{b}-1}\right) \right) \\
   &  \left( z -\frac{1}{2\sqrt{2}} \left(\sqrt{\sqrt{4 E_{b}^2-12 E_{b}+5}-2 E_{b}+3} + \sqrt{-\sqrt{4 E_{b}^2-4 E_{b}-3}+2 E_{b}-1}\right) \right) \\
   &  \left( z -\frac{1}{2\sqrt{2}} \left(\sqrt{\sqrt{4 E_{b}^2-12 E_{b}+5}-2 E_{b}+3} - \sqrt{\sqrt{4 E_{b}^2-4 E_{b}-3}+2 E_{b}-1}\right) \right) \\
   &  \left( z -\frac{1}{2\sqrt{2}} \left(\sqrt{\sqrt{4 E_{b}^2-12 E_{b}+5}-2 E_{b}+3} + \sqrt{\sqrt{4 E_{b}^2-4 E_{b}-3}+2 E_{b}-1}\right) \right).
\end{align*}

\vskip.1in
{\bf c.} The constant $\eta$ is $\eta_{2c}(E)=\frac{1}{2\sqrt{2}} \left(\sqrt{2}+ \sqrt{-\sqrt{4 E^2-12 E+5}-2 E+3}\right)$ and the energy $E$ satisfies 
\begin{align*}
 2\sqrt{2}(p+1)&= \sqrt{\sqrt{4 E^2-12 E+5}-2 E+3}- \sqrt{-\sqrt{4 E^2-12 E+5}-2 E+3} \\
 &\Longrightarrow \quad E_{c}=-2(p+1)^2+\frac{1}{2}.
 \end{align*} The structure function is given by
\begin{align*}
    \Phi_{E_{c}}^{(IV)}(z,\eta) = &\,z\,(z-p-1)\left( z + \frac{1}{\sqrt{2}} \sqrt{-\sqrt{4 E_{c}^2-12 E_{c}+5}-2 E_{c}+3}\right) \\
   & \left( z +\frac{1}{2\sqrt{2}} \left( \sqrt{-\sqrt{4 E_{c}^2-12 E_{c}+5}-2 E_{c}+3}- \sqrt{\sqrt{4 E_{c}^2-12 E_{c}+5}-2 E_{c}+3}\right) \right) \\
    &  \left( z +\frac{1}{2\sqrt{2}} \left(\sqrt{-\sqrt{4 E_{c}^2-12 E_{c}+5}-2 E_{c}+3} - \sqrt{-\sqrt{4 E_{c}^2-4 E_{c}-3}+2 E_{c}-1}\right) \right) \\
   &  \left( z +\frac{1}{2\sqrt{2}} \left(\sqrt{-\sqrt{4 E_{c}^2-12 E_{c}+5}-2 E_{c}+3} + \sqrt{-\sqrt{4 E_{c}^2-4 E_{c}-3}+2 E_{c}-1}\right) \right) \\
   &  \left( z +\frac{1}{2\sqrt{2}} \left(\sqrt{-\sqrt{4 E_{c}^2-12 E_{c}+5}-2 E_{c}+3} - \sqrt{\sqrt{4 E_{c}^2-4 E_{c}-3}+2 E_{c}-1}\right) \right) \\
   &  \left( z +\frac{1}{2\sqrt{2}} \left(\sqrt{-\sqrt{4 E_{c}^2-12 E_{c}+5}-2 E_{c}+3} + \sqrt{\sqrt{4 E_{c}^2-4 E_{c}-3}+2 E_{c}-1}\right) \right).
\end{align*}

\vskip.1in
{\bf d.}  The constant $\eta$ is $ \eta_{d}(E) = \frac{1}{2\sqrt{2}} \left(\sqrt{2}- \sqrt{-\sqrt{4 E^2-4 E-3}+2 E-1}\right)$ and the energies are
\begin{align*}
 2\sqrt{2}(p+1)&= \sqrt{\sqrt{4 E^2-4 E-3}+2 E-1}+  \sqrt{-\sqrt{4 E^2-4 E-3}+2 E-1} \\
 &\Longrightarrow \quad E_{d}=2(p+1)^2-\frac{1}{2}.
 \end{align*} The corresponding structure function reads 
\begin{align*}
    \Phi_{d}^{(IV)}(z,\eta) = &\,z\,(z-p-1)\\
    &\left( z - \frac{1}{2\sqrt{2}} \left(\sqrt{-\sqrt{4 E_{d}^2-4 E_{d}-3}+2 E_{d}-1}- \sqrt{-\sqrt{4 E_{d}^2-12 E_{d}+5}-2 E_{d}+3}\right)\right) \\
    &\left( z - \frac{1}{2\sqrt{2}} \left(\sqrt{-\sqrt{4 E_{d}^2-4 E_{d}-3}+2 E_{d}-1}+ \sqrt{-\sqrt{4 E_{d}^2-12 E_{d}+5}-2 E_{d}+3}\right)\right) \\
   & \left( z - \frac{1}{2\sqrt{2}} \left(\sqrt{-\sqrt{4 E_{d}^2-4 E_{d}-3}+2 E_{d}-1}- \sqrt{\sqrt{4 E_{d}^2-12 E_{d}+5}-2 E_{d}+3}\right)\right) \\
   &\left( z - \frac{1}{2\sqrt{2}} \left(\sqrt{-\sqrt{4 E_{d}^2-4 E_{d}-3}+2 E_{d}-1}+ \sqrt{\sqrt{4 E_{d}^2-12 E_{d}+5}-2 E_{d}+3}\right)\right) \\
    &  \left( z -\frac{1}{\sqrt{2}} \sqrt{-\sqrt{4 E_{d}^2-4 E_{d}-3}+2 E_{d}-1}\right) \\
    &\left( z -\frac{1}{2\sqrt{2}} \left(\sqrt{-\sqrt{4 E_{d}^2-4 E_{d}-3}+2 E_{d}-1}- \sqrt{\sqrt{4 E_{d}^2-4 E_{d}-3}+2 E_{d}-1}\right) \right).
\end{align*}

\vskip.1in
{\bf e.}  The constant $\eta$ is $ \eta_{e}(E) = \frac{1}{2\sqrt{2}} \left(\sqrt{2}-\sqrt{\sqrt{4 E^2-4 E-3}+2 E-1}\right)$ and the energies and structure functions are given by
\begin{align*}
 &\sqrt{2}(p+1)=\sqrt{\sqrt{4 E^2-4 E-3}+2 E-1}\qquad\Longrightarrow \quad
 E_{e}=\frac{1}{2}\left((p+1)^2+1+\frac{1}{(p+1)^2}\right),\\
    \Phi_{e}^{(IV)}(z,\eta) = &\,z\,(z-p-1)\\
    &\left( z - \frac{1}{2\sqrt{2}} \left(\sqrt{\sqrt{4 E_{e}^2-4 E_{e}-3}+2 E_{e}-1}- \sqrt{-\sqrt{4 E_{e}^2-12 E_{e}+5}-2 E_{e}+3}\right)\right) \\
    &\left( z - \frac{1}{2\sqrt{2}} \left(\sqrt{\sqrt{4 E_{e}^2-4 E_{e}-3}+2 E_{e}-1}+ \sqrt{-\sqrt{4 E_{e}^2-12 E_{e}+5}-2 E_{e}+3}\right)\right) \\
   &\left( z - \frac{1}{2\sqrt{2}} \left(\sqrt{\sqrt{4 E_{e}^2-4 E_{e}-3}+2 E_{e}-1}- \sqrt{\sqrt{4 E_{e}^2-12 E_{e}+5}-2 E_{e}+3}\right)\right) \\
   &\left( z - \frac{1}{2\sqrt{2}} \left(\sqrt{\sqrt{4 E_{e}^2-4 E_{e}-3}+2 E_{e}-1}+ \sqrt{\sqrt{4 E_{e}^2-12 E_{e}+5}-2 E_{e}+3}\right)\right) \\
    &  \left( z -\frac{1}{\sqrt{2}} \sqrt{\sqrt{4 E_{e}^2-4 E_{e}-3}+2 E_{e}-1}\right) \\
    &\left( z -\frac{1}{2\sqrt{2}} \left(\sqrt{\sqrt{4 E_{e}^2-4 E_{e}-3}+2 E_{e}-1}- \sqrt{-\sqrt{4 E_{e}^2-4 E_{e}-3}+2 E_{e}-1}\right) \right).
\end{align*}

\vskip.1in
{\bf f.} The constant $\eta$ is $ \eta_{f}(E) = \frac{1}{2\sqrt{2}} \left(\sqrt{2}+ \sqrt{-\sqrt{4 E^2-4 E-3}+2 E-1}\right)$ and the energies are
\begin{align*}
 2\sqrt{2}(p+1)&= \sqrt{\sqrt{4 E^2-4 E-3}+2 E-1}-  \sqrt{-\sqrt{4 E^2-4 E-3}+2 E-1} \\
 &\Longrightarrow \quad E_{f}=2(p+1)^2+\frac{3}{2}.
 \end{align*} The structure function of the $(p+1)$-dimensional unirreps has the form 
\begin{align*}
    \Phi_{f}^{(IV)}(z,\eta) = &\,z\,(z-p-1)\\
    &\left( z + \frac{1}{2\sqrt{2}} \left(\sqrt{-\sqrt{4 E_{f}^2-4 E_{f}-3}+2 E_{f}-1}- \sqrt{-\sqrt{4 E_{f}^2-12 E_{f}+5}-2 E_{f}+3}\right)\right) \\
    &\left( z + \frac{1}{2\sqrt{2}} \left(\sqrt{-\sqrt{4 E_{f}^2-4 E_{f}-3}+2 E_{f}-1}+ \sqrt{-\sqrt{4 E_{f}^2-12 E_{f}+5}-2 E_{f}+3}\right)\right) \\
   &\left( z + \frac{1}{2\sqrt{2}} \left(\sqrt{-\sqrt{4 E_{f}^2-4 E_{f}-3}+2 E_{f}-1}- \sqrt{\sqrt{4 E_{f}^2-12 E_{f}+5}-2 E_{f}+3}\right)\right) \\
  &\left( z + \frac{1}{2\sqrt{2}} \left(\sqrt{-\sqrt{4 E_{f}^2-4 E_{f}-3}+2 E_{f}-1}+ \sqrt{\sqrt{4 E_{f}^2-12 E_{f}+5}-2 E_{f}+3}\right)\right) \\
    &  \left( z +\frac{1}{\sqrt{2}} \sqrt{-\sqrt{4 E_{f}^2-4 E_{f}-3}+2 E_{f}-1}\right) \\
    &\left( z +\frac{1}{2\sqrt{2}} \left(\sqrt{-\sqrt{4 E_{f}^2-4 E_{f}-3}+2 E_{f}-1}- \sqrt{\sqrt{4 E_{f}^2-4 E_{f}-3}+2 E_{f}-1}\right) \right).
\end{align*}

\section{New superintegrable systems in 2D Darboux spaces}

In this section, we investigate superintegrable systems in 2D Darboux spaces with linear and quadratic or quintic integrals of motion. We will first construct generic cubic and quintic algebras and  derive their Casimir operators and realizations in terms of the deformed oscillator algebras. We  will then present examples of new superintegrable systems in 2D Darboux spaces with cubic symmetry algebras.

\subsection{Generic cubic and quintic algebras generated by linear, quadratic or quintic integrals}
\label{1}

We start with the construction of generic cubic and quintic algebras with  structure coefficients involving the Hamiltonians.
 
Let $\hat{X}_1,\;\hat{Y}_1$ be linear integrals, and let $\hat{X}_2,\;\hat{Y}_2$ be quadratic and cubic integrals, respectively. That is, $\deg \hat{X}_1=1=\deg\hat{Y}_1, \, \deg \hat{X}_2=2,\, \deg \hat{Y}_2=3$.  We define the operators $\hat{F}$ and $\hat{G}$ by $\hat{F}=[\hat{X}_1,\hat{X}_2]$ and $\hat{G}=[\hat{Y}_1,\hat{Y}_2]$. Then $\deg \hat{F} = \deg \hat{X}_1 + \deg \hat{X}_2 -1=2$ and $\deg \hat{G} = \deg \hat{Y}_1 + \deg \hat{Y}_2 -1=3$.
By analysing the degrees of the integrals and applying the Jacobi identity constraint $\cite{MR3205917}$, we obtain the following generic cubic and quintic algebras 

\begin{proposition}
\label{3.1}
Integrals $\{\hat{X}_1,\hat{X}_2,\hat{F}\}$ satisfy the cubic commutation relations, 
\begin{align}
  [\hat{X}_1,\hat{X}_2] =&\hat{F}, \nonumber \\
  [\hat{X}_1,\hat{F}]  =& u_1 \hat{X}_1^2 + u_2 \hat{X}_1 + u_3\hat{X}_2 +  u, \nonumber\\
  [\hat{X}_2,\hat{F}]= & v_1 \hat{X}_1^3 +v_2 \hat{X}_1^2 + v_3 \hat{X}_1-u_2 \hat{X}_2 -  u_1\{\hat{X}_1,\hat{X}_2\} + v,   \label{eq:a}
\end{align}  
and integrals $\{\hat{Y}_1,\hat{Y}_2,\hat{G}\}$ form the following quintic commutation relations,
\begin{align}
    [\hat{Y}_1,\hat{Y}_2]   = &\hat{G},\nonumber\\
    [\hat{Y}_1,\hat{K}]   = &\alpha \hat{Y}_1^3 + \beta \hat{Y}_1^2 +  \delta \hat{Y}_1 + \epsilon \hat{Y}_2 + \zeta,\nonumber\\
    [\hat{Y}_2,\hat{K}]   =& a \hat{Y}_1^5 + b \hat{Y}_1^4 + c\hat{Y}_1^3 + d \hat{Y}_1^2 +e \hat{Y}_1   + 
     \frac{1}{2}\left( \alpha\, \epsilon - 2\,\delta\right) \hat{Y}_2 - 
       \frac{3}{2} \alpha \{\hat{Y}_1^2,\hat{Y}_2\} - \beta
       \{\hat{Y}_1,\hat{Y}_2\}  + z, \label{eq:b}
\end{align} where $u_j,v_j,\ldots,\alpha,\ldots,z$ are polynomials of the Hamiltonian $\hat{\mathcal{H}}$. Moreover, the coefficients $v_1$ in (\ref{eq:a})and $a$ in (\ref{eq:b}) are not zero polynomials of $\hat{\mathcal{H}}$.
\end{proposition}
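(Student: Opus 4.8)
The plan is to treat every integral as a differential operator graded by its order in momenta, and to exploit the elementary fact that for two such operators $\deg[\hat A,\hat B]=\deg\hat A+\deg\hat B-1$ whenever the principal symbols have non-vanishing Poisson bracket (and strictly less otherwise). Applied to $\hat F=[\hat X_1,\hat X_2]$ and $\hat G=[\hat Y_1,\hat Y_2]$ this gives at once $\deg\hat F=2$ and $\deg\hat G=3$, and therefore $\deg[\hat X_1,\hat F]=2$, $\deg[\hat X_2,\hat F]=3$, $\deg[\hat Y_1,\hat G]=3$ and $\deg[\hat Y_2,\hat G]=5$. The whole derivation rests on the standing closure hypothesis that these commutators, being integrals of the stated degrees, lie in the polynomial algebra generated by the (functionally independent) integrals together with the central Hamiltonian.

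First I would fix the admissible monomials by a weight count. Assigning weight $1$ to $\hat X_1$, $2$ to $\hat X_2$ and $2$ to $\hat{\mathcal H}$ (respectively $1,3,2$ to $\hat Y_1,\hat Y_2,\hat{\mathcal H}$ in the quintic case), a monomial $\hat{\mathcal H}^{k}\hat X_1^{a}\hat X_2^{b}$ has degree $2k+a+2b$; keeping only those of weight not exceeding the target degree yields exactly the terms displayed in \eqref{eq:a} and \eqref{eq:b}. For instance $[\hat X_1,\hat F]$ (degree $2$) admits only $\hat X_1^2,\hat X_1,\hat X_2$ with constant coefficients plus a degree-$\le1$ polynomial $u$ in $\hat{\mathcal H}$, while $[\hat X_2,\hat F]$ (degree $3$) admits $\hat X_1^3,\{\hat X_1,\hat X_2\},\hat X_1^2,\hat X_2,\hat X_1$ and $v$, with the indicated bounds on the $\hat{\mathcal H}$-degrees of the coefficients; the same count produces the listed monomials in the quintic relations.

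Next I would impose the Jacobi identity to pin down the ``cross'' coefficients. Using $[\hat X_1,\hat X_2]=\hat F$, the identity $[[\hat X_1,\hat X_2],\hat F]=[\hat X_1,[\hat X_2,\hat F]]-[\hat X_2,[\hat X_1,\hat F]]$ has vanishing left-hand side, and evaluating the right-hand side with the expansions above (and the centrality of $\hat{\mathcal H}$) forces the coefficient of $\hat X_2$ to equal $-u_2$ and that of $\{\hat X_1,\hat X_2\}$ to equal $-u_1$, which is precisely the form written in \eqref{eq:a}. The analogous computation for $\{\hat Y_1,\hat Y_2,\hat G\}$, now carrying the anticommutators $\{\hat Y_1^2,\hat Y_2\}$ and $\{\hat Y_1,\hat Y_2\}$, reproduces the coefficients $\tfrac12(\alpha\epsilon-2\delta)$, $-\tfrac32\alpha$ and $-\beta$ in \eqref{eq:b}.

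The delicate point, and the part I expect to be the real obstacle, is the non-vanishing of the leading coefficients $v_1$ and $a$, since the Jacobi identity leaves these completely free. I would argue at the level of principal symbols: writing $X_1,X_2,H$ for the symbols, the top-degree part of $[\hat X_2,\hat F]$ is the iterated Poisson bracket $\{X_2,\{X_1,X_2\}\}$, a homogeneous cubic in momenta, and $v_1$ is exactly its component along $X_1^3$ modulo $X_1X_2$ and $HX_1$; similarly $a$ is the $Y_1^5$-component of $\{Y_2,\{Y_1,Y_2\}\}$. The goal is to show this component is non-zero. Because $\hat X_1$ is a genuine first-order (Killing) symbol, $\hat X_2$ a genuine second-order one with $\{X_1,X_2\}\neq0$, and $\{\hat X_1,\hat X_2,\hat{\mathcal H}\}$ functionally independent, a vanishing $X_1^3$-component would force $\{X_2,\{X_1,X_2\}\}$ into the span of $X_1X_2$ and $HX_1$, i.e.\ a polynomial syzygy degenerating the integrals, which contradicts these hypotheses. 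Concretely I would perform the symbol computation in local canonical coordinates adapted to the isometry generated by $X_1$, read off the $X_1^3$ coefficient explicitly, and cross-check it against the Darboux-space examples constructed later in this section, where $v_1$ and $a$ can be evaluated directly.
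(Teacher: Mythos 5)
Your degree bookkeeping and Jacobi-identity step are correct and are exactly the paper's route: the paper's entire proof of Proposition \ref{3.1} is the ``short and straightforward computation'' in which, once degree counting fixes the admissible monomials, the identity $[[\hat X_1,\hat X_2],\hat F]=0$ forces the coefficient of $\hat X_2$ in $[\hat X_2,\hat F]$ to be $-u_2$ and that of $\{\hat X_1,\hat X_2\}$ to be $-u_1$, and the analogous computation (using $[\hat Y_1,[\hat Y_1,\hat G]]=\epsilon\hat G$ to reorder the term $\hat Y_1\hat G\hat Y_1$ arising from $[\hat Y_2,\hat Y_1^3]$) yields the coefficients $\tfrac12(\alpha\epsilon-2\delta)$, $-\tfrac32\alpha$, $-\beta$ in the quintic case. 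Up to this point your proposal and the paper coincide, down to the weight count that bounds the $\hat{\mathcal H}$-degrees of the structure coefficients.

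The genuine gap is your final paragraph, the attempted proof that $v_1$ and $a$ are nonzero. You rightly observe that the Jacobi identity leaves these coefficients free, but the symbol argument you then propose is invalid: a relation $\{X_2,\{X_1,X_2\}\}=c_1\,X_1X_2+c_2\,HX_1$ is not a polynomial syzygy among $X_1,X_2,H$ themselves --- it expresses a derived quantity (the double bracket) in terms of the integrals, which is precisely what closure of the symmetry algebra demands, and it is fully compatible with functional independence. The paper itself contains a counterexample to the statement you are trying to prove: for Darboux space III, equation \eqref{eq:h3}, the linear and quadratic integrals are functionally independent and $\hat F\neq 0$, yet $[\hat X_2,\hat F]=-\beta\,\hat{\mathcal H}_3\,\hat X_1$, so $v_1=0$; the degree-three leading part is carried by $\hat{\mathcal H}\hat X_1$ rather than $\hat X_1^3$, which is exactly the possibility your argument overlooks (and which your own suggested cross-check against the Section 3.2 examples would have exposed, since Darboux spaces I, II, IV give $v_1=-2,4,4$ but Darboux III gives $v_1=0$). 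Consequently the non-vanishing of $v_1$ and $a$ cannot be a theorem under the stated hypotheses; it has to be read as a genericity condition singling out when \eqref{eq:a} and \eqref{eq:b} genuinely deserve the names ``cubic'' and ``quintic'', which is presumably why the paper asserts it without argument. Any purported proof of it from functional independence, including yours, must fail.
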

The proof of this proposition is a short and straightforward computation from the Jacobi identity requirement.


\begin{remark}
 For the polynomials on both sides of the commutation relations $\eqref{eq:a}$ and $\eqref{eq:b}$ to have the same degree, we must have that $v_1,v_2, u_1,u_2, u_3, \alpha, \beta, \epsilon, a, b$ are constants and  
\begin{align*}
    u  & = u^{(0)} + u^{(1)}\hat{\mathcal{H}}, \qquad \text{ }  v_3   = v_3^{(0)} + v_3^{(1)}\hat{\mathcal{H}}, \qquad \text{ } v = v^{(0)} + v^{(1)}\hat{\mathcal{H}},\\
  \delta & = \delta^{(0)} + \delta^{(1)}\hat{\mathcal{H}}, 
  \qquad \text{ } \zeta = \zeta^{(0)} + \zeta^{(1)}\hat{\mathcal{H}}, \qquad
    c = c^{(0)} + c^{(1)}\hat{\mathcal{H}},\\   
    d & = d^{(0)} + d^{(1)}\hat{\mathcal{H}},  
    \qquad e = e^{(0)} + e^{(1)}\hat{\mathcal{H}} + e^{(2)}\hat{\mathcal{H}}^2, 
  \qquad z = z^{(0)} + z^{(1)}\hat{\mathcal{H}} + z^{(2)}\hat{\mathcal{H}}^2,
 \end{align*} 
 where $u^{(0)}, u^{(1)}, \ldots,$ are constants.
\end{remark}

We now construct the Casimir operators for both polynomial algebras. We have
\begin{proposition}
The Casimir operators $C_{(3)}$ and $C_{(5)}$ for the cubic and quintic algebras are respectively given by \begin{align*}
    C_{(3)}  = &\hat{F}^2 - u_1 \{\hat{X}_1^2,\hat{X}_2\}   -u_2\{\hat{X}_1,\hat{X}_2\} + \frac{v_1}{2} \hat{X}_1^4+\frac{2}{3}v_2 \hat{X}_1^3 + \left(  v_3 + u_1^2  \right) \hat{X}_1^2 + \left(u_1u_2   + 2v \right) \hat{X}_1 - 2u \hat{X}_2  - u_3 \hat{X}_2^2 ,\\
    C_{(5)}  = & \hat{G}^2 - \alpha  \{\hat{Y}_1^3,\hat{Y}_2\} + \beta\{\hat{Y}_1^2,\hat{Y}_2\} -   \delta \{\hat{Y}_1,\hat{Y}_2\}  - \epsilon \hat{Y}_2^2 - 2 \zeta \hat{Y}_2 +\frac{a}{3} \hat{Y}_1^6 + \frac{2}{5}b \hat{Y}_1^5 \\ 
    & + \frac{1}{2} \left( c + \frac{5}{3}a \epsilon + 3 \alpha \delta \right) \hat{Y}_1^4  +  \left(2\beta(\delta + 3\alpha) + \frac{2}{5}\epsilon b -2d\right)  \hat{Y}_1^3 \\
    & +  \left[\frac{1}{6}\left(5 - 6a \right)\epsilon^2 + e + \frac{1}{2}\epsilon c + \beta^2 - \frac{3}{4}\alpha\left(\alpha \epsilon -2 \delta  \right) \right] \hat{Y}_1^2 + \left( 2z + \beta \delta + \beta\epsilon(\alpha + \delta) - \frac{1}{5}b\epsilon^2 - \epsilon d \right)  \hat{Y}_1  .
\end{align*}
\end{proposition}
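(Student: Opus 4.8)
The plan is to verify directly that each proposed expression commutes with the generators of its algebra. Since $\hat{F}=[\hat{X}_1,\hat{X}_2]$ and $\hat{G}=[\hat{Y}_1,\hat{Y}_2]$, it suffices to show $[C_{(3)},\hat{X}_1]=[C_{(3)},\hat{X}_2]=0$ and $[C_{(5)},\hat{Y}_1]=[C_{(5)},\hat{Y}_2]=0$: commutativity with $\hat{F}$ (resp. $\hat{G}$) then follows from the Jacobi identity, e.g. $[C_{(3)},\hat{F}]=[[C_{(3)},\hat{X}_1],\hat{X}_2]+[\hat{X}_1,[C_{(3)},\hat{X}_2]]=0$. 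Throughout I use that $\hat{\mathcal{H}}$ is central, so every structure coefficient $u_j,v_j,\ldots,z$ (being a polynomial in $\hat{\mathcal{H}}$) commutes with all generators and may be treated as a scalar inside commutators, together with the Leibniz rule $[A,BC]=[A,B]C+B[A,C]$.

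First I would carry out the $[C_{(3)},\hat{X}_1]=0$ check, which is the clean half. Every pure power of $\hat{X}_1$ drops out, so only $\hat{F}^2$, the anticommutators $\{\hat{X}_1^2,\hat{X}_2\}$, $\{\hat{X}_1,\hat{X}_2\}$, and the terms $\hat{X}_2$, $\hat{X}_2^2$ contribute. Using $[\hat{F},\hat{X}_1]=-[\hat{X}_1,\hat{F}]$ and relation (\ref{eq:a}),
\begin{align*}
[\hat{F}^2,\hat{X}_1]=\{\hat{F},[\hat{F},\hat{X}_1]\}=-u_1\{\hat{F},\hat{X}_1^2\}-u_2\{\hat{F},\hat{X}_1\}-u_3\{\hat{F},\hat{X}_2\}-2u\hat{F},
\end{align*}
while $[\{\hat{X}_1^2,\hat{X}_2\},\hat{X}_1]=-\{\hat{X}_1^2,\hat{F}\}$, $[\{\hat{X}_1,\hat{X}_2\},\hat{X}_1]=-\{\hat{X}_1,\hat{F}\}$, $[\hat{X}_2^2,\hat{X}_1]=-\{\hat{X}_2,\hat{F}\}$, and $[\hat{X}_2,\hat{X}_1]=-\hat{F}$. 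Multiplying by the coefficients appearing in $C_{(3)}$, the four surviving contributions cancel the four terms above pairwise, giving $[C_{(3)},\hat{X}_1]=0$.

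The crux is $[C_{(3)},\hat{X}_2]=0$, and I expect this to be the main obstacle. Now the powers of $\hat{X}_1$ no longer commute with the variable: one must expand $[\hat{X}_1^n,\hat{X}_2]=\sum_{k=0}^{n-1}\hat{X}_1^{k}\hat{F}\hat{X}_1^{n-1-k}$ and then repeatedly reorder using $[\hat{X}_1,\hat{F}]=u_1\hat{X}_1^2+u_2\hat{X}_1+u_3\hat{X}_2+u$ to bring every monomial to a fixed normal form. The polynomial-in-$\hat{X}_1$ terms in $C_{(3)}$, with the precisely tuned coefficients $\tfrac{v_1}{2},\,\tfrac{2}{3}v_2,\,v_3+u_1^2,\,u_1u_2+2v$, are exactly what is needed to absorb the reordering remainders produced by $[\hat{F}^2,\hat{X}_2]$ and by $[\{\hat{X}_1^2,\hat{X}_2\},\hat{X}_2]$. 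Here the Jacobi consistency relation $[\hat{X}_1,[\hat{X}_2,\hat{F}]]=[\hat{X}_2,[\hat{X}_1,\hat{F}]]$ already used in Proposition \ref{3.1} is invoked to collapse the cross terms. The difficulty is purely the noncommutative bookkeeping: tracking terms of degrees $0$ through $3$ in $\hat{X}_1$ together with the $\hat{X}_2$-linear pieces and confirming that each collects to zero.

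The quintic Casimir $C_{(5)}$ is handled by the identical two-step scheme: $[C_{(5)},\hat{Y}_1]=0$ is the clean direction, where powers of $\hat{Y}_1$ drop out and the anticommutator terms cancel the contribution of $\hat{G}^2$ against (\ref{eq:b}), while $[C_{(5)},\hat{Y}_2]=0$ is the laborious one, requiring normal-ordering of $[\hat{Y}_1^n,\hat{Y}_2]$ for $n$ up to $6$ and of the anticommutators $\{\hat{Y}_1^3,\hat{Y}_2\}$, $\{\hat{Y}_1^2,\hat{Y}_2\}$, $\{\hat{Y}_1,\hat{Y}_2\}$ against relation (\ref{eq:b}). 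The coefficients such as $\tfrac12\left(c+\tfrac{5}{3}a\epsilon+3\alpha\delta\right)$, $2\beta(\delta+3\alpha)+\tfrac25\epsilon b-2d$, and the remaining ones are fixed uniquely by the requirement that all reordering remainders cancel. Because the number of monomials here is large, I would confirm the final closed forms with the symbolic computation package used elsewhere in the paper, but the logical content is entirely the cancellation mechanism established in the cubic case.
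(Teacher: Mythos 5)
Your overall strategy is sound and is, in substance, the paper's own: the paper writes $C_{(3)}$ and $C_{(5)}$ as general ansatzes with undetermined coefficients $w_j$, $\omega_j$, expands the commutators with the generators in a normal form built from $\{\hat{F},\hat{X}_1^n\}$, $\{\hat{F},\hat{X}_2\}$, $\{\hat{G},\hat{Y}_1^n\}$, etc., and solves for the coefficients; you run the identical computation with the coefficients already inserted, which is a cosmetic difference. Your reduction to $[C,\hat{X}_1]=[C,\hat{X}_2]=0$ via the Jacobi identity is valid, and your verification of $[C_{(3)},\hat{X}_1]=0$ is complete and correct.

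The genuine gap is that the half you yourself identify as the crux is never carried out, and it is precisely the half in which the Proposition's distinctive coefficients live. In $[C_{(3)},\hat{X}_1]=0$ every pure power of $\hat{X}_1$ drops out, so that check only confirms the coefficients $-u_1,\,-u_2,\,-2u,\,-u_3$; the coefficients $\tfrac{v_1}{2}$, $\tfrac{2}{3}v_2$, $v_3+u_1^2$, $u_1u_2+2v$ are tested only by $[C_{(3)},\hat{X}_2]=0$, which you describe but do not compute. Asserting that these values "are exactly what is needed to absorb the reordering remainders" or are "fixed uniquely by the requirement that all reordering remainders cancel" is the statement to be proved, not an argument for it, so as written this step is circular. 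The paper closes it by actually exhibiting the normal-form identity (with $w_4,\dots,w_7$ the undetermined coefficients of $\hat{X}_1^4,\dots,\hat{X}_1$)
\begin{align*}
0 = (2w_4-v_1)\{\hat{F},\hat{X}_1^3\} + \left(\tfrac{3}{2}w_5-v_2\right)\{\hat{F},\hat{X}_1^2\} + \left(w_6-v_3-u_1^2\right)\{\hat{F},\hat{X}_1\} + \left(w_7-u_1u_2-2v\right)\hat{F},
\end{align*}
whose vanishing forces the stated values; note the reordering contribution $u_1^2$ in the coefficient of $\{\hat{F},\hat{X}_1\}$, which is exactly the kind of term your sketch must track and cancel. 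The same criticism applies with more force to $C_{(5)}$: the paper first extracts $\omega_3=0$ and $\omega_5=-\epsilon$ from the coefficients of $\{\hat{G},\{\hat{Y}_1,\hat{Y}_2\}\}$ and $\{\hat{G},\hat{Y}_2\}$ in $[C_{(5)},\hat{Y}_1]$, then $\omega_1,\omega_2,\omega_4,\omega_6$, and finally the six $\hat{Y}_1$-power coefficients from the normal-form expansion of $[C_{(5)},\hat{Y}_2]$, whereas you replace all of this with an appeal to a symbolic computation package, which concedes rather than supplies the proof. To repair the argument you would need to perform the normal ordering of $[\hat{F}^2,\hat{X}_2]$, $[\{\hat{X}_1^2,\hat{X}_2\},\hat{X}_2]$ and $[\hat{X}_1^n,\hat{X}_2]$ using $[\hat{X}_1,\hat{F}]=u_1\hat{X}_1^2+u_2\hat{X}_1+u_3\hat{X}_2+u$ and display the cancellation, and do the analogous (longer) bookkeeping for the quintic relations.
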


\begin{proof}
By analysinf the degrees of the integrals in the algebras, we see that
the Casimir operators $C_{(3)}$ and $C_{(5)}$ of the cubic and quintic algebras have the following general form
\begin{align*}
    C_{(3)}   =&  \hat{F}^2 + w_1 \{\hat{X}_1^2,\hat{X}_2\} + w_2\{\hat{X}_1,\hat{X}_2^2\} + w_3\{\hat{X}_1,\hat{X}_2\} + w_4 \hat{X}_1^4 \\
    & + w_5 \hat{X}_1^3 + w_6 \hat{X}_1^2 + w_7 \hat{X}_1 + w_8 \hat{X}_2 + w_9 \hat{X}_2^2 ,\\
    C_{(5)}  = & \hat{G}^2 + \omega_1  \{\hat{Y}_1^3,\hat{Y}_2\} + \omega_2\{\hat{Y}_1^2,\hat{Y}_2\} + \omega_3  \{\hat{Y}_1,\hat{Y}_2^2\} + \omega_4 \{\hat{Y}_1,\hat{Y}_2\}  \\ 
    & + \omega_5 \hat{Y}_2^2 + \omega_6 \hat{Y}_2 +
     \omega_7 \hat{Y}_1^6 +\omega_8 \hat{Y}_1^5+ \omega_9 \hat{Y}_1^4+ \omega_{10} \hat{Y}_1^3+ \omega_{11} \hat{Y}_1^2+ \omega_{12} \hat{Y}_1,
\end{align*} where $w_j$ and $\omega_j$ are coefficients.

Now using the quadratic commutation relations $\eqref{eq:a}$, we have \begin{align*}
    [C_{(3)},\hat{X}_1]     =& -(u_1+  w_1)\{\hat{F} ,\hat{X}_1^2\}  -( w_3+u_2+w_2 u_1)\{\hat{F} ,\hat{X}_1\} -( w_9+u_3)\{\hat{F},\hat{X}_2\}   \\
    & - (2 u+ w_8 + w_2 u_2)\hat{F} - w_2 \{\hat{F} ,\{\hat{X}_1,\hat{X}_2\}\}  + \hat{X}_1(w_1)\{\hat{X}_1^2,\hat{X}_2\} + \ldots+ \hat{X}_1(w_9)\hat{X}_2^2.
\end{align*} Setting the coefficients to be zero gives
\begin{align*}
    w_1 = - u_1,\quad \text{ } w_2 =  0, \quad\text{ } w_3 = -u_2 , \quad \text{   } w_8 = - 2u,\quad \text{ }w_9 =- u_3.
\end{align*} Similarly, from $[C_{(3)},\hat{X}_2] = 0,$, we have 
\begin{align*}
   0    = & (2w_4 - v_1)\{\hat{F} ,\hat{X}_1^3\}+   (\frac{3w_5}{2}  - v_2) \{\hat{F} ,\hat{X}_1^2\}+ (w_6 - v_3-u_1^2) \{\hat{F} ,\hat{X}_1\} \\ 
     &   +(w_7-u_1 u_2   - 2v) \hat{F}. 
\end{align*} This gives
\begin{align*}
    w_4 = \frac{v_1}{2}, \quad\text{ } w_5 = \frac{2v_2}{3},\quad\text{ } w_6 = v_3 + u_1^2  , \quad\text{ }w_7 = u_1u_2  + 2v.
\end{align*}
This finishes the proof for $C_{(3)}$.

The derivation of $C_{(5)}$ is slightly more complicated. We express $[C_{(5)},\hat{Y}_1]$ and $[C_{(5)},\hat{Y}_2]$ in terms of $\{\hat{G},\hat{Y}_1^n\}$ and $\{\hat{G},\{\hat{Y}_1,\hat{Y}_2\}\}$. By $\cite[\text{Lemma 2}]{MR3205917}$ and quintic commutation relations, we have
\begin{align*}
[ C_{(5)},\hat{Y}_1]    = & -(\alpha + \omega_1)\{\hat{G},\hat{Y}_1^3\}-\left(\beta   + \omega_2-  \frac{3\alpha\omega_3}{2}\right)\{\hat{G},\hat{Y}_1^2\} -   (\delta+\omega_3 \beta+ \omega_4)\{\hat{G},\hat{Y}_1\}    \\
& - (\epsilon + \omega_5)\{\hat{G},\hat{Y}_2\}  - \omega_3\{\hat{G},\{\hat{Y}_1,\hat{Y}_2\}\}  +\left[ \left(\frac{\alpha\epsilon  }{2}   -\delta\right) \omega_3- 2\zeta-\omega_6\right]  \hat{G} .
\end{align*} 
Setting the coefficients of $\{\hat{K},\{\hat{Y}_1,\hat{Y}_2\}\} $ and $\{\hat{K},\hat{Y}_2\}$ to be zero, we obtain that $\omega_3 = 0$ and $\omega_5 = - \epsilon$.  Then $[C_{(5)},\hat{Y}_1]$ is reduced to the form 
\begin{align*}
        [ C_{(5)},\hat{Y}_1]  =&  (\alpha - \omega_1)\{\hat{G},\hat{Y}_1^3\} + (\beta   - \omega_2 )\{\hat{G},\hat{Y}_1^2\}+   [\delta-  \omega_4)]\{\hat{G},\hat{Y}_1\}         + (  2\zeta-\omega_6)  \hat{G}.
\end{align*} 
From $[C_{(5)},\hat{Y}_1] = 0$ it follows that the coefficients of $\{\hat{G},\hat{Y}_1^l\} $ are zero for all $ 1 \leq n \leq 3.$ Thus
\begin{align*}
        \omega_1 = -\alpha  , \text{ } \omega_2 = -\beta  , \text{ } \omega_4 =   -\delta  \text{ and } \omega_6 = -2 \zeta .
\end{align*} 
Similarly, after some manipulations we find
  \begin{align*}
  [C_{(5)},\hat{Y}_2] = &   (3\omega_7  -a) \{\hat{G},\hat{Y}_1^5\} +  \left(\frac{5\omega_8  }{2}  - b\right)\{\hat{G},\hat{Y}_1^4\}  - (c+   3\alpha\delta +5\epsilon \omega_7  -2\omega_9  ) \{\hat{G},\hat{Y}_1^3\}\\ 
    & +  \left[\frac{3\alpha}{2} \left(\frac{\alpha \epsilon}{2} - \delta \right)- \beta^2+\frac{3\alpha\delta\epsilon}{2}+ 3 \epsilon^2 \omega_7 - e- \epsilon \omega_9+ \omega_{11} \right]   \{\hat{G},\hat{Y}_1\} \\
  &    - \left(\beta  \delta   +\frac{ \epsilon\omega_8  }{2}-   \frac{1}{2}\omega_{10}   -    3\alpha \beta    +d  \right)\{\hat{G},\hat{Y}_1^2\} \\
    &     +  \left[  \beta  \left(\frac{\alpha \epsilon}{2} - \delta  \right)- \frac{\epsilon\omega_{10}}{2}  + \epsilon^2\omega_8  +  \frac{3\alpha \beta\epsilon}{2}   + (\omega_{12}    - 2z )\right]    \hat{G} .
\end{align*} 
It follows from $[C_{(5)},\hat{Y}_2] =0$ that
\begin{align*}
    & \omega_7 = \frac{a}{3},\quad \text{ } \omega_8 = \frac{2b}{5},\quad\text{ }  \omega_9 = \frac{1}{2} \left( c + \frac{5 \epsilon}{3} + 3 \alpha \delta \right),\quad \text{ } \omega_{10} = 2(\beta(\delta + 3\alpha) + \frac{\epsilon b}{5} -d) \\
    & \omega_{11} = \left(\frac{5}{6} -a \right)\epsilon^2 + e + \frac{\epsilon c}{2} + \beta^2 - \frac{3\alpha}{2}\left(\frac{\alpha \epsilon}{2} - \delta  \right), 
    \quad \omega_{12} = 2z + \beta \delta + \beta\epsilon(\alpha + \delta) - \frac{b\epsilon^2}{5} - \epsilon d 
\end{align*} as required.
\end{proof}

\vskip.1in
we now construct realizations of these algebras in terms of the deformed oscillator algebras (\ref{eq:alg}) and determine their structure functions. After long computations, we obtain the following results.

\vskip.1in
\begin{proposition}
 \label{3.3}
 The realization
\begin{align}
\hat{X}_1=&\sqrt{u_3} \, (\mathcal{N} + \eta),\nonumber\\
\hat{X}_2=&-u_3\, (\mathcal{N} + \eta)^2   - \frac{ u_2 }{\sqrt{u_3}}\, (\mathcal{N} + \eta)  + b^\dagger +  b-\frac{u}{u_3},\label{eq:recu}
\end{align}
where $\eta$ is a constant parameter to be determined, changes the cubic algebra (\ref{eq:a}) to the deformed oscillator algebra (\ref{eq:alg}) with the structure function given by
\begin{align*}
    \Phi(\mathcal{N},\eta)  =& \frac{1}{1-2 u_3} \left\{ C_{(3)}-\frac{u^2}{u_3}+\frac{u u_2}{\sqrt{u_3}}+\sqrt{u_3} v 
     +(\mathcal{N}+\eta)^2 \left(-2 u u_1+2 u_1 u_2 \sqrt{u_3}-u_2^2+(u_2+v_2) u_3^{3/2}-u_3\right) \right.\\
     &\quad +(\mathcal{N}+\eta) \left(2 u u_1-\frac{2 u u_2}{\sqrt{u_3}}-\sqrt{u_3} (u_1 u_2+2 v)+u_2^2+u_3 v_3\right)\\
     &\left.\quad +(\mathcal{N}+\eta)^3 \left(-2 u_1 u_2 \sqrt{u_3}+2 u_1 u_3^2-\frac{2}{3} v_2 u_3^{3/2} + v_1u_3^2 \right) 
     +(\mathcal{N}+\eta)^4 \left(-2 u_1 u_3^2+u_3^3-\frac{1}{2}v_1 u_3^2 \right)\right\}.
\end{align*} 
Note that $\Phi$ is a quartic polynomial of the number operator $\mathcal{N}$.
\end{proposition}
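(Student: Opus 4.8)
The plan is to run the deformed-oscillator construction of Daskaloyannis used throughout Section \ref{3}, now for the generic cubic algebra \eqref{eq:a}. The first step is to substitute the ansatz \eqref{eq:recu} into the defining relation $\hat F=[\hat X_1,\hat X_2]$. Since $\hat X_1=\sqrt{u_3}\,(\mathcal N+\eta)$ is a function of $\mathcal N$ alone, every term of $\hat X_2$ that is diagonal in $\mathcal N$ commutes with it, so only the $b^\dagger+b$ piece survives and one gets $\hat F=\sqrt{u_3}\,(b^\dagger-b)$. This is the key simplification: from here on every commutator reduces to finite shifts $g(\mathcal N)\mapsto g(\mathcal N\pm1)$ together with the basic bracket $[b,b^\dagger]=\Phi(\mathcal N+1)-\Phi(\mathcal N)$ coming from \eqref{eq:alg}.

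Next I would check the second relation in \eqref{eq:a}. A direct computation gives $[\hat X_1,\hat F]=u_3\,(b^\dagger+b)$. Matching the coefficient of $b^\dagger+b$ confirms the normalisation $\hat X_1=\sqrt{u_3}\,(\mathcal N+\eta)$, while matching the purely diagonal (number-operator) part forces the $\mathcal N$-dependent part of $\hat X_2$ to be precisely the quadratic-plus-linear expression in \eqref{eq:recu}, together with the constant $-u/u_3$. Thus the second relation is not an extra hypothesis but exactly the condition that pins down the diagonal part of $\hat X_2$; no structure function enters yet.

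The heart of the argument is the third relation. I would compute $[\hat X_2,\hat F]$ using $\hat F=\sqrt{u_3}\,(b^\dagger-b)$ and compare it term by term with $v_1\hat X_1^3+v_2\hat X_1^2+v_3\hat X_1-u_2\hat X_2-u_1\{\hat X_1,\hat X_2\}+v$. The coefficients of $b^\dagger$ and of $b$ should match automatically: on the left they are the finite differences $\sqrt{u_3}\,(f(\mathcal N+1)-f(\mathcal N))$ and $\sqrt{u_3}\,(f(\mathcal N)-f(\mathcal N-1))$ of the diagonal part $f$ of $\hat X_2$, and these coincide with the linear-in-$\mathcal N$ terms produced by $-u_2\hat X_2-u_1\{\hat X_1,\hat X_2\}$; this coincidence is the consistency check that the realisation is correct. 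The genuinely new content lies in the diagonal part, which yields a first-order difference equation $2\sqrt{u_3}\,(\Phi(\mathcal N+1)-\Phi(\mathcal N))=Q(\mathcal N)$, where $Q$ is the cubic polynomial obtained by collecting the surviving number-operator terms.

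Finally I would solve this difference equation: its polynomial solution is a quartic in $\mathcal N+\eta$, unique up to one additive constant (periodic solutions being excluded because $\Phi$ must be a genuine polynomial), which is why $\Phi$ comes out quartic. That last constant is fixed by evaluating the Casimir $C_{(3)}$ of the preceding proposition in the realisation: using $b^\dagger b=\Phi(\mathcal N)$ and $bb^\dagger=\Phi(\mathcal N+1)$, the off-diagonal $(b^\dagger)^2$ and $b^2$ contributions of $\hat F^2$ and $-u_3\hat X_2^2$ cancel, leaving a diagonal identity relating $\Phi$ to the scalar $C_{(3)}$; this supplies the constant term and assembles the displayed closed form. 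The main obstacle is therefore purely computational — organising the lengthy bookkeeping in $[\hat X_2,\hat F]$ and in $C_{(3)}$ so that all off-diagonal pieces visibly cancel and the quartic solution can be written out. Once $\hat F=\sqrt{u_3}\,(b^\dagger-b)$ is in hand there is no conceptual difficulty.
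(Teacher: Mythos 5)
Your overall strategy is sound and is, as far as one can tell, the paper's own: the paper gives no argument beyond ``after long computations,'' and the scheme you outline (compute $\hat{F}$, fix the diagonal part of $\hat{X}_2$ from the second relation, extract a difference equation for $\Phi$ from the diagonal part of the third relation, fix the last constant with the Casimir) is exactly the deformed-oscillator construction used throughout Section~\ref{3}. Your first step is also correct: $\hat{F}=[\hat{X}_1,\hat{X}_2]=\sqrt{u_3}\,(b^\dagger-b)$. The gap is in your second step, which asserts something a direct computation contradicts. With $\hat{X}_1=\sqrt{u_3}\,(\mathcal{N}+\eta)$ and $\hat{X}_2=f(\mathcal{N})+b^\dagger+b$ one gets $[\hat{X}_1,\hat{F}]=u_3(b^\dagger+b)$, and matching this against $u_1\hat{X}_1^2+u_2\hat{X}_1+u_3\hat{X}_2+u$ forces
\begin{align*}
f(\mathcal{N})=-u_1(\mathcal{N}+\eta)^2-\frac{u_2}{\sqrt{u_3}}\,(\mathcal{N}+\eta)-\frac{u}{u_3},
\end{align*}
i.e.\ the coefficient of $(\mathcal{N}+\eta)^2$ is $-u_1$, \emph{not} the $-u_3$ appearing in \eqref{eq:recu}; the realization as displayed closes the second relation only under the unstated hypothesis $u_1=u_3$. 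So the claim that the matching ``forces precisely the expression in \eqref{eq:recu}'' would not survive the computation: an honest execution either corrects \eqref{eq:recu} (replace $u_3$ by $u_1$ in the quadratic term) or adds that hypothesis.

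The same point undermines the two later steps you describe as automatic. The $b^\dagger$- and $b$-coefficients of $[\hat{X}_2,\hat{F}]$ match those of $v_1\hat{X}_1^3+v_2\hat{X}_1^2+v_3\hat{X}_1-u_2\hat{X}_2-u_1\{\hat{X}_1,\hat{X}_2\}+v$ \emph{iff} the quadratic coefficient of $f$ is $-u_1$, and the off-diagonal terms in the Casimir cancel under exactly the same condition; with \eqref{eq:recu} as written neither holds. Finally, your concluding step cannot ``assemble the displayed closed form'': the diagonal part of the Casimir gives $C_{(3)}=-2u_3\bigl[\Phi(\mathcal{N}+1)+\Phi(\mathcal{N})\bigr]+P(\mathcal{N})$ and the third relation gives $2\sqrt{u_3}\bigl[\Phi(\mathcal{N}+1)-\Phi(\mathcal{N})\bigr]=Q(\mathcal{N})$ for computable polynomials $P,Q$, whence $\Phi=\bigl(P-C_{(3)}\bigr)/(4u_3)-Q/(4\sqrt{u_3})$. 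The Casimir therefore enters with coefficient $-1/(4u_3)$, not $1/(1-2u_3)$, and the structure function stated in the proposition (which is dimensionally inhomogeneous, mixing terms such as $u_2^2$ and a bare $u_3$) is not what this construction yields. You should carry the quadratic coefficient of $f$ as an unknown, let the algebra determine it, and present the resulting quartic $\Phi$ explicitly, flagging the discrepancies with the stated proposition rather than asserting agreement with it.
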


\vskip.1in
\begin{proposition}
 \label{3.4}
 The transformation
\begin{align}
\hat{Y}_1=&\sqrt{\epsilon} (\mathcal{N} + \eta),\nonumber\\
\hat{Y}_2=&-  \alpha \sqrt{\epsilon}  (\mathcal{N} + \eta)^3 -\beta (\mathcal{N} + \eta)^2 - \frac{\delta}{\sqrt{\epsilon}}  (\mathcal{N} + \eta) +b^\dagger+b - \frac{\zeta}{\epsilon},\label{eq:requ}
\end{align}
where $\eta$ is a constant parameter to be determined, maps the quintic algebra (\ref{eq:b}) to the deformed oscillator algebra with the structure function
\begin{align*}
       \Phi(\mathcal{N},\eta)=&\frac{1}{4 \epsilon}C_{(5)}+(\mathcal{N} + \eta)^6 \left(\frac{a \epsilon ^4}{12}-\frac{3 \alpha ^2 \epsilon ^3}{4}\right)+(\mathcal{N} + \eta)^5 \left(\frac{3 \alpha ^2 \epsilon }{4}+\frac{1}{2} \alpha  \beta  \epsilon ^{5/2}-\frac{a \epsilon ^2}{4}+\frac{1}{10} b \epsilon ^{7/2}\right)\\
      & +(\mathcal{N} + \eta)^4 \left(-\frac{1}{4} \alpha  \beta  \sqrt{\epsilon }+\frac{1}{8} \epsilon ^3 \left(3 \alpha  \delta +\frac{5 a \epsilon }{3}+c\right)+\frac{1}{2} \alpha  \delta  \epsilon ^2+\frac{\beta ^2 \epsilon ^2}{4}-\frac{1}{4} b \epsilon ^{3/2}\right)  \\
      &+(\mathcal{N} + \eta)^3 \left(-\frac{1}{4} \alpha  (2 \alpha  \epsilon -\delta )-\frac{3 \alpha  \delta }{4}+\frac{1}{2} \alpha  \zeta  \epsilon ^{3/2}-\frac{\beta ^2}{2}+\frac{1}{2} \beta  \delta  \epsilon ^{3/2}-\frac{c \epsilon }{4}+\frac{1}{4} \omega_{10} \epsilon ^{5/2}\right) \\
      & +(\mathcal{N} + \eta)^2 \left(\frac{\beta  (2 \alpha  \epsilon -\delta )}{4 \sqrt{\epsilon }}-\frac{3 \alpha  \zeta }{4 \sqrt{\epsilon }}-\frac{\beta  \delta }{2 \sqrt{\epsilon }}+\frac{\beta  \zeta  \epsilon }{2}+\frac{\delta ^2 \epsilon }{4}-\frac{d \sqrt{\epsilon }}{4}+\frac{\omega_{11} \epsilon ^2}{4}\right) \\
      &+(\mathcal{N} + \eta) \left(\frac{\delta  (2 \alpha  \epsilon -\delta )}{4 \epsilon }-\frac{\beta  \zeta }{2 \epsilon }+\frac{1}{2} \delta  \zeta  \sqrt{\epsilon }-\frac{e}{4}+\frac{1}{4} \omega_{12} \epsilon ^{3/2}\right)+\frac{\zeta ^2}{4}+\frac{\zeta  (2 \alpha  \epsilon -\delta )}{4 \epsilon ^{3/2}} .
\end{align*} 
The structure function $\Phi(\mathcal{N})$ is a polynomial of $\mathcal{N}$ of degree 6.
\end{proposition}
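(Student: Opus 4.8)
The plan is to show that the map (\ref{eq:requ}) intertwines the quintic relations (\ref{eq:b}) with the deformed oscillator relations (\ref{eq:alg}) and then to read off $\Phi$ from the resulting consistency conditions. Write $\hat{Y}_1=\sqrt{\epsilon}\,(\mathcal{N}+\eta)$ and $\hat{Y}_2=P(\mathcal{N})+b^\dagger+b$ with $P(\mathcal{N})=-\alpha\sqrt{\epsilon}(\mathcal{N}+\eta)^3-\beta(\mathcal{N}+\eta)^2-\frac{\delta}{\sqrt{\epsilon}}(\mathcal{N}+\eta)-\frac{\zeta}{\epsilon}$. The first relation is immediate: since $\hat{Y}_1$ is a function of $\mathcal{N}$ and $[\mathcal{N},b^\dagger]=b^\dagger$, $[\mathcal{N},b]=-b$, one gets $\hat{G}=[\hat{Y}_1,\hat{Y}_2]=\sqrt{\epsilon}(b^\dagger-b)$. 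First I would check the second relation: a direct computation gives $[\hat{Y}_1,\hat{G}]=\epsilon(b^\dagger+b)$, while the prescribed right-hand side equals $\alpha\hat{Y}_1^3+\beta\hat{Y}_1^2+\delta\hat{Y}_1+\zeta+\epsilon P(\mathcal{N})+\epsilon(b^\dagger+b)$. The diagonal polynomial $P(\mathcal{N})$ was chosen precisely so that $\alpha\hat{Y}_1^3+\beta\hat{Y}_1^2+\delta\hat{Y}_1+\zeta+\epsilon P(\mathcal{N})\equiv 0$, whence the second relation holds identically.

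The substance is the third relation. Using $f(\mathcal{N})b^\dagger=b^\dagger f(\mathcal{N}+1)$, $f(\mathcal{N})b=bf(\mathcal{N}-1)$ together with $b^\dagger b=\Phi(\mathcal{N})$, $bb^\dagger=\Phi(\mathcal{N}+1)$, I would bring $[\hat{Y}_2,\hat{G}]$ into normal-ordered form,
\[
[\hat{Y}_2,\hat{G}] = \sqrt{\epsilon}\,\big(P(\mathcal{N})-P(\mathcal{N}-1)\big)\,b^\dagger + \sqrt{\epsilon}\,\big(P(\mathcal{N}+1)-P(\mathcal{N})\big)\,b + 2\sqrt{\epsilon}\,\big(\Phi(\mathcal{N}+1)-\Phi(\mathcal{N})\big),
\]
and then expand the right-hand side of (\ref{eq:b}) in the same realization, splitting each anticommutator $\{\hat{Y}_1^k,\hat{Y}_2\}$ into its diagonal piece $2\hat{Y}_1^k P(\mathcal{N})$ and its ladder piece (legitimate because $\hat{Y}_1$ is diagonal). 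Matching the coefficients of $b^\dagger$ and of $b$ produces polynomial identities in $(\mathcal{N}+\eta)$ that I expect to hold automatically by the chosen form of $P(\mathcal{N})$; I would confirm this by comparing the resulting quadratic polynomials in $(\mathcal{N}+\eta)$ on the two sides. What then remains is the diagonal part, which yields a first-order difference equation $2\sqrt{\epsilon}\,(\Phi(\mathcal{N}+1)-\Phi(\mathcal{N}))=R(\mathcal{N})$ with $R$ a polynomial of degree $5$.

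To pin down $\Phi$ itself I would substitute the realization into the Casimir $C_{(5)}$ of the preceding proposition. Since $C_{(5)}$ is central it must collapse to a scalar: the $(b^\dagger)^2$ and $b^2$ contributions of $\hat{G}^2$ and of $-\epsilon\hat{Y}_2^2$ cancel, as do all single-ladder terms (a convenient internal consistency check), leaving $C_{(5)}=-2\epsilon\,(\Phi(\mathcal{N})+\Phi(\mathcal{N}+1))+S(\mathcal{N})$ with $S$ a polynomial of degree $6$. Treating this sum relation together with the difference equation above as a $2\times 2$ linear system solves $\Phi(\mathcal{N})$ uniquely as a degree-$6$ polynomial, giving the stated formula, with the $\tfrac{1}{4\epsilon}C_{(5)}$ term arising from this elimination; uniqueness is guaranteed because $\Phi\mapsto\Phi(\mathcal{N})+\Phi(\mathcal{N}+1)$ is injective on polynomials, so the difference equation then serves as an additional consistency check.

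The hard part will be entirely computational rather than structural: expanding the degree-$5$ and degree-$6$ anticommutators and $\hat{Y}_2^2$ in normal-ordered form, tracking every coefficient $a,b,c,d,e,z$ and every coefficient entering $C_{(5)}$, and verifying that the ladder contributions cancel exactly so that only the claimed diagonal polynomial survives. This bookkeeping is heavy and best carried out with symbolic computation, but each step is forced by the relations, so I anticipate no genuine obstruction beyond the length of the algebra.
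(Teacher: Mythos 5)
Your plan is correct and is essentially the paper's own argument, which is never written out (the paper only says ``after long computations''): relations one and two hold identically because the diagonal part of $\hat{Y}_2$ is chosen to cancel $\alpha\hat{Y}_1^3+\beta\hat{Y}_1^2+\delta\hat{Y}_1+\zeta$, the ladder coefficients in the third relation match automatically, leaving a first-order difference equation for $\Phi$, and the realized Casimir --- whose double- and single-ladder parts cancel --- supplies the complementary sum equation that fixes $\Phi$ uniquely, which is exactly why $C_{(5)}$ appears explicitly in the stated structure function. One warning for when you do the bookkeeping: the elimination yields the Casimir term as $-\tfrac{1}{4\epsilon}\,C_{(5)}$ and the coefficient of $(\mathcal{N}+\eta)^6$ as $\tfrac{a\epsilon^2}{12}+\tfrac{\alpha^2\epsilon}{4}$, so some signs and powers of $\epsilon$ in the proposition's printed formula appear to need correction --- a defect of the statement as printed, not of your method.
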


In the next subsection, we will present new superintegrable systems in 2D Darboux spaces with cubic symmetry algebras.
 

\subsection{Superintegrable systems in 2D Darboux spaces with cubic symmetry algebras }
\label{2}

In this subsection we obtain potentials in the 2D Darboux spaces which can be added to the Hamiltonians of the free superintegrable systems studied in $\cite{MR3988021}$ and preserve their superintegrability.  The free systems have only kinetic terms and possess linear and quadratic integrals of motion. We will determine the integrals corresponding to the superintegrable systems with potnetials.

\subsubsection{Darboux space I}

The Hamiltonian of the free system in Darboux space I with separable local coordinates $(x,y)$ studied in $\cite{MR3988021}$ has the form $\mathcal{H}_1=\varphi_1 (x) (\partial_x^2 + \partial_y^2)$, where $\varphi_1 (x)= \frac{1}{\alpha x + \beta}$. This system has linear integral $X_1=\partial_y$ and quadratic integral given by
$$X_2=y \partial_x \partial_y - x \partial_y^2 + \frac{1}{2}\partial_x - \frac{1}{4}\alpha y^2 \varphi_1(x) (\partial_x^2 + \partial_y^2), $$
where $\alpha$ is a constant.

We seek new superintegrable system in Darboux space I with Hamiltonian
$$ \hat{\mathcal{H}}_1=\mathcal{H}_1+V_1(x,y),$$
where $V_1(x,y)$ is potential function,  which preserves the separability of the coordinates and the superintegrability of the original system. Without loss of generality, we assume that the local separable coordinates $(x,y)$ is an orthogonal system. After some computations, we find the allowed potential $V_1$ and the corresponding integrals $\hat{X}_1, \hat{X}_2$. The results are as follows.
\begin{align*}
   &\hat{\mathcal{H}}_1  = \varphi_1(x) (\partial_x^2 + \partial_y^2) + c_1 \varphi_1(x), \\
   &\hat{X}_1  = \partial_y, \qquad \text{ } \hat{X}_2 = y \partial_x \partial_y - x \partial_y^2 + \frac{1}{2}\partial_x - \frac{1}{4}\alpha y^2 \varphi_1(x) (\partial_x^2 + \partial_y^2)  - \frac{1}{4}c_1 \alpha \varphi_1(x)y^2
\end{align*}
where $c_1$ is a constant.

By a direct calculation, we can show that the integrals $\hat{X}_1,\,\hat{X}_2$ form the cubic algebra,
\begin{align}
  [\hat{X}_1,\hat{X}_2] = \hat{F}, \quad\text{ } [\hat{X}_1,\hat{F}] = \frac{\alpha}{2} \hat{\mathcal{H}}_1, \quad\text{ }  [\hat{X}_2,\hat{F}] =   -2 X_1^3 +  \alpha \hat{\mathcal{H}}_1 X_1  - c_1 X_1, \label{eq:h1}
\end{align}  
where explicitly $ \hat{F} = \partial_x \partial_y - \frac{1}{2} \alpha y \varphi_1(x) \left( \partial_x^2 + \partial_y^2 \right)  + \frac{1}{2} c_1 \alpha \varphi_1(x) y $. This cubic algebra is a special case of $\eqref{eq:a}$ in Proposition $\ref{3.1}$ with 
\begin{align*}
  v_1 = -2, \quad\text{ } u_1=u_2 = u_3 = v_2 = v=0, \quad\text{ } u =  \frac{\alpha}{2}\hat{\mathcal{H}}_1, \quad\text{ } v_3 =  \beta \hat{\mathcal{H}}_1-c_1.
\end{align*} 
Then it follows that its Casimir operator is 
$$C_{(3)} = \hat{F}^2 - X_1^4 - \alpha \hat{\mathcal{H}}_1 \hat{X}_2 +(\beta \hat{\mathcal{H}}_1-c_1) X_1^2. $$ 
Since $u_3 = 0$ it follows from Proposition $\ref{3.3}$ that this cubic algebra does not have realization in terms of the deformed oscillator algebra.


\subsubsection{Darboux space II}

The Hamiltonian of the free superintegrable system in 2D Darboux space II is 
$\mathcal{H}_2=\varphi_2(x) \left(\partial_x^2+\partial_y^2\right)$, where $\varphi_2(x)=\frac{x^2}{a_2-a_1x^2},\; a_1,\,a_2\in \mathbb{R}$. The system possesses the following linear and quadratic integrals of motion,
$$X_1=\partial_y,\quad X_2=2xy\partial_x\partial_y+(y^2-x^2)\partial_y^2+ x\partial_x + y\partial_y + a_1y^2 \mathcal{H}_2.$$

It can be shown that we can add the potential $V_2(x,y)=c_2\,\varphi_2(x)$, where $c_2$ is a real constant, to the free Hamiltonian such that
\begin{align*}
    \hat{\mathcal{H}}_2 = \varphi_2(x) \left(\partial_x^2+\partial_y^2\right)+c_2\,\varphi_2(x)
\end{align*}
is separable and superintegrable in the 2D Darboux space II, with integrals of motion given by
\begin{align*}
&\hat{X}_1 = \partial_y,\\
&\hat{X}_2 = 2 xy \partial_x \partial_y + (y^2 - x^2+1) \partial_y^2 + x\partial_x + y\partial_y + a_1y^2 \mathcal{H}_2 + \frac{a_2 c_2 y^2}{a_2 - a_1x^2}.
\end{align*}

By a direct computation, we find that these integrals obey the cubic commutation relations
\begin{align}
 & [\hat{X}_1,\hat{X}_2] = \hat{F} , \qquad \text{ }  [\hat{X}_1,\hat{F}] =  2 a_1 \hat{\mathcal{H}}_2 + 2 \hat{X}_1^2 + 2 c_2 ,\nonumber\\
 & [\hat{X}_2,\hat{F}] =4\hat{X}_1^3-2\{\hat{X}_1,\hat{X}_2\}+ (2c_2+1-2a_2 \hat{\mathcal{H}}_2) X_1.  \label{eq:h2}
 \end{align} 
The Casimir operator of this cubic algeba is given by
 $$C_{(3)} = \hat{F}^2 - 2\{X_1^2,\hat{X}_2\}_a +2 \hat{X}_1^4+ (c_2+5 - 2 a_2 \hat{\mathcal{H}}_2) X_1^2 - 4(a_1 \hat{\mathcal{H}}_2 +c_2 )\hat{X}_2. $$ 
By Proposition $\ref{3.1}$ and Proposition $\ref{3.3}$, we again find that the cubic algebra has no realization in terms of the deformed oscillator algebra. 


\subsubsection{Darboux space III}

In the 2D Darboux space III, the free superintegrable system Hamiltonian and its constants of motion in the separable local coordinates $(u,v)$ are given by \begin{align*}
   &\mathcal{H}_3  = \varphi_3(v) (\partial_u^2 + \partial_v^2),\\
   &X_1 =  \partial_u , \quad \text{ }  X_2   = \frac{1}{2} e^{-v} \left(  \cos u(2 \partial_u^2 + \partial_v) +  \sin u\, (2 \partial_u \partial_v - \partial_u)\right) + \alpha \cos u \mathcal{H}_3 ,
\end{align*} 
where $\varphi_3(v) =  \frac{e^{-v}}{\beta e^v - 2\alpha}$ with $\alpha,\,\beta$ being real constants.

We seek potential of the form $V_3(u,v) = \varphi_3(v) (f_3 (u) + g_3(v))$  such that system in Darboux space III with this potential is superintegrable. We thus expect that $\hat{\mathcal{H}}_3 =\mathcal{H}_3+V_3(u,v)$ possesses linear and quadratic integrals of the form, $\hat{X}_1=X_1,\; \hat{X}_2=X_2+f_3(u,v)$. After some manipulations, we find that $V_3(u,v) = c_3\,\varphi_3(v)$ and $ f_3(u,v) = \frac{ c_3\, \beta e^v \cos (u)}{2 \beta  e^v-4 \alpha } $, where $c_3$ is a constant. That is, we obtain the superintegrable system in Darboux space III with Hamiltonian and integrals given by
\begin{align*}
&\hat{\mathcal{H}}_3=\varphi_3(v) (\partial_u^2 + \partial_v^2) +\frac{c_3}{\beta e^v - 2\alpha},\\
&\hat{X}_1=\partial_u , \\
&\hat{X}_2   = \frac{1}{2} \exp(-v) \left(  \cos u(2 \partial_u^2 + \partial_v) +  \sin u (2 \partial_u \partial_v - \partial_u)\right) + \alpha \cos u \mathcal{H}_3+\frac{ c_3\, \beta e^v \cos (u)}{2 \beta  e^v-4 \alpha }.
\end{align*}
These integrals form the following algebra,
\begin{align}
      [\hat{X}_1,\hat{X}_2] =  \hat{F}, \quad\text{ }   [\hat{X}_1,\hat{F}] =  -\hat{X}_2, \quad \text{ }  [\hat{X}_2,\hat{F}] =  -\beta \hat{\mathcal{H}}_3 \hat{X}_1.\label{eq:h3}
\end{align} 
The Casimir operator of this algebra is given by $ C_{(3)}  = \hat{F}^2    - \beta \hat{\mathcal{H}}_3 X_1^2   + \hat{X}_2^2 $. It is interesting that the algebra generated by the above linear and quadratic integrals in the Darboux space III is ``linear" in the generators (though with coefficient involving the Hamiltonian $\hat{\mathcal{H}}_3$).

\subsubsection{Darboux space IV}

In terms of separable local coordinates $(u,v)$, the Hamiltonian of the free superintegrable system in 2D Darboux space IV is 
$\mathcal{H}_4=\varphi_4(u) \left(\partial_u^2+\partial_v^2\right)$, where $\varphi_4(u)=\frac{\sin^2u}{\beta - 2 \alpha \cos u}\; \alpha,\,\beta\in \mathbb{R}$. The system possesses the following linear and quadratic integrals of motion,
$$X_1=\partial_v,\quad X_2=\frac{ \exp(v)}{2} \left( \cos u (2 \partial_v^2 - \partial_v) -\sin u(2 \partial_u\partial_v - \partial_v) - 2 \alpha \mathcal{H}_4\right).$$

By analysis similar to previous cases, we find that the system with the Hamiltonian
\begin{align*}
    \hat{\mathcal{H}}_4=\varphi_4(u) \left(\partial_u^2+\partial_v^2\right)+\frac{c_4 }{\beta - 2 \alpha \cos u},
\end{align*} 
where $c_4$ is a constant, is superintegrable with linear and quadratic integrals given by
$$\hat{X}_1=\partial_v,\quad \text{ } \hat{X}_2=\frac{\exp(v)}{2} \left( \cos u (2 \partial_v^2 - \partial_v) -\sin u(2 \partial_u\partial_v - \partial_v) - 2 \alpha \mathcal{H}_4\right)+ \frac{4 c_4 e^{-v}}{\beta - 2 \alpha \cos u}. $$
These integrals form the cubic algebra,
\begin{align}
    [X_1,\hat{X}_2] = \hat{F} , \quad\text{ } [X_1,\hat{F}] = -  \hat{X}_2, \quad\text{ } [\hat{X}_2,\hat{F}] =  4 X_1^3 -2 \beta  \hat{\mathcal{H}}_4 X_1 + \frac{1}{2} \hat{X}_1 - 2 \alpha  c_4 X_1. \label{eq:44}
\end{align}  
The Casimir operator of the algebra is $$C_{(3)} =   - \frac{1}{2} \{\hat{X}_2,\hat{F}\}_a  + \beta \hat{\mathcal{H}}_4 X_1^2 + X_1^4 + (5 + 4 c_4) X_1^2 , $$ which can be expressed as $ C_{(3)} = \hat{\mathcal{H}}_4^2 + \beta \hat{\mathcal{H}}_4 + 4 c_4 $ in terms of the Hamiltonian $ \hat{\mathcal{H}}_4$.  Through the change of basis,
\begin{align*}
    X_1   =   (\mathcal{N}  + \eta), \quad X_2 = -(\mathcal{N}  + \eta)^2 + b^\dagger + b  
\end{align*} the cubic algebra relations become those of the deformed oscillator algebra with structure function \begin{align*}
    \Phi(\mathcal{N},\eta) = (\mathcal{N}+\eta)^4-4 ( +\mathcal{N}+\eta)^3+(\mathcal{N}+\eta)^2-(\mathcal{N}+\eta) \left(-2 \alpha  c_4-2 \beta  E+\frac{1}{2}\right)-4 c_4-E^2-\beta  E.
\end{align*}
Here $\eta$ is a constant which can be determined from the constraints on the structure function.

\section{Conclusions}

We have presented a genuine algebraic analysis for the superintegrable systems in 2D Darboux spaces. The main results in this paper are following. 

The first main result is the construction of the Casimir operators, deformed oscillator algebra realizations and finite-dimensional unirreps for all the $12$ distinct quadratic algebras underlying the 12 superintegrable systems found in the classification of $\cite{MR2023556}$$\cite{MR1878980}$. This allows us to give an algebraic derivation for the energy spectrum of the 12 existing classes of superintegrable systems with quadratic integrals in the 2D Darboux spaces and the determination for the structure functions of the finite-dimensional unitary irreducible representations of the deformed oscillator algebras (corresponding to the quadratic algebras).  As our results demonstrate, superintegrable systems in curved (Darboux) spaces have much richer structures than those in flat spaces. For instance, the structures of energies of the systems and structure functions of the  associated deformed oscillator algebras can be very complicated in the Darboux spaces, and in some cases we have to restrict the model parameter spaces in order to find explicit analytic and closed form solutions. 
 
Another main result of the paper is the construction of generic cubic and quintic algebras, generated by first, quadratic and cubic integrals, their Casimir operators and deformed oscillator algebra realizations. As examples of applications, we obtain four classes of new superintegrable systems with non-trivial potentials and with linear and quadratic integrals in the 2D Darboux spaces, three of which have cubic algebras as their symmetry algebras. 
 
\section*{Acknowledgement}

IM and YZZ were supported by Australian Research Council Future Fellowship FT180100099 and Discovery Project DP190101529, respectively.

\bibliographystyle{unsrt}   
\bibliography{bibliography.bib}

\end{document}